\theoremstyle{plain}
\newtheorem{theorem}{Theorem}[section]
\newtheorem{proposition}[theorem]{Proposition}
\newtheorem{corollary}[theorem]{Corollary}
\theoremstyle{definition}
\newtheorem{definition}[theorem]{Definition}
\theoremstyle{remark}
\newtheorem{remark}[theorem]{Remark}
\renewcommand{\P}{\mathcal{P}}
\newcommand{\R}{\mathbb{R}}
\renewcommand{\S}{\mathcal{S}}
\newcommand{\N}{\mathcal{N}}
\renewcommand{\d}{\mathrm{d}}
\newcommand{\new}[1]{\textcolor{black}{#1}}
\DeclareMathOperator{\tr}{tr}
\DeclareMathOperator{\Exp}{Exp}
\DeclareMathOperator{\Log}{Log}
\DeclareMathOperator{\Vect}{Vect}
\DeclareMathOperator{\WG}{WG}
\DeclareMathOperator{\WEC}{WEC}
\DeclareMathOperator{\EC}{EC}
\DeclareMathOperator{\Vlog}{VLog}
\newcommand{\bigodotp}{%
  \mathop{\mathlarger{\mathlarger{~ \bigodot}}{}_{p}}
}
\icmltitlerunning{Wrapped Gaussian on the manifold of Symmetric Positive Definite Matrices}
\begin{document}
\twocolumn[
\icmltitle{Wrapped Gaussian on the manifold of Symmetric Positive Definite Matrices}



\icmlsetsymbol{equal}{*}

\begin{icmlauthorlist}
\icmlauthor{Thibault de Surrel}{dauphine}
\icmlauthor{Fabien Lotte}{inria}
\icmlauthor{Sylvain Chevallier}{lisn}
\icmlauthor{Florian Yger}{insa}

\end{icmlauthorlist}

\icmlaffiliation{dauphine}{LAMSADE, CNRS, PSL Univ. Paris-Dauphine, France}
\icmlaffiliation{inria}{Inria center at the University of Bordeaux / LaBRI, France}
\icmlaffiliation{insa}{LITIS, INSA Rouen-Normandy, France}
\icmlaffiliation{lisn}{TAU, LISN, University Paris-Saclay, France}

\icmlcorrespondingauthor{Thibault de Surrel}{thibault.de-surrel@lamsade.dauphine.fr}

\icmlkeywords{Gaussian distribution, Wrapped distributions, Symmetric positive definite matrices, estimation, classification, Riemannian geometry, density estimation}

\vskip 0.3in
]
\printAffiliationsAndNotice{}
\begin{abstract}

Circular and non-flat data distributions are prevalent across diverse domains of data science, yet their specific geometric structures often remain underutilized in machine learning frameworks.
A principled approach to accounting for the underlying geometry of such data is pivotal, particularly when extending statistical models, like the pervasive Gaussian distribution.
In this work, we tackle those issue by focusing on the manifold of symmetric positive definite (SPD) matrices, a key focus in information geometry.
We introduce a non-isotropic wrapped Gaussian by leveraging the exponential map, we derive theoretical properties of this distribution and propose a maximum likelihood framework for parameter estimation. Furthermore, we reinterpret established classifiers on SPD through a probabilistic lens and introduce new classifiers based on the wrapped Gaussian model.
Experiments on synthetic and real-world datasets demonstrate the robustness and flexibility of this geometry-aware distribution, underscoring its potential to advance manifold-based data analysis.
This work lays the groundwork for extending classical machine learning and statistical methods to more complex and structured data. 


\end{abstract}
\section{Introduction}
When dealing with complex data, modeling them as lying on a manifold often provides a powerful solution \cite{sanborn2024beyond, jo2024riemannian}. However, classical Euclidean probability distributions fail to capture the intrinsic geometry of the underlying manifold. This limitation necessitates adapting the choice of probability distributions to respect the manifold’s structure. In this work, we propose a solution by leveraging the exponential map to wrap probability distributions defined in the Euclidean tangent space onto the manifold itself. This approach yields a \emph{wrapped distribution}, intrinsically aligned with the manifold’s geometry. Wrapped distributions have found applications across many domains, such as embedding single-cell RNA data \cite{dingDeepGenerativeModel2021}, analyzing wave patterns \cite{jona-lasinioSpatialAnalysisWave2012}, recognizing video and image features \cite{turagaStatisticalComputationsGrassmann2011}, modeling Gaussian processes on manifolds \cite{mallastoWrappedGaussianProcess2018, liuWrappedGaussianProcess2024} or doing statistics on measurements of orientations \cite{lopezcustodio2024cheatsheetprobabilitydistributions}. 

In this paper, we focus on wrapping Gaussian distributions on the Riemannian manifold of \emph{Symmetric Positive Definite} (SPD) matrices \new{equipped with the \emph{Affine Invariant Riemannian Metric} \cite{pennecManifoldvaluedImageProcessing2020}}. Gaussian distributions are a cornerstone of machine learning and statistics (see p.102 of \citealt{CaseBerg}) due to their ubiquity and theoretical underpinnings, such as the Central Limit Theorem (CLT) (Section 5.4 of \citealt{wassermanAllStatisticsConcise2004}), which ensures that Gaussian distributions naturally arise in many scenarios. We will see that we are able to extend the CLT to wrapped Gaussians, providing a theoretical justification for their study. SPD matrices, which form the Riemannian manifold $\P_d$, are pivotal in numerous applications, including \emph{Diffusion Tensor Imaging} \cite{pennecManifoldvaluedImageProcessing2020}, \emph{Brain-Computer Interfaces} (BCI) \cite{lotteReviewClassificationAlgorithms2018}, \emph{process control} \cite{willjuiceiruthayarajanCovarianceMatrixAdaptation2010}, and \emph{video processing} \cite{Tuzel2008}. The inherent Riemannian geometry of $\P_d$ necessitates adopting methods that respect its manifold structure when analyzing SPD data.

This paper is organized as follows: in \cref{sec:SPD_matrices}, we introduce the Riemannian geometry of $\P_d$. Next, we define wrapped Gaussians on $\P_d$ and explore their theoretical properties in \cref{sec:wrapped_gaussains}. 
In \cref{sec:estimation}, we develop a Maximum Likelihood Estimator for parameter estimation and validate it with synthetic experiments. Building on this foundation, \cref{sec:classification} revisits existing classifiers on $\P_d$ through a probabilistic lens and introduces novel classifiers based on wrapped Gaussians. Finally, we evaluate these classifiers with synthetic and real-world datasets in \cref{sec:classif_expe}.
\section{Related works}

Other works have already tried to extend the Gaussian distribution to a Riemannian manifold. \citet{saidGaussianDistributionsRiemannian2018} propose an isotropic Gaussian on a Riemannian Symmetric Space $\mathcal{M}$ defined using a center of mass $\bar{y} \in \mathcal{M}$ and a scaling factor $\sigma > 0$. 
In our work, we are looking for a more complex model, requiring a non-isotropic distribution with some preferred directions.  
A non-isotropic Gaussian distribution on a manifold has been proposed in \citet{pennecIntrinsicStatisticsRiemannian2006}, in which the authors use the characterization of the Gaussian distribution as the distribution that maximizes entropy given a mean and a covariance matrix (theorem 13.2.2 of \citealt{kagan1973characterization}). However, sampling this distribution leads to computational difficulties, the normalizing constant cannot be computed explicitly and in the case of a full covariance matrix, the estimator of the parameters becomes problematic.  

Wrapped distributions have first been studied in directional statistics \cite{mardiaDirectionalStatistics2000}, on a circle \cite{collettDiscriminatingMisesWrapped1981} or on a sphere \cite{haubergDirectionalStatisticsSpherical2018}. Wrapped Gaussians have also been instantiated on hyperbolic spaces, first by \citet{naganoWrappedNormalDistribution2019} and then by \citet{mathieuContinuousHierarchicalRepresentations2019} and \citet{choRotatedHyperbolicWrapped2022}. They mainly use it as the distribution of the latent space of a Variational Autoencoder which is trained to learn the distribution. Apart from the manifold, another difference with our approach is that they wrap the Gaussian using a composition of the exponential map with parallel transport where we will only use the exponential map.
Wrapped distributions have also been studied on more general classes of Riemannian manifolds. For example, \citet{galaz-garciaWrappedDistributionsHomogeneous2022} define wrapped distributions on homogeneous Riemannian manifolds. A major difference with our work is that they use a volume preserving map to push-forward the density from the tangent space to the manifold, leading to a simpler expression of the density, without any volume change term. In \citet{chevallierExponentialWrappedDistributionsSymmetric2022} and \citet{chevallierWrappedStatisticalModels2020}, the authors work on general symmetric spaces and mainly study the Jacobian determinant of the exponential map, first in a general setting and then on different examples (Grassmannians, pseudohyperboloids and special Euclidean group). Unlike our work, they consider the distribution on the tangent space to always be centered, where we consider a more general setting by allowing $\mu \neq 0$. To estimate the parameters of their wrapped Gaussians, they use moment estimation.
Finally, in \citet{troshin2023wrapped}, they define a more general wrapped Gaussian, the $\beta-$Gaussian that has a compact support. 

In the following, we propose a  wrapped Gaussian on the manifold of SPD matrices that is not centered on the tangent space. After deriving some theoretical properties, we show that our wrapped Gaussian can be used in practice, showcasing the estimation of the parameters from a finite number of samples. Finally, we use our wrapped Gaussian to build a framework that unifies and generalizes classification on SPD matrices, and propose new classifiers. This application shows the potential of our wrapped Gaussian to become a generic, flexible and powerful tool for manifold-based data analysis.
\section{How to deal with SPD matrices ?}
\label{sec:SPD_matrices}
\subsection{The Riemannian geometry of SPD matrices}
\label{subsec:riem_geom_SPD}
The set of $d \times d$ \emph{symmetric, positive definite} (SPD) matrices, denoted $\P_d$ is defined as follows:
\begin{equation*}
    \resizebox{\linewidth}{!}{%
        $\P_d = \{p \in \R^{d \times d} \mid p^\top = p \text{ and } \forall x \in \R^d \setminus \{0\}, x^\top p x > 0 \}.$
    }%
\end{equation*}
This set is convex and open in the set of $d\times d$ symmetric matrices $\S_d$ and thus, it is a \emph{manifold} of dimension $\nicefrac{d(d+1)}{2}$. For all $p \in \P_d$, the tangent space $T_p\P_d$ at $p$ can be identified with $\S_d$. Moreover, for $p \in \P_d$ one can define an inner product on the tangent space $T_p \P_d$ at $p$ by: 
    \begin{equation}
        \label{def:AIRM_metric}
        \forall u,v \in T_p \P_d,~ \langle u, v \rangle_p = \tr(p^{-1}up^{-1}v).
    \end{equation}
This inner product is called the \emph{Affine Invariant Riemannian Metric} (AIRM) \cite{pennecManifoldvaluedImageProcessing2020} as, if $a$ is an invertible matrix, one has $\langle aua^\top, ava^\top \rangle_{apa^\top} = \langle u, v \rangle_p$. Once endowed with this metric, $\P_d$ is a complete connected Riemannian manifold of non-positive curvature (see Appendix I of \citealt{criscitielloAcceleratedFirstorderMethod2021}). It is therefore a \emph{Hadamard manifold} \cite{shigaHadamardManifolds1984}. Using the Cartan–Hadamard theorem (theorem 12.8 of \citealt{leeIntroductionRiemannianManifolds2018}) one has that $\P_d$ is diffeomorphic to $\R^{\nicefrac{d(d+1)}{2}}$ through the exponential map $\Exp_p \colon T_p \P_d \simeq \R^{\nicefrac{d(d+1)}{2}} \rightarrow \P_d$.
Another consequence of the completeness of $\P_d$ is that each pair of points $(p,q) \in \P_d^2$ can be connected by a unique minimizing geodesic whose length defines a distance on $\P_d$. This AIRM distance, between $p$ and $q$ is given by:
\begin{equation}
    \label{eq:distance_AIRM}
    \delta(p,q) = \|\log(p^{-1/2}qp^{-1/2})\|_F
\end{equation}
where $\log$ is the matrix logarithm and $\| \cdot \|_F$ the Frobenius norm. Other useful tools of Riemannian geometry that will be used in the following are the exponential map $\Exp_p \colon T_p \P_d  \rightarrow \P_d$ and its inverse, the logarithm map $\Log_p \colon \P_d \rightarrow T_p \P_d$. For $p,q \in \P_d$ and $u \in T_p \P_d$, those maps are given by the following expressions (see chapter 6 of \cite{bhatiaPositiveDefiniteMatrices2007}):
\begin{equation}
    \label{eq:exp_log_riem}
    \begin{aligned}
        \Exp_p(u) &= p^{1/2}\exp(p^{-1/2}up^{-1/2})p^{1/2}, \\
        \Log_p(q) &= p^{1/2}\log(p^{-1/2}qp^{-1/2})p^{1/2}.
    \end{aligned}
\end{equation}
Finally, when $\P_d$ is equipped with the AIRM metric given at \cref{def:AIRM_metric}, the Riemannian volume element at $p = [[p_{ij}]] \in \P_d$ is given by (see Section 4.1.3 of \citealt{terrasHarmonicAnalysisSymmetric1988}):
\begin{equation}
    \label{eq:volume}
    \mathrm{d} \text{vol}(p) = \det(p)^{-(d+1)/2} \prod_{i \leq j} \mathrm{d}p_{ij}
\end{equation}
where $\mathrm{d}p_{ij}$ is the Lebesgue measure on $\R$. The volume $\mathrm{d} \text{vol}$ is also invariant under congruence by invertible matrices.
\new{
\begin{remark}
    Other metrics could be used on $\P_d$, such as the \emph{Log-Euclidean metric} \cite{arsignyFastSimpleCalculus2005} or the \emph{S-divergence} metric \cite{sraPositiveDefiniteMatrices2015}. However, the AIRM metric is widely used in the literature and has many properties that make it suitable for our work (such as the affine invariance). For more details on the different metrics on $\P_d$, we refer the reader to \citet{chevallierReviewRiemannianDistances2021} and \citet{thanwerdasRiemannianStratifiedGeometries2022}. 
\end{remark}}
\subsection{The vectorization of the tangent spaces}
\label{subsec:vectorization}
As described in \cref{subsec:riem_geom_SPD}, the tangent space $T_p \P_d$ at point $p \in \P_d$ is identified with the space of $d\times d$ symmetric matrices $\S_d$ which is isomorphic to $\R^{\nicefrac{d(d+1)}{2}}$. We define such an isomorphism called the \emph{vectorization} as follows:
\begin{definition}[Vectorization]
    \label{def:vectorization}
    We start by defining the vectorization at identity for a symmetric matrix $u = [[u_{ij}]]$:
    \begin{equation*}
        \resizebox{\linewidth}{!}{%
        $\begin{aligned}
            \Vect_{I_d} \colon u \in T_{I_d} \P_d \mapsto &(u_{11}, \sqrt{2}u_{12}, u_{22}, \sqrt{2}u_{13}, \sqrt{2}u_{23}, u_{33} , \\
            &\dots, \sqrt{2}u_{d-1,d}, u_{dd}) \in \R^{\nicefrac{d(d+1)}{2}}
        \end{aligned}$
        }%
    \end{equation*}
    Then, for $p \in \P_d$, we define the vectorization at $p$:
    $$\Vect_{p} \colon u \in T_{p} \P_d \mapsto \Vect_{I_d}(p^{-1/2}up^{-1/2}).$$
\end{definition}
One of the important property of $\Vect_p$ is that it is an isometry between $(T_p \P_d, \langle \cdot, \cdot \rangle_p)$ and $(\R{\nicefrac{d(d+1)}{2}}, \langle \cdot, \cdot \rangle_2)$. More information on this vectorization can be found in Section 3.3.3.3. of \citet{pennecManifoldvaluedImageProcessing2020} or in \cref{appendix:vectorization}. 

\section{Wrapped Gaussian on the manifold of SPD matrices}

\label{sec:wrapped_gaussains}
\begin{figure}
    \centering
    \includegraphics[width=0.7\linewidth]{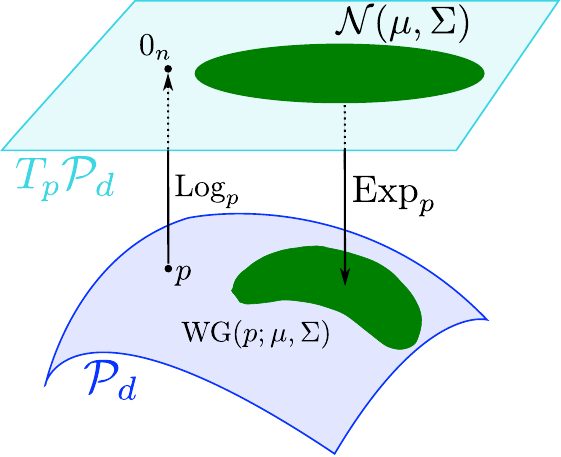}
    \caption{An illustration of a wrapped Gaussian $\WG(p;\mu,\Sigma)$.}%
    \label{fig:wrapped_gaussian}
\end{figure}
In this work, our objective is to define a non-isotropic Gaussian on the manifold of SDP matrices $\P_d$. In this section, we define a \emph{wrapped Gaussian} through the way it will be sampled. We will also give the density of a wrapped Gaussian and, we will show that, unlike usual probability distributions, a given wrapped Gaussian can be parametrized by different sets of parameters, leading to an equivalence relation. 
In the following, we will denote $\Theta = \P_d \times \R^{\nicefrac{d(d+1)}{2}} \times \P_{\nicefrac{d(d+1)}{2}}$ the space of parameters of the wrapped Gaussian which is a product manifold.

\subsection{The definition}
\new{\begin{algorithm}[tb]
  \caption{Sampling from a Wrapped Gaussian $\WG(p; \mu, \Sigma)$}
  \label{alg:wrapped_gaussian_sampling}
  \begin{algorithmic}[1]
    \REQUIRE $p \in \P_d$, $\mu \in \mathbb{R}^{\nicefrac{d(d+1)}{2}}$, $\Sigma \in \P_{\nicefrac{d(d+1)}{2}}$
    \STATE Sample $\mathbf{t} \sim \mathcal{N}(\mu, \Sigma)$
    \STATE Compute $X \gets \Exp_p(\Vect_p^{-1}(\mathbf{t}))$ 
    \STATE Return $X \sim \WG(p; \mu, \Sigma)$
  \end{algorithmic}
\end{algorithm}}
To define a \emph{wrapped Gaussian}, we start with a classical Euclidean Gaussian random variable in $\R^{\nicefrac{d(d+1)}{2}}$ and push it on the manifold $\P_d$ using the exponential map as follows: 
\begin{definition}[Wrapped Gaussian]
    \label{def:wrapped_gaussian}
    Let $p \in \P_d$, $\mu \in \R^{\nicefrac{d(d+1)}{2}}$ and $\Sigma \in \P_{\nicefrac{d(d+1)}{2}}$. A random variable $X$ on $\P_d$ follows a wrapped Gaussian denoted $\WG(p; \mu, \Sigma)$ if 
    $$X = \Exp_p(\Vect_p^{-1}(\mathbf{t})),~ \mathbf{t} \sim \N(\mu, \Sigma).$$
\end{definition}
A wrapped Gaussian is illustrated in \cref{fig:wrapped_gaussian}. This gives us a very simple algorithm to sample a wrapped Gaussian $\WG(p; \mu, \Sigma)$, since it is simply based on the sampling of $\mathcal{N}(\mu, \Sigma)$ in $T_p \P_d$ \new{and the computation of $\Vect_p^{-1}$ and $\Exp_p$ that has closed-form formulas}.
\new{We provide a sampling algorithm at \cref{alg:wrapped_gaussian_sampling}.} Moreover, we can rewrite this definition using a \emph{push-forward} (see definition 2.1 of \citet{peyreComputationalOptimalTransport2020}, Section 3.6 of \citet{bogachevMeasureTheory2007} or \cref{appendix:push_forward}):
\begin{equation}
    \label{eq:push-forward}
    \WG(p; \mu, \Sigma) = (\Exp_p \circ \Vect_p^{-1})\# \N(\mu, \Sigma)
\end{equation}
where $\#$ denotes the push-forward operator.

Let us comment on the different parameters: $p$ gives us a tangent plan from which the Gaussian is wrapped, so it locates the Gaussian on the manifold. $\mu$ is the mean of the Gaussian in the tangent space $T_p \P_d$. As $\P_d$ and $T_p \P_d$ are in bijection through the exponential map, $p$ and $\mu$ play a symmetric role that will be unveiled in \cref{sec:equiv_gaussian}. Finally, $\Sigma$ is the covariance matrix between the entries of the SPD matrices. In the special case where the SPD matrices are covariance matrices, $\Sigma$ models the covariance of the covariances, therefore, it can be seen as a fourth order moment.

\subsection{The density of a wrapped Gaussian}
Now that we have defined the wrapped Gaussian, we give its density using the push-forward definition of $\WG(p; \mu, \Sigma)$ given in \cref{eq:push-forward} (see \cref{theo:change_of_variable} in \cref{appendix:push_forward})
\begin{theorem}
    The density $f_{p;\mu,\Sigma}$ of the wrapped Gaussian $\WG(p; \mu, \Sigma)$ exists and is:
    \begin{equation}
        \label{eq:density}
        \begin{aligned}
            &\forall x \in \P_d,~f_{p;\mu,\Sigma}(x) = \frac{g_{\mu,\Sigma}(\Vect_p(\Log_p(x)))}{|J_p(\Log_p(x))|}
        \end{aligned}
    \end{equation}
    where $g_{\mu, \Sigma}$ is the density of the multivariate Gaussian $\mathcal{N}(\mu, \Sigma)$ and $J_p(\cdot ) = \det(\mathrm{d} \Exp_p(\cdot))$ is the Jacobian determinant of the exponential map $\Exp_p$.
    \label{theo:density}
\end{theorem}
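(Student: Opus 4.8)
The plan is to read off $f_{p;\mu,\Sigma}$ as the density, with respect to the Riemannian volume $\d\mathrm{vol}$ of \cref{eq:volume}, of the push-forward measure in \cref{eq:push-forward}, via the change-of-variables formula for push-forwards under a diffeomorphism (\cref{theo:change_of_variable}). The first step is to check that $\phi := \Exp_p \circ \Vect_p^{-1} \colon \R^{\nicefrac{d(d+1)}{2}} \to \P_d$ is a global diffeomorphism: $\Vect_p$ is a linear isomorphism (\cref{def:vectorization}), and $\Exp_p$ is a diffeomorphism with inverse $\Log_p$ because $\P_d$ is a Hadamard manifold (Cartan--Hadamard, see \cref{subsec:riem_geom_SPD}). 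In particular $\d\Exp_p(u)$ is invertible for every $u$, so $J_p(u) \neq 0$ and the quotient in \cref{eq:density} is well defined; the absolute value is needed because $\det(\d\Exp_p(u))$ is the determinant of a linear map between two (different) tangent spaces, hence defined only up to sign once orthonormal bases are chosen.

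Next I would decompose the volume change of $\phi$ along its two factors. Since $\Vect_p$ is an isometry between $(T_p\P_d, \langle\cdot,\cdot\rangle_p)$ and $(\R^{\nicefrac{d(d+1)}{2}}, \langle\cdot,\cdot\rangle_2)$, it maps the Lebesgue measure on $\R^{\nicefrac{d(d+1)}{2}}$ onto the flat volume measure $\d u$ of the Euclidean space $(T_p\P_d, \langle\cdot,\cdot\rangle_p)$, contributing no Jacobian factor. The remaining map $\Exp_p$ changes volume exactly by $|J_p|$: by definition of the Riemannian volume and of $J_p = \det(\d\Exp_p)$, one has $\d\mathrm{vol}(\Exp_p(u)) = |J_p(u)|\,\d u$.

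Finally, I would test against an arbitrary bounded measurable $h$ on $\P_d$. From \cref{eq:push-forward},
\begin{equation*}
    \int_{\P_d} h\,\d\WG(p;\mu,\Sigma) = \int_{\R^{\nicefrac{d(d+1)}{2}}} h(\phi(t))\,g_{\mu,\Sigma}(t)\,\d t ,
\end{equation*}
and the substitutions $t = \Vect_p(u)$ followed by $x = \Exp_p(u)$ (both legitimate, being diffeomorphisms), together with the volume identity above, turn the right-hand side into
\begin{equation*}
    \int_{\P_d} h(x)\,\frac{g_{\mu,\Sigma}(\Vect_p(\Log_p(x)))}{|J_p(\Log_p(x))|}\,\d\mathrm{vol}(x) .
\end{equation*}
As $h$ is arbitrary, this both establishes the existence of the density and gives \cref{eq:density}.

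The argument is essentially bookkeeping; the one point that genuinely needs care is tracking the reference measures — the density is taken with respect to $\d\mathrm{vol}$, not the coordinate measure $\prod_{i\le j}\d p_{ij}$ of \cref{eq:volume} (which would introduce an extra $\det(x)^{-(d+1)/2}$ factor) — and confirming that the isometry $\Vect_p$ carries a unit Jacobian, so that the whole volume distortion is due to $\Exp_p$. No closed form for $J_p$ is needed at this stage; it is derived separately.
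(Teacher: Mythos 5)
Your proposal is correct and follows exactly the route the paper intends: the paper itself only points at the push-forward definition in \cref{eq:push-forward} and the change-of-variables formula of \cref{theo:change_of_variable}, and your argument is a faithful, fully worked-out version of that, including the two points the paper leaves implicit (that $\Vect_p$ carries unit Jacobian because it is an isometry, and that the density is taken with respect to $\d\mathrm{vol}$ rather than the coordinate measure). Nothing to add.
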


The Jacobian determinant of the exponential map describes the volume change induced by the identification of $T_p\P_d$ and $\P_d$ by the exponential map $\Exp_p$. To compute the density explicitly, one needs to compute the Jacobian $J_p(u)$, which has a closed form expression for the manifold $\P_d$ (see Section 5.3 of \citet{chevallierWrappedStatisticalModels2020} or \cref{appendix:jac_exp}).
\begin{proposition}
    \label{prop:jac_exp}
    The Jacobian determinant of the exponential map at the identity $\Exp_{I_d}$ is:
    $$\forall u \in T_{I_d} \P_d,~ J_{I_d}(u) = 2^{\nicefrac{d(d-1)}{2}} \prod_{i < j} \frac{\sinh\left(\frac{\lambda_i(u) - \lambda_j(u)}{2}\right)}{\lambda_i(u) - \lambda_j(u)}$$
    where the $\lambda_i(u)$ are the eigenvalues of $u$.
    Then, one can use the previous formula to compute the Jacobian determinant of the exponential map at any point $p \in \P_d$:
    $$\forall u \in T_{p} \P_d,~ J_p(u) = J_{I_d}(p^{-1/2}up^{-1/2}).$$
\end{proposition}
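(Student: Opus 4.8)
The strategy is to exploit the affine invariance of the AIRM to reduce everything to the identity, then to diagonalize and invoke the classical formula for the differential of the matrix exponential, being careful to measure the Jacobian with respect to the Riemannian metrics rather than the flat Frobenius one.

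For the second identity, I would use that for any invertible $a$ the congruence $\phi_a\colon p\mapsto apa^\top$ is a Riemannian isometry of $\P_d$ (this is exactly the affine invariance recalled after \cref{def:AIRM_metric}), with $\mathrm{d}\phi_a|_p(u)=aua^\top$, and that it is $\Exp$-equivariant, $\Exp_{apa^\top}(aua^\top)=a\,\Exp_p(u)\,a^\top$, which one checks directly from \cref{eq:exp_log_riem}. Writing $\Exp_p = \phi_{p^{1/2}}\circ\Exp_{I_d}\circ\mathrm{d}\phi_{p^{-1/2}}|_p$ and using that isometries have unit Jacobian determinant, the chain rule gives $J_p(u)=J_{I_d}(p^{-1/2}up^{-1/2})$. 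Applying the very same argument with $a=Q^\top$ for an orthogonal $Q$ diagonalizing $u=Q\Lambda Q^\top$ (so that $\phi_{Q^\top}$ fixes $I_d$) further reduces the first identity to the case $u=\Lambda=\mathrm{diag}(\lambda_1,\dots,\lambda_d)$.

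In that case $\Exp_{I_d}(u)=\exp(u)$ is the matrix exponential, with $\mathrm{d}\exp_\Lambda(h)=\int_0^1 e^{s\Lambda}h\,e^{(1-s)\Lambda}\,\d s$, which acts entrywise as $(\mathrm{d}\exp_\Lambda(h))_{ij}=\phi(\lambda_i,\lambda_j)\,h_{ij}$ with $\phi(\lambda_i,\lambda_j)=\frac{e^{\lambda_i}-e^{\lambda_j}}{\lambda_i-\lambda_j}$ (extended by $\phi(\lambda_i,\lambda_i)=e^{\lambda_i}$, which also handles repeated eigenvalues by continuity). In the Frobenius-orthonormal basis of $\S_d$ made of the $E_{ii}$ and the $\tfrac1{\sqrt2}(E_{ij}+E_{ji})$, this map is diagonal with eigenvalues $\phi(\lambda_i,\lambda_j)$, $i\le j$. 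The key subtlety — and the main obstacle — is that $J_{I_d}$ is the Jacobian of $\mathrm{d}\Exp_{I_d}(\Lambda)$ \emph{as a map between the inner-product spaces} $(T_{I_d}\P_d,\langle\cdot,\cdot\rangle_{I_d})$ and $(T_{\exp\Lambda}\P_d,\langle\cdot,\cdot\rangle_{\exp\Lambda})$, not the naive Frobenius determinant; since $\langle v,w\rangle_{\exp\Lambda}=\sum_{k,l}e^{-\lambda_k-\lambda_l}v_{kl}w_{kl}$, rescaling each of the above basis vectors by $e^{(\lambda_i+\lambda_j)/2}$ produces an orthonormal basis of the target, so the matrix of $\mathrm{d}\Exp_{I_d}(\Lambda)$ in orthonormal bases is diagonal with entries $\phi(\lambda_i,\lambda_j)\,e^{-(\lambda_i+\lambda_j)/2}$. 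Multiplying over all $i\le j$, the $i=j$ terms equal $1$, and for $i<j$ the term is $\frac{e^{(\lambda_i-\lambda_j)/2}-e^{-(\lambda_i-\lambda_j)/2}}{\lambda_i-\lambda_j}=\frac{2\sinh((\lambda_i-\lambda_j)/2)}{\lambda_i-\lambda_j}$, which upon taking the product over the $d(d-1)/2$ pairs yields the stated formula. Finally one should note that $\Vect_p$ appearing in \cref{eq:density} is an isometry by \cref{def:vectorization}, so it introduces no extra Jacobian factor and the $J_p$ computed here is exactly the one relevant for the density.
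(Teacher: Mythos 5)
Your argument is correct and follows essentially the same route as the paper's proof in \cref{appendix:jac_exp}: reduce to a diagonal $u$ by congruence invariance, apply the entrywise formula for $\d\exp$ at a diagonal matrix (your integral representation is exactly the Daletskii--Krein formula used in the paper), and compute the determinant in AIRM-orthonormal bases of source and target --- your rescaled basis vectors $e^{(\lambda_i+\lambda_j)/2}E_{I_d,ij}$ are precisely the parallel-transported basis $E_{\exp(u),ij}=\exp(u/2)E_{I_d,ij}\exp(u/2)$ the paper uses. The only cosmetic difference is that you organize both reductions (general $u\to$ diagonal, and $J_p\to J_{I_d}$) through the isometries $\phi_a$ and their $\Exp$-equivariance, where the paper conjugates the basis explicitly; the computations are identical.
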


It should be noted that, unlike the wrapped Gaussians defined in \citet{galaz-garciaWrappedDistributionsHomogeneous2022}, \citet{troshin2023wrapped} and \citet{naganoWrappedNormalDistribution2019}, we do not restrict ourselves to a centered multivariate Gaussian $\N(0, \Sigma)$ on the tangent space $T_p \P_d$. In our case, we allow the wrapped Gaussian to have an extra parameter $\mu$, thus extending the flexibility and applicability of the model. Having a non-centered distribution on the tangent space $T_p \P_d$ leads to new considerations that will be discussed in \cref{sec:equiv_gaussian}. 

\begin{remark}
    In this work, we focus on extending the multivariate Gaussian to a Riemannian setting. With no extra difficulty, one could wrap an \emph{Elliptically Contoured distribution} (EC) on $\P_d$ (chapter 6 of \citet{johnson1987multivariate} or \citealt{delmas2024elliptically}). We give more detail on this in \cref{appendix:wrapped_EC}. 
\end{remark}

\subsection{Some properties of wrapped Gaussians}

Let us now give some properties of the wrapped Gaussians. We start by a rescaling property. 
\begin{proposition}
    \label{prop:centered_reduced}
    Let $X \sim \WG(I_d; 0_{\nicefrac{d(d+1)}{2}}, I_{\nicefrac{d(d+1)}{2}})$ and let $(p, \mu, \Sigma) \in \Theta$. There exists a transformation of $X$ denoted $\Psi$ such that $$\Psi(X) \sim \WG(p; \mu, \Sigma).$$ Thus, the wrapped Gaussian $ \WG(I_d; 0_{\nicefrac{d(d+1)}{2}}, I_{\nicefrac{d(d+1)}{2}})$ is a building block of the wrapped Gaussians.
\end{proposition}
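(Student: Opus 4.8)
The plan is to construct $\Psi$ explicitly as a composition of three elementary maps: the inverse of the global chart $\Exp_{I_d}\circ\Vect_{I_d}^{-1}\colon\R^{\nicefrac{d(d+1)}{2}}\to\P_d$, an affine map of $\R^{\nicefrac{d(d+1)}{2}}$ that performs the change of Gaussian parameters, and the chart $\Exp_p\circ\Vect_p^{-1}$ going back to $\P_d$. First I would note that, because $\P_d$ is a Hadamard manifold, $\Exp_{I_d}$ is a global diffeomorphism with inverse $\Log_{I_d}$, and $\Vect_{I_d}$ is a linear isomorphism; hence $\varphi := \Vect_{I_d}\circ\Log_{I_d}\colon\P_d\to\R^{\nicefrac{d(d+1)}{2}}$ is a diffeomorphism. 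Then I would pick any matrix $A$ with $AA^\top=\Sigma$ (for instance $A=\Sigma^{1/2}$), which exists and is invertible since $\Sigma\in\P_{\nicefrac{d(d+1)}{2}}$, and set $\Psi := \Exp_p\circ\Vect_p^{-1}\circ\big(A(\cdot)+\mu\big)\circ\varphi$. Every factor is a bijection (in fact a diffeomorphism), so $\Psi$ is a well-defined transformation of $\P_d$.

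It will then remain to check that $\Psi(X)\sim\WG(p;\mu,\Sigma)$. By \cref{def:wrapped_gaussian}, $X\sim\WG(I_d;0_{\nicefrac{d(d+1)}{2}},I_{\nicefrac{d(d+1)}{2}})$ means $X=\Exp_{I_d}(\Vect_{I_d}^{-1}(\mathbf{t}))$ for some $\mathbf{t}\sim\N(0_{\nicefrac{d(d+1)}{2}},I_{\nicefrac{d(d+1)}{2}})$; applying $\varphi$ and using $\Log_{I_d}=\Exp_{I_d}^{-1}$ gives $\varphi(X)=\mathbf{t}$. The standard affine-transformation rule for multivariate Gaussians then yields $A\mathbf{t}+\mu\sim\N(\mu,AA^\top)=\N(\mu,\Sigma)$, and finally, again by \cref{def:wrapped_gaussian}, $\Psi(X)=\Exp_p(\Vect_p^{-1}(A\mathbf{t}+\mu))\sim\WG(p;\mu,\Sigma)$. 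Equivalently, I could phrase this entirely with push-forwards via \cref{eq:push-forward}, letting the composed push-forward collapse to $(\Exp_p\circ\Vect_p^{-1})\#\N(\mu,\Sigma)$.

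Since every map in the composition is elementary, I do not expect a real obstacle; the two points requiring care will be (i) invoking the Cartan--Hadamard structure of $\P_d$ so that $\Log_{I_d}$ is defined on all of $\P_d$ and $\varphi$ is a genuine global bijection rather than a mere local chart, and (ii) being explicit that $\Psi$ depends on the chosen square root $A$ of $\Sigma$, so $\Psi$ is not unique --- which is consistent with the statement asserting only existence. As a consistency check I would also verify that the change-of-variables Jacobians of $\varphi$, of the affine map, and of $\Exp_p\circ\Vect_p^{-1}$ multiply to exactly the factor $|J_p(\Log_p(\cdot))|$ appearing in the density \cref{eq:density}.
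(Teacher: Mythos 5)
Your proof is correct, but it takes a genuinely different and more elementary route than the paper's. You work entirely at the level of the defining random variable: you invert the chart $\Exp_{I_d}\circ\Vect_{I_d}^{-1}$ to recover the underlying standard Gaussian vector $\mathbf{t}$, apply the classical affine rule $A\mathbf{t}+\mu\sim\N(\mu,AA^\top)$, and re-wrap through $\Exp_p\circ\Vect_p^{-1}$, so that $\Psi(X)\sim\WG(p;\mu,\Sigma)$ follows immediately from \cref{def:wrapped_gaussian}. The paper instead proves a finer statement in three pieces --- that congruence by $p^{-1/2}$ sends $\WG(p;\mu,\Sigma)$ to $\WG(I_d;\mu,\Sigma)$, that a tangent-space translation removes $\mu$, and that a tangent-space rescaling normalizes $\Sigma$ --- each verified by a density computation with test functions, using the congruence-invariance of the volume element and the Jacobian of the exponential map, and then chains these into an explicit $\Psi$. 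Your push-forward argument buys brevity and avoids all Jacobian bookkeeping (your final ``consistency check'' on the Jacobians is genuinely redundant: the push-forward identity already guarantees the densities match); the paper's approach buys reusable intermediate facts about how \emph{arbitrary} wrapped Gaussians transform under congruence, translation and rescaling, which is more information than the proposition strictly requires. Your observation that $\Psi$ depends on the chosen square root $A$ of $\Sigma$, so that only existence is claimed, is a correct and worthwhile remark that the paper does not make explicit.
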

One can find a proof of this result, as well as an explicit expression for $\Psi$ in \cref{appendix:building_bloc}. 

Then, we give a wrapped version of the multivariate \emph{Central Limit Theorem} (CLT) (Theorem 5.12 of \citealt{wassermanAllStatisticsConcise2004}) for the manifold $\P_d$. For this, we define the logarithmic product introduced in \citet{arsignyGeometricMeansNovel2006}: 
\begin{definition}[Logarithmic product]
    \label{def:logarithmic_product}
    Let $p,q \in \P_d$. The logarithmic product of $p$ and $q$ is defined as:
    $$p \odot q = \exp\left(\log p + \log q \right).$$    
\end{definition}
Equipped with this logarithmic product, $(\P_d, \odot)$ forms a commutative group that is isomorphic to $(\S_d, +)$\footnote{More information on the properties of this logarithmic product can be found in \citet{arsignyGeometricMeansNovel2006}.}. One can generalize this notation to the sum of $n$ SPD matrices $p_1, \dots, p_n$ as 
$\bigodot_{i=1}^n p_i =  p_1 \odot \dots \odot p_n.$ Using this logarithmic product, we can state the following theorem:
\begin{theorem}[Wrapped CLT]
    \label{theo:wrapped_CLT}
    Let $(X_i)_{i \in \mathbb{N}^*}$ be a sequence of i.i.d. random variables on $\P_d$. We suppose that the sequence $(\Vect_{I_d}(\Log_{I_d}(X_i)))_{i \in \mathbb{N}^*}$ of random variables on $\R^{d(d+1)/2}$ admits a finite second order moment. We denote by $\mu$ the mean and by $\Sigma$ the covariance matrix of $\Vect_{I_d}(\Log_{I_d}(X_1))$. Then, 
    $$\left( \bigodot_{i = 1}^n (X_i \odot m^{-1})\right)^{\frac{1}{\sqrt{n}}} \xrightarrow[n \to \infty]{d} \WG(I_d; 0, \Sigma)$$ 
    where $\xrightarrow[n \to \infty]{d}$ denotes the convergence in distribution and where $m = \Exp_{I_d}(\Vect_{I_d}^{-1}(\mu))$.
\end{theorem}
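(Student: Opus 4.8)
The plan is to transport the entire statement down to the Euclidean space $\R^{\nicefrac{d(d+1)}{2}}$ through the global chart $\Vect_{I_d} \circ \Log_{I_d}$ and to invoke the classical multivariate CLT there. First I would record the two elementary facts that make this work: at $p = I_d$ the maps of \cref{eq:exp_log_riem} become the matrix exponential and logarithm, $\Exp_{I_d} = \exp$ and $\Log_{I_d} = \log$, and the matrix power satisfies $q^t = \exp(t \log q)$ for $q \in \P_d$ and $t>0$. Combined with the fact (from \cref{def:logarithmic_product}) that $\log$ is the group isomorphism $(\P_d,\odot) \to (\S_d,+)$, this means $\log$ turns iterated $\odot$-products into sums and $t$-th powers into scalings by $t$; composing with the linear isometry $\Vect_{I_d}$ carries all of this faithfully to $\R^{\nicefrac{d(d+1)}{2}}$.

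Next I would do the bookkeeping. Set $Z_i := \Vect_{I_d}(\Log_{I_d}(X_i))$; by hypothesis these are i.i.d.\ with mean $\mu$ and covariance $\Sigma$, and $\Vect_{I_d}(\log m) = \mu$ by the definition of $m$. Since $\log(X_i \odot m^{-1}) = \log X_i - \log m$ and $\log$ sends the iterated $\odot$-product to the sum,
$$\Log_{I_d}\!\left[ \Bigl( \bigodot_{i=1}^n (X_i \odot m^{-1}) \Bigr)^{\!1/\sqrt{n}} \right] = \frac{1}{\sqrt{n}} \sum_{i=1}^n \bigl( \log X_i - \log m \bigr),$$
so applying the linear map $\Vect_{I_d}$ gives $\frac{1}{\sqrt{n}} \sum_{i=1}^n (Z_i - \mu)$, which converges in distribution to $\N(0,\Sigma)$ by the multivariate CLT (Theorem 5.12 of \citealt{wassermanAllStatisticsConcise2004}).

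Finally I would transfer the convergence back to the manifold. The map $\Exp_{I_d} \circ \Vect_{I_d}^{-1} \colon \R^{\nicefrac{d(d+1)}{2}} \to \P_d$ is continuous (it is a diffeomorphism), and applied to $\frac{1}{\sqrt{n}} \sum_{i=1}^n (Z_i - \mu)$ it returns exactly $\bigl( \bigodot_{i=1}^n (X_i \odot m^{-1}) \bigr)^{1/\sqrt{n}}$, because $\Vect_{I_d}^{-1} \circ \Vect_{I_d}$ and $\Exp_{I_d} \circ \Log_{I_d}$ are identities. The continuous mapping theorem then yields
$$\Bigl( \bigodot_{i=1}^n (X_i \odot m^{-1}) \Bigr)^{1/\sqrt{n}} \xrightarrow[n\to\infty]{d} (\Exp_{I_d} \circ \Vect_{I_d}^{-1})\# \N(0, \Sigma),$$
and the right-hand side is $\WG(I_d; 0, \Sigma)$ by the push-forward characterization \cref{eq:push-forward}. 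The hard part is really just keeping the identifications straight: the one thing to check carefully is that the iterated logarithmic product, the $1/\sqrt{n}$-th power, and the recentering by $m^{-1}$ all become, simultaneously and compatibly, the linear operations of summation, scaling and translation once read through $\Vect_{I_d}\circ\log$ — which is exactly what the isomorphism $(\P_d,\odot)\simeq(\S_d,+)$ provides. Everything else is the classical CLT together with the continuous mapping theorem.
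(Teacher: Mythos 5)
Your proposal is correct and follows essentially the same route as the paper's proof: classical multivariate CLT in $\R^{\nicefrac{d(d+1)}{2}}$ applied to $\Vect_{I_d}(\Log_{I_d}(X_i))$, the algebraic identification of the iterated $\odot$-product, recentering and $\nicefrac{1}{\sqrt{n}}$-th power with sums, translations and scalings under $\Vect_{I_d}\circ\log$, and transfer back to $\P_d$ by the continuous mapping theorem and the push-forward definition of $\WG(I_d;0,\Sigma)$. The only difference is cosmetic: the paper starts from the Euclidean CLT and simplifies forward to the product statistic, while you start from the statistic and reduce it to the CLT sum.
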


This theorem shows the interest of wrapped Gaussians, as they naturally appear in the limit of a logarithmic product of random SPD matrices. The proof of this theorem can be found in \cref{appendix:CLT}. One can note that by generalizing the logarithmic product defined in \cref{def:logarithmic_product} to another tangent space $T_p \P_d$, one can extend the limit to $\WG(p; 0, \Sigma)$.

We can also give information on the mean of $\WG(p; \mu, \Sigma)$ in the special case of $\mu = 0$ i.e. when the distribution on the tangent space $T_p \P_d$ is centered. We recall from definition 3 of \citet{pennecCurvatureEffectsEmpirical2019} that a mean, or exponential barycenter, of a probability distribution $\alpha$ on $\P_d$ is defined as a point $\bar{p} \in \P_d$ satisfying $\int_{\P_d} \Log_{\bar{p}}(x) \mathrm{d} \alpha(x) = 0.$
For wrapped Gaussians, one has the following result:
\begin{proposition}
    A mean of $\WG(p; 0, \Sigma)$ is p.
    \label{prop:mean_WG}
\end{proposition}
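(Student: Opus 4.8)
The plan is to verify directly that $\bar p = p$ satisfies the defining equation of an exponential barycenter recalled above, namely $\int_{\P_d}\Log_{\bar p}(x)\,\d\alpha(x)=0$ for $\alpha = \WG(p;0,\Sigma)$, working with the sampling description of the distribution (\cref{def:wrapped_gaussian}) rather than with its density. First I would recall that, since $\P_d$ endowed with the AIRM is a Hadamard manifold, the exponential map $\Exp_p\colon T_p\P_d\to\P_d$ is a global diffeomorphism whose inverse is exactly $\Log_p$; in particular $\Log_p\circ\Exp_p=\mathrm{id}_{T_p\P_d}$ on all of $T_p\P_d$. Hence, if $X\sim\WG(p;0,\Sigma)$ so that $X=\Exp_p(\Vect_p^{-1}(\mathbf t))$ with $\mathbf t\sim\N(0,\Sigma)$, applying $\Log_p$ yields the clean identity $\Log_p(X)=\Vect_p^{-1}(\mathbf t)$, a $T_p\P_d\simeq\S_d$-valued random variable.

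Next I would address integrability, so that the barycenter integral makes sense as a Bochner integral in the fixed vector space $\S_d$: the Gaussian $\mathbf t$ has finite moments of all orders, and $\Vect_p^{-1}\colon\R^{\nicefrac{d(d+1)}{2}}\to T_p\P_d$ is a bounded linear isomorphism (indeed an isometry), so $\Log_p(X)=\Vect_p^{-1}(\mathbf t)$ is integrable. The computation is then immediate:
\[
\int_{\P_d}\Log_p(x)\,\d\alpha(x)=\mathbb E\!\left[\Log_p(X)\right]=\mathbb E\!\left[\Vect_p^{-1}(\mathbf t)\right]=\Vect_p^{-1}\!\bigl(\mathbb E[\mathbf t]\bigr)=\Vect_p^{-1}(0)=0,
\]
where the third equality uses linearity of $\Vect_p^{-1}$ to exchange it with the expectation and the last uses that $\mu=0$. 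By the definition of an exponential barycenter, this shows that $p$ is a mean of $\WG(p;0,\Sigma)$.

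There is essentially no hard step here; the two points needing a little care are (i) justifying $\Log_p\circ\Exp_p=\mathrm{id}_{T_p\P_d}$ globally, which is precisely where the non-positive curvature of $\P_d$ and the Cartan--Hadamard theorem enter to guarantee $\Log_p$ is well defined on the whole manifold, and (ii) observing that $\Vect_p^{-1}$ is linear so that it commutes with the (vector-valued) expectation. I would not assert uniqueness of the mean, consistent with the wording ``A mean'' in the statement, though on a Hadamard manifold the exponential barycenter is in fact unique; this could be mentioned in a remark if desired.
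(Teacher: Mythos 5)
Your proof is correct and is exactly the argument the paper has in mind: the paper simply states that the result is ``straightforward from the definition'' of $\WG(p;0,\Sigma)$, and your computation --- $\Log_p(X)=\Vect_p^{-1}(\mathbf t)$ by global invertibility of $\Exp_p$ on the Hadamard manifold $\P_d$, then $\mathbb{E}[\Log_p(X)]=\Vect_p^{-1}(\mathbb{E}[\mathbf t])=0$ by linearity --- is precisely that one-line verification of the exponential barycenter equation, spelled out with the appropriate care about integrability.
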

The proof is straightforward from the definition of the wrapped Gaussian $\WG(p; 0, \Sigma)$.
\subsection{An equivalence relation between wrapped Gaussians}  
\label{sec:equiv_gaussian}
A key property of the Euclidean Gaussian $\N(\mu, \Sigma)$ is its \emph{identifiability}, meaning that the map $(\mu, \Sigma) \mapsto \N(\mu, \Sigma)$ is one-to-one (definition 1.5.2 of \citealt{LehmCase98}). The notion of identifiability is important when one wants to learn the parameters of a distribution from a finite set of samples as we will do in \cref{sec:estimation}. 
If we consider a model based on the wrapped Gaussian, we lose this property as we have the following proposition, illustrated in \cref{fig:illustration_equivalence}: 
\begin{proposition}
    \label{prop:equivalence}
    Let $(p, \mu, \Sigma) \in \Theta$ and $t \in \R$. One has that $\WG(p; \mu, \Sigma)$ and $\WG(e^{t}p;\mu - t\nu, \Sigma)$ are equal where $\nu = \Vect_p(p) = (\underbrace{1,\cdots,1}_{\new{d \text{ ones}}},\underbrace{0,\cdots,0}_{\new{d(d-1)/2 \text{ zeros}}}) \in \R^{\nicefrac{d(d+1)}{2}}$.
\end{proposition}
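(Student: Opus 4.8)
The plan is to show that the two wrapped Gaussians have the same sampling law by exhibiting an explicit measure-preserving correspondence between the two tangent-space constructions. Concretely, if $X = \Exp_p(\Vect_p^{-1}(\mathbf t))$ with $\mathbf t \sim \N(\mu,\Sigma)$, I want to show that $X = \Exp_{e^t p}(\Vect_{e^t p}^{-1}(\mathbf s))$ with $\mathbf s \sim \N(\mu - t\nu, \Sigma)$. Since an affine shift $\mathbf s = \mathbf t - t\nu$ sends $\N(\mu,\Sigma)$ to $\N(\mu - t\nu,\Sigma)$, it suffices to prove the pointwise identity
\begin{equation*}
    \Exp_p(\Vect_p^{-1}(\mathbf t)) = \Exp_{e^t p}\bigl(\Vect_{e^t p}^{-1}(\mathbf t - t\nu)\bigr)
    \qquad \text{for all } \mathbf t \in \R^{\nicefrac{d(d+1)}{2}}.
\end{equation*}

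The key computational observation is that scaling the base point by $e^t$ commutes nicely with the affine-invariant exponential map. First I would translate the claim out of vectorized coordinates: with $u = \Vect_p^{-1}(\mathbf t) \in T_p\P_d$, the shift by $-t\nu$ corresponds, via $\Vect$, to subtracting $t\,p$ from $u$ after the appropriate base-point change, because $\Vect_p(p) = \nu = \Vect_{I_d}(I_d)$ and the vectorization at $e^t p$ differs from that at $p$ only through the congruence by $(e^t p)^{-1/2}$ versus $p^{-1/2}$, i.e. by a scalar factor $e^{-t/2}$. So I expect the target tangent vector at $e^t p$ to be $v = u - t p$ (possibly up to a scalar bookkeeping factor that I need to track carefully). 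Then, using $\Exp_q(w) = q^{1/2}\exp(q^{-1/2}w q^{-1/2})q^{1/2}$ from \cref{eq:exp_log_riem} with $q = e^t p$, one gets $(e^t p)^{-1/2}(u - tp)(e^t p)^{-1/2} = e^{-t}p^{-1/2}u p^{-1/2} - t I_d$, and since $\exp$ of a sum with a multiple of the identity factors as $\exp(\,\cdot\,)e^{-t}$, the scalars $e^t$ and $e^{-t}$ cancel and one recovers exactly $p^{1/2}\exp(p^{-1/2}u p^{-1/2})p^{1/2} = \Exp_p(u)$. This is essentially the statement that the geodesic from $e^t p$ in the direction "towards $p$" passes through $p$, reflecting the symmetric role of $p$ and $\mu$ mentioned after \cref{def:wrapped_gaussian}.

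Having the pointwise identity, I would conclude by the push-forward characterization \cref{eq:push-forward}: both $\WG(p;\mu,\Sigma)$ and $\WG(e^t p;\mu - t\nu,\Sigma)$ are the push-forward of the \emph{same} measure on $\R^{\nicefrac{d(d+1)}{2}}$ — namely $\N(\mu,\Sigma)$, once we reparametrize the second one through the affine bijection $\mathbf s \mapsto \mathbf s + t\nu$, which maps $\N(\mu-t\nu,\Sigma)$ to $\N(\mu,\Sigma)$ — under two maps that the pointwise identity shows agree. Hence the two distributions coincide. The main obstacle I anticipate is purely bookkeeping: getting the scalar factors right in the interplay between $\Vect_p$, $\Vect_{e^t p}$, the factor $p^{-1/2}$ versus $(e^t p)^{-1/2} = e^{-t/2}p^{-1/2}$, and the definition of $\nu$ — in particular confirming that the shift vector is exactly $t\nu$ and not $t\nu$ scaled by some power of $e^t$. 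It will help to verify the identity first at $p = I_d$ (where $\Vect_{I_d}$ is the plain vectorization and $\nu = \Vect_{I_d}(I_d)$ literally), and then transport to general $p$ using the affine invariance of the AIRM and of $\Vect$.
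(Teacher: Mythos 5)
Your proof is correct, but it takes a genuinely different route from the paper's. The paper works at the level of densities: it writes out the density of $\WG(e^tp;\mu-t\nu,\Sigma)$ from \cref{theo:density}, computes $\Log_{e^tp}(x)$ and $\Vect_{e^tp}(\Log_{e^tp}(x))$ explicitly, and separately checks that the Jacobian term is unchanged because $J_{I_d}$ depends only on differences of eigenvalues, which are invariant under adding $-tI_d$. You instead prove the pointwise identity of sampling maps, $\Exp_p\circ\Vect_p^{-1} = \bigl(\Exp_{e^tp}\circ\Vect_{e^tp}^{-1}\bigr)\circ(\mathbf t\mapsto\mathbf t-t\nu)$, and conclude by functoriality of the push-forward in \cref{eq:push-forward}; this is arguably cleaner since it never touches the density or the Jacobian formula of \cref{prop:jac_exp}, and it would apply verbatim to any wrapped distribution (e.g.\ the wrapped EC distributions of the appendix). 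One bookkeeping point you flagged does need fixing: since $\Vect_{e^tp} = e^{-t}\Vect_p$ (the congruence by $(e^tp)^{-1/2}=e^{-t/2}p^{-1/2}$ acts on both sides), the target tangent vector is $\Vect_{e^tp}^{-1}(\mathbf t - t\nu) = e^t(u-tp)$ rather than $u-tp$, so your displayed line should read $(e^tp)^{-1/2}\,e^t(u-tp)\,(e^tp)^{-1/2} = p^{-1/2}up^{-1/2}-tI_d$; as written with $u-tp$ the right-hand side would be $e^{-t}p^{-1/2}up^{-1/2}-te^{-t}I_d$. With that correction the cancellation $e^t\cdot e^{-t}$ coming from $\exp(-tI_d)$ goes through exactly as you describe and the argument is complete.
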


All the proofs of this section can be found in \cref{appendix:equivalence}. 
One can verify using \cref{eq:exp_log_riem}, that $t \mapsto e^t p$ is the geodesic $\gamma \colon t \mapsto \Exp_p(tp)$ in $\P_d$ starting at point $p$ and with initial velocity $p$ (as $p$ is a symmetric matrix, it also belongs to $T_p \P_d \simeq \S_d$).
Moreover, the map $t \mapsto \mu - t \nu = \mu - t \Vect_p(p)$ is also the geodesic in $\R^{\nicefrac{d(d+1)}{2}}$ with initial point $\mu$ and initial velocity $-\nu = -\Vect_p(p)$. 
Therefore, when $p$ is pushed in one ``direction" (initial velocity $p$), $\mu$ is pushed in the opposite ``direction" (initial velocity $-\Vect_p(p)$). 
\begin{figure}[t]
    \centering
    \includegraphics[width=0.7\linewidth]{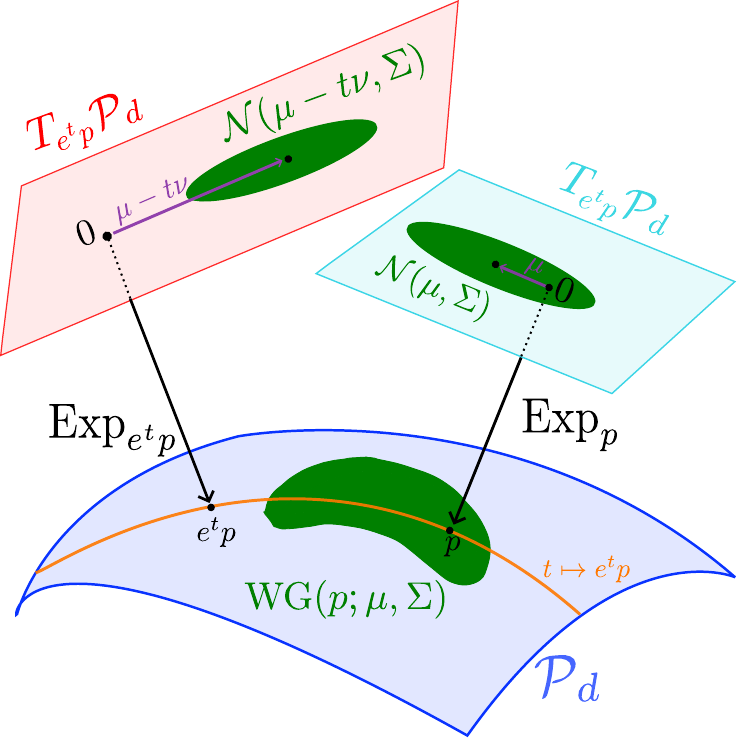}
    \caption{Illustration of the equivalence between two wrapped Gaussians given by \cref{prop:equivalence}.}%
    \label{fig:illustration_equivalence}
\end{figure}

A wrapped Gaussian can thus be represented by several sets of parameters, so we define an \emph{equivalence relation} between sets of parameters that define the same wrapped Gaussian:
\begin{definition}
    Let $\theta_\alpha = (p_{\alpha}, \mu_{\alpha}, \Sigma_{\alpha}) \in \Theta$ and $\theta_\beta =(p_{\beta}, \mu_{\beta}, \Sigma_{\beta}) \in \Theta$ be two sets of parameters. Then, $\theta_\alpha$ and $\theta_\beta$ are equivalent, which we denote by $\theta_\alpha \cong \theta_\beta$, if they define the same wrapped Gaussian i.e.     
    $$\WG(p_{\alpha}; \mu_{\alpha}, \Sigma_{\alpha}) = \WG(p_{\beta}; \mu_{\beta}, \Sigma_{\beta}).$$ 
    We denote by $[\theta_\alpha]$ the equivalence class of $\theta_\alpha$:
    $$[\theta_\alpha] = \left\{\theta  = (p', \mu', \Sigma') \mid \theta \cong \theta_\alpha\right\}.$$                  
\end{definition}
Using \cref{prop:equivalence}, one has the immediate corollary:
\begin{corollary}
    Let $\theta_\alpha = (p_{\alpha}, \mu_{\alpha}, \Sigma_{\alpha}) \in \Theta$ and $\theta_\beta =(p_{\beta}, \mu_{\beta}, \Sigma_{\beta}) \in \Theta$. If there exists $t \in \R$ such that 
            $p_{\beta} = e^{t}p_{\alpha}, \mu_{\beta} = \mu_{\alpha} + t\Vect_{p_\alpha}(p_{\alpha}) \text{ and } \Sigma_{\beta} = \Sigma_{\alpha}$,
    then $\theta_\alpha \cong \theta_\beta$.      
\end{corollary}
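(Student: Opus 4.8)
The plan is to derive the statement directly from \cref{prop:equivalence}: the corollary is simply that proposition rewritten with the two parameter triples named explicitly, so the only work is one bookkeeping observation followed by a substitution.

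First I would record the elementary fact that $\Vect_{p}(p)=\nu$ for \emph{every} $p\in\P_d$. Indeed, by \cref{def:vectorization}, $\Vect_{p}(p)=\Vect_{I_d}(p^{-1/2}\,p\,p^{-1/2})=\Vect_{I_d}(I_d)=(\underbrace{1,\dots,1}_{d},\underbrace{0,\dots,0}_{d(d-1)/2})=\nu$. In particular $\Vect_{p_\alpha}(p_\alpha)=\nu$, so the shift vector appearing in the corollary is exactly the vector $\nu$ of \cref{prop:equivalence}.

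Next I would apply \cref{prop:equivalence} to the triple $\theta_\alpha=(p_\alpha,\mu_\alpha,\Sigma_\alpha)$. For the real parameter it supplies, one gets
\[
\WG(p_\alpha;\mu_\alpha,\Sigma_\alpha)=\WG\bigl(e^{t}p_\alpha;\ \mu_\alpha - t\,\Vect_{p_\alpha}(p_\alpha),\ \Sigma_\alpha\bigr),
\]
and, since $t$ ranges over all of $\R$, replacing $t$ by $-t$ shows the same identity holds with $+t$ in place of $-t$. Hence, under the hypotheses $p_\beta=e^{t}p_\alpha$, $\mu_\beta=\mu_\alpha + t\,\Vect_{p_\alpha}(p_\alpha)$ and $\Sigma_\beta=\Sigma_\alpha$, the right-hand side above is precisely $\WG(p_\beta;\mu_\beta,\Sigma_\beta)$. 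Therefore $\WG(p_\alpha;\mu_\alpha,\Sigma_\alpha)=\WG(p_\beta;\mu_\beta,\Sigma_\beta)$, which by the definition of $\cong$ means $\theta_\alpha\cong\theta_\beta$, as claimed.

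I do not anticipate any real obstacle here: all the substance is already contained in \cref{prop:equivalence} (and, through it, in the density formula of \cref{theo:density} together with the scaling behaviour of $\Exp_p$, $\Vect_p$ and the Jacobian $J_p$). The corollary only repackages it, and the one point that deserves a moment's care — verifying that $\Vect_{p_\alpha}(p_\alpha)$ is the distinguished vector $\nu$ and keeping track of the sign of the geodesic parameter $t$ — is dispatched by the short computation above.
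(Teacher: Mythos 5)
Your overall route is the paper's: the corollary is read off from \cref{prop:equivalence}, and your preliminary observation that $\Vect_{p}(p)=\Vect_{I_d}(p^{-1/2}pp^{-1/2})=\Vect_{I_d}(I_d)=\nu$ for every $p$ is correct and worth making explicit. However, the step where you reconcile the signs is a genuine error. Substituting $-t$ for $t$ in the identity $\WG(p_\alpha;\mu_\alpha,\Sigma_\alpha)=\WG(e^{t}p_\alpha;\mu_\alpha-t\nu,\Sigma_\alpha)$ changes the sign in \emph{both} occurrences of $t$ and yields $\WG(p_\alpha;\mu_\alpha,\Sigma_\alpha)=\WG(e^{-t}p_\alpha;\mu_\alpha+t\nu,\Sigma_\alpha)$; it does not produce $\WG(e^{t}p_\alpha;\mu_\alpha+t\nu,\Sigma_\alpha)$. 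You cannot flip the sign of the $\mu$-shift while keeping the sign of the exponent, and no choice of the free parameter in \cref{prop:equivalence} realizes the pairing $(e^{t}p_\alpha,\ \mu_\alpha+t\nu)$.

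In fact the statement as printed cannot be a consequence of \cref{prop:equivalence} at all: combining the two would give $\WG(e^{t}p_\alpha;\mu_\alpha-t\nu,\Sigma_\alpha)=\WG(e^{t}p_\alpha;\mu_\alpha+t\nu,\Sigma_\alpha)$, i.e.\ $\WG(q;\mu,\Sigma)=\WG(q;\mu-2t\nu,\Sigma)$ for all $t$, which is false because for fixed $q$ and $\Sigma$ the density $x\mapsto g_{\mu,\Sigma}(\Vect_q(\Log_q(x)))/|J_q(\Log_q(x))|$ determines $\mu$ (the map $\Vect_q\circ\Log_q$ is a bijection onto $\R^{\nicefrac{d(d+1)}{2}}$). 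The corollary carries a sign typo: the hypothesis should read $\mu_\beta=\mu_\alpha-t\Vect_{p_\alpha}(p_\alpha)$ (equivalently, keep the plus sign but take $p_\beta=e^{-t}p_\alpha$), which is also the convention implicitly used in \cref{prop:param_to_min}. With that correction your argument is indeed immediate and coincides with the paper's; as written, the ``replace $t$ by $-t$'' step hides the inconsistency rather than resolving it.
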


\begin{remark}
    All equivalence classes do not contain a wrapped Gaussian that is centered on the tangent space ($\mu = 0$). \new{Let us consider a wrapped Gaussian $\WG(p;\mu, \Sigma)$. Then, the equivalent wrapped Gaussians are of the form $\WG(e^t p; \mu - t\nu, \Sigma)$ for $t \in \R$. If $\mu$ and $\nu$ are aligned i.e., there exists $\tilde{t} \in \R$ such that $\mu = \tilde{t}\nu$, then the equivalence class contains a wrapped Gaussian with $\mu = 0$. However, when $\mu$ and $\nu = (1,\cdots,1,0,\cdots,0)$ are not aligned (for example, take $\mu = \nu + (1,\cdots,0) = (2,1,\cdots,1,0,\cdots,0)$), then there exists no $t \in \R$ such that $\mu = t\nu$ and the equivalence class does not contain a wrapped Gaussian with $\mu = 0$.} Therefore, allowing $\mu \neq 0$ increases the expressiveness of the model.
\end{remark}

Once we have defined an equivalence class $[\theta]$ of parameters that define the same wrapped Gaussian, it is natural to define a representative of $[\theta]$. We define it as follows:
\begin{definition}[Representative of an equivalence class]
    \label{def:representative}
    We choose as representative of the class $[\theta]$, the tuple of parameters $\theta^\text{min} = (p^\text{min}, \mu^\text{min}, \Sigma^\text{min})$ such that $\mu^\text{min}$ is minimal in the sens of $\| \cdot \|_2$. We call it the \emph{minimal representative}.
\end{definition}
One is able, given a tuple of parameters $\theta = (p, \mu, \Sigma)$, to compute the minimal representative of the class $[\theta]$ of equivalent tuples of parameters using the following proposition:
\begin{proposition}
    \label{prop:param_to_min}
    Let $\theta = (p, \mu, \Sigma) \in \Theta$ be parameters. Then, the minimal representative of the class $[\theta]$  as defined at \cref{def:representative} is $\theta^\text{min} = (p^\text{min}, \mu^\text{min}, \Sigma^{\text{min}})$ where 
    \begin{equation*}
        p^{\text{min}} = e^{\frac{1}{d}\sum_{i=1}^d \mu_i}p, \quad \mu^{\text{min}} = \mu -  \frac{1}{d}\sum_{i=1}^d \mu_i \nu, \quad \Sigma^\text{min} = \Sigma
    \end{equation*}
    where we recall that $\nu = (1,\cdots,1,0,\cdots,0) \in \R^{\nicefrac{d(d+1)}{2}}$.
\end{proposition}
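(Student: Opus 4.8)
The plan is to exploit \cref{prop:equivalence}, which tells us that every tuple equivalent to $\theta = (p,\mu,\Sigma)$ is of the form $(e^{t}p,\, \mu - t\nu,\, \Sigma)$ for some $t \in \R$ (and, conversely, that these exhaust the equivalence class $[\theta]$). Hence, choosing the minimal representative in the sense of \cref{def:representative} reduces to the one-dimensional optimization problem
$$\min_{t \in \R}\ \|\mu - t\nu\|_2^2 ,$$
since the $p$- and $\Sigma$-components of the tuple are completely determined once $t$ is fixed, and $\Sigma$ does not even depend on $t$.

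First I would expand the objective as a quadratic in $t$:
$$\|\mu - t\nu\|_2^2 = \|\mu\|_2^2 - 2t\,\langle \mu,\nu\rangle_2 + t^2\,\|\nu\|_2^2 .$$
Then I would compute the two inner products explicitly from the definition $\nu = (\underbrace{1,\dots,1}_{d},\underbrace{0,\dots,0}_{d(d-1)/2})$: namely $\|\nu\|_2^2 = d$ and $\langle \mu,\nu\rangle_2 = \sum_{i=1}^d \mu_i$. The objective is therefore a strictly convex quadratic in $t$ (its leading coefficient $d$ is positive), so it has a unique global minimizer, obtained by setting the derivative to zero: $-2\sum_{i=1}^d\mu_i + 2td = 0$, i.e. $t^\star = \frac1d\sum_{i=1}^d \mu_i$. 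Geometrically, $t^\star\nu$ is just the orthogonal projection of $\mu$ onto the line $\R\nu$.

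Finally I would substitute $t = t^\star$ into the parametrization $(e^t p,\, \mu - t\nu,\, \Sigma)$ to obtain exactly $p^\text{min} = e^{\frac1d\sum_{i=1}^d\mu_i}\,p$, $\mu^\text{min} = \mu - \frac1d\big(\sum_{i=1}^d\mu_i\big)\nu$ and $\Sigma^\text{min} = \Sigma$, which are the claimed formulas. The only genuinely non-routine ingredient is the first one --- knowing that the equivalence class is precisely the one-parameter family produced by \cref{prop:equivalence}, and not something larger; once this is granted, the argument is an elementary least-squares computation, and the strict convexity of $t \mapsto \|\mu - t\nu\|_2^2$ also guarantees that the minimal representative is unique.
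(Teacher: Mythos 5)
Your proposal is correct and follows essentially the same route as the paper's proof: both reduce the problem to minimizing the quadratic $t \mapsto \|\mu - t\nu\|_2^2$ over the one-parameter family from \cref{prop:equivalence}, compute $\|\nu\|_2^2 = d$ and $\langle \mu,\nu\rangle = \sum_{i=1}^d \mu_i$, and set the derivative to zero to get $t^\star = \frac{1}{d}\sum_{i=1}^d \mu_i$. Your explicit remark that one must know the equivalence class is \emph{exactly} this one-parameter family (not something larger) is a point the paper also asserts without proof, so you are not missing anything relative to it.
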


This minimal representative will be used in the following. 

\begin{figure*}[ht]
    \centering
    \includegraphics[width=0.9\linewidth]{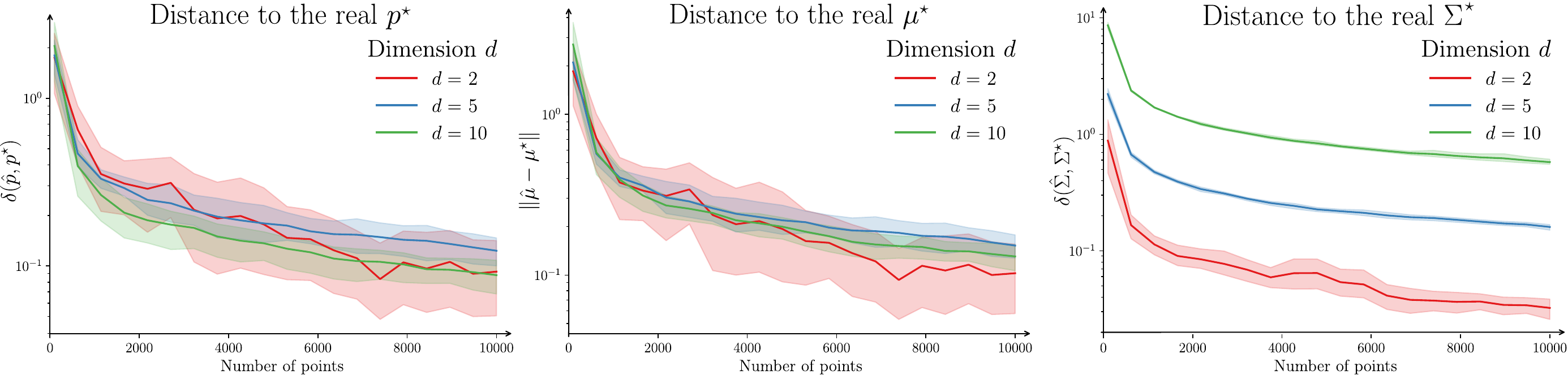}%
    \caption{Results of the synthetic experiment on the estimation of parameters of a wrapped Gaussian using an MLE.}
    \label{fig:res_expe_MLE}
\end{figure*}

\section{Estimation of the parameters of a wrapped Gaussian distribution} 
\label{sec:estimation}
In this section, we tackle the parameter estimation problem of a wrapped Gaussian given samples. After introducing the Maximum Likelihood Estimator we will use, we lead some synthetic experiments to assess its performance. 
\subsection{A Maximum Likelihood Estimator}
Let $x_1,...,x_N$ be $N$ SPD matrices of size $d \times d$ independently sampled from a wrapped Gaussian with unknown parameters $\theta^{\star} = (p^\star, \mu^\star, \Sigma^\star)$ supposed to be minimal in the sens of \cref{def:representative}. Our goal is to estimate $\theta^{\star}$ given the samples $(x_1,...,x_N)$ using a \emph{Maximum Likelihood Estimator} (MLE) (see Section 9.3 of \citealt{wassermanAllStatisticsConcise2004}). 
Let us introduce the likelihood $\mathcal{L}_N$ of the model:
$$\mathcal{L}_N(p;\mu,\Sigma) = \prod_{i=1}^{N} f_{p;\mu,\Sigma}(x_i)$$
where $f_{p;\mu,\Sigma}$ is the density of the wrapped Gaussian $\WG(p; \mu, \Sigma)$ as given in \cref{eq:density}. We will also consider the log-likelihood $\ell_N$ defined as $\ell_N(p;\mu,\Sigma) = \log \mathcal{L}_N(p;\mu,\Sigma)$. Then, we define the classical MLE $\hat{\theta}_N = (\hat{p}_N, \hat{\mu}_N, \hat{\Sigma}_N)$ as the parameter $\theta$ that maximizes $\mathcal{L}_N$ (or equivalently $\ell_N$). 
In the Euclidean setting, one has a closed form of the MLE of $\mu$ and $\Sigma$, obtained by computing the gradient of $\ell_N$. When dealing with wrapped Gaussians on $\P_d$, we were not able to derive a closed form for the MLE $\hat{p}_N$ of the parameter $p$. Moreover, it is unlikely that there exists such a closed form, as for example, there is no closed form for the Riemannian mean on $\P_d$ \cite{moakherDifferentialGeometricApproach2005}. Nevertheless, the MLE of $\mu$ and $\Sigma$ are analogous to the Euclidean setting but depend on $p^\star$:
\begin{proposition}
    \label{prop:MLE_mu_sigma}
    The MLE $\hat{\mu}_N$ and $\hat{\Sigma}_N$ of the parameters $\mu$ and $\Sigma$ of the wrapped Gaussian are:
    \begin{equation*}
        \resizebox{\linewidth}{!}{%
        $\begin{aligned}
            \hat{\mu}_N &= \frac{1}{N}\sum_{i=1}^{N}\Vlog_{p^\star}(x_i), \\
            \hat{\Sigma}_N &= \frac{1}{N}\sum_{i=1}^{N}\left(\Vlog_{p^\star}(x_i) - \hat{\mu}_N\right)\left(\Vlog_{p^\star}(x_i) - \hat{\mu}_N\right)^\top
        \end{aligned}$%
        }
    \end{equation*}
    where  $\Vlog_{p^\star} = \Vect_{p^\star} \circ \Log_{p^\star}$.
\end{proposition}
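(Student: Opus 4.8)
The plan is to exploit the explicit form of the density from \cref{eq:density} and reduce the problem to the classical Euclidean maximum likelihood computation for a multivariate Gaussian. Starting from $\mathcal{L}_N(p;\mu,\Sigma) = \prod_{i=1}^{N} f_{p;\mu,\Sigma}(x_i)$ and taking logarithms, I would write
$$\ell_N(p;\mu,\Sigma) = \sum_{i=1}^{N} \log g_{\mu,\Sigma}\bigl(\Vlog_p(x_i)\bigr) \;-\; \sum_{i=1}^{N} \log \bigl| J_p(\Log_p(x_i)) \bigr| ,$$
where $\Vlog_p = \Vect_p \circ \Log_p$. The key structural observation is that the second sum, the contribution of the Jacobian determinant of $\Exp_p$, depends only on $p$ and not on $(\mu,\Sigma)$ — this is immediate from \cref{prop:jac_exp}, since $J_p$ is a function of $p$ and of its argument alone. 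Consequently, for $p$ fixed (in particular at the value $p^\star$), maximizing $\ell_N$ over $(\mu,\Sigma)$ is equivalent to maximizing $(\mu,\Sigma) \mapsto \sum_{i=1}^{N} \log g_{\mu,\Sigma}(v_i)$ over $\R^{\nicefrac{d(d+1)}{2}} \times \P_{\nicefrac{d(d+1)}{2}}$, where $v_i := \Vlog_{p^\star}(x_i) \in \R^{\nicefrac{d(d+1)}{2}}$ are now fixed data points.

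This last problem is exactly the textbook MLE for the parameters of a multivariate Gaussian from the i.i.d.\ sample $v_1,\dots,v_N$. I would then recall (or reproduce in one line) the standard derivation: writing $\log g_{\mu,\Sigma}(v) = -\tfrac12 \log\bigl((2\pi)^{\nicefrac{d(d+1)}{2}}\det\Sigma\bigr) - \tfrac12 (v-\mu)^\top \Sigma^{-1}(v-\mu)$, differentiating $\sum_i \log g_{\mu,\Sigma}(v_i)$ with respect to $\mu$ and setting the gradient to zero gives $\hat{\mu}_N = \tfrac1N\sum_i v_i$; substituting this back and differentiating with respect to $\Sigma^{-1}$ (using $\partial_{\Sigma^{-1}} \log\det(\Sigma^{-1}) = \Sigma$ and $\partial_{\Sigma^{-1}} \tr(\Sigma^{-1} S) = S$ with $S = \sum_i (v_i - \hat{\mu}_N)(v_i-\hat{\mu}_N)^\top$) and setting to zero yields $\hat{\Sigma}_N = \tfrac1N \sum_i (v_i - \hat{\mu}_N)(v_i - \hat{\mu}_N)^\top$. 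Re-substituting $v_i = \Vlog_{p^\star}(x_i)$ gives the two stated formulas, and the standard concavity argument (the objective is strictly concave in $(\mu,\Sigma^{-1})$ on the relevant domain, assuming $N$ large enough that $S$ is nonsingular) confirms that this stationary point is the global maximizer.

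There is essentially no hard obstacle here; the only point that deserves explicit mention is the first one — that the volume-change term $\sum_i \log |J_p(\Log_p(x_i))|$ carries no dependence on $\mu$ or $\Sigma$ — which decouples the optimization over $(\mu,\Sigma)$ from the (intractable) optimization over $p$. The remainder is the classical Gaussian MLE, for which I would simply cite a standard reference (e.g.\ Section 9.3 of \citealt{wassermanAllStatisticsConcise2004}) rather than carry out the matrix calculus in full.
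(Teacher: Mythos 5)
Your argument is correct and is exactly the reasoning the paper implicitly relies on (the paper states \cref{prop:MLE_mu_sigma} without a written proof, appealing to the analogy with the Euclidean setting): from the density in \cref{theo:density}, the volume-change term $\sum_i \log|J_p(\Log_p(x_i))|$ in $\ell_N$ carries no dependence on $(\mu,\Sigma)$, so for fixed $p=p^\star$ the maximization collapses to the classical Gaussian MLE for the Euclidean sample $v_i=\Vlog_{p^\star}(x_i)$, yielding the stated sample mean and sample covariance. One minor inaccuracy that does not affect the conclusion: the Gaussian log-likelihood is not jointly strictly concave in $(\mu,\Sigma^{-1})$ (already in dimension one the Hessian of $(\mu,\lambda)\mapsto -\tfrac12\lambda(v-\mu)^2$ is indefinite); the standard way to certify the global maximum is to profile out $\mu$ first (strict concavity in $\mu$ for fixed $\Sigma$ gives $\hat{\mu}_N=\bar v$ independently of $\Sigma$) and then note that $\Lambda\mapsto \tfrac{N}{2}\log\det\Lambda-\tfrac12\tr(\Lambda S)$ is concave in $\Lambda=\Sigma^{-1}$, which is the classical fact you propose to cite anyway.
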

One can note that in proposition 4.7 of \citet{galaz-garciaWrappedDistributionsHomogeneous2022}, they also have a closed from for the MLE of $\Sigma$ that depends on $p^\star$ without any closed form for $\hat{p}_N$.
In practice, we used a Riemannian Conjugate Gradient algorithm \cite{boumalIntroductionOptimizationSmooth2023} on the product manifold $\Theta = \P_d \times \R^{\nicefrac{d(d+1)}{2}} \times \P_{\nicefrac{d(d+1)}{2}}$ to compute the optimal parameters $(\hat{p}_N, \hat{\mu}_N, \hat{\Sigma}_N)$. We implemented this MLE in Python using the toolbox \texttt{Pymanopt} \cite{pymanopt}. \new{The codes for the different experiments is available at \url{https://github.com/thibaultdesurrel/wrapped_gaussians_SPD}}.

This estimation problem can become challenging as the dimension $d$ of the considered SPD matrices increases. In fact, the number of coefficients to estimate is:
$$\underbrace{\frac{d(d+1)}{2}}_{p} + \underbrace{\frac{d(d+1)}{2}}_{\mu}  + \underbrace{\frac{d^2(d+1)^2 + 2d(d+1)}{8}}_{\Sigma}$$
which grows at a rate of $O(d^4)$. For example, if $d = 10$, there are $1,650$ coefficients to estimate and if $d = 30$, the number jumps to $109,275$. One should thus make sure that the number of samples $N$ is sufficiently big for results of the MLE to make sense. To get around this issue when the number of samples is small, one can assume that the covariance matrix $\Sigma$ is diagonal, which reduces the number of coefficients to $O(d^2)$. This assumption implies independent entries of the SPD matrices and will be used in \cref{sec:classif_expe}.

In other works, such as in \citet{chevallierExponentialWrappedDistributionsSymmetric2022} or in \citet{chevallierWrappedStatisticalModels2020}, the authors use the method of moments (see Section 9.2 of \citealt{wassermanAllStatisticsConcise2004}) to estimate the parameter. In our case, we can use the method of moment only when we know \textit{a priori} that $\mu^\star = 0$. Then, as given in \cref{prop:mean_WG}, a mean of the wrapped Gaussian is $p^\star$, therefore, it can be estimated using the Riemannian mean $\hat{p}_N = \mathfrak{G}(x_1,...,x_N)$ \cite{moakherDifferentialGeometricApproach2005}, and $\Sigma$ can be estimated using \cref{prop:MLE_mu_sigma}. However, in a more general case of $\mu^\star \neq 0$, estimating $p^\star$ using the Riemannian mean does not lead to the correct estimation of the true parameters. We give more details on why in \cref{appendix:moment_estimator}.
\subsection{Synthetic experiments}

We led some synthetic experiments to evaluate the MLE's performances. For this, we sampled $N$ points from a wrapped Gaussian in $\P_d$ whose minimal parameters $(p^\star, \mu^\star, \Sigma^\star)$ are known. \new{Here, $\Sigma^\star$ is always chosen to be a full SPD matrix.} Then, we optimize the MLE to find $(\hat{p}_N, \hat{\mu}_N, \hat{\Sigma}_N)$, and compare the true and minimal estimated parameters: for $p$ and $\Sigma$, we compare them using the AIRM distance and for $\mu$, we use $\|\cdot\|_2$. In this experiment, we looked at how the estimation error evolves when the number of samples $N$ grows from $100$ to $10,000$. We also compare different dimensions $d \in \{2,5,10\}$. More details on the experimental setup are given in \cref{appendix:MLE_expe}. The results of this experiment can be found in \cref{fig:res_expe_MLE}. We can see that, as one would expect, as the number $N$ of points sampled grows, the estimation error decreases. Moreover, we can remark that the dimension $d$ does not affect the results of the estimation of $p$ and $\mu$, but really affects the estimation of $\Sigma$. The higher the dimension $d$, the higher the error $\delta(\hat{\Sigma}_N,\Sigma^\star)$ is. This is coherent as the number of parameters of $\Sigma$ grows as $O(d^4)$, so even a small increase of the dimension $d$ leads to an important increment in the estimation error of $\Sigma$. We also led some experiments in the case where the covariance matrix $\Sigma$ is diagonal. We show, in \cref{appendix:esimation_sigma_diag}, that in this case, one needs fewer samples to have a good estimation of $\Sigma$ when the dimension $d$ rises.

\section{Classification using wrapped Gaussians}
\label{sec:classification}

\begin{table*}[ht]
    \centering
    \resizebox{0.8 \textwidth}{!}{%
    \begin{tabular}{c||ccccc}
    \toprule
    Dataset       & Application           & Dimension      & \# matrices & \# classes & Reference \\ \midrule

    BNCI2014004   & BCI                   & $3 \times 3$   & $720$ x 9 subjects   & $2$ & \cite{BNCI2014004}\\ 
    Zhou2016      & BCI                   & $5 \times 5$   & $320$ x 4 subjects  & $2$ & \cite{Zhou2016}\\ 
    AirQuality & Atmospheric data & $6 \times 6$ & $102$ & $3$ &  \cite{smithMultiSiteMultiPollutantAtmospheric2022} \\ 
    Indiana Pines & Hyperspectral imaging & $5 \times 5$   & $14,641 $   & $12$ &  \cite{InidianaPines} \\  
    Pavia Univ. & Hyperspectral imaging & $5 \times 5$   & $185,176  $  & $6$ &  - \\ 
    Salinas       & Hyperspectral imaging & $5 \times 5$   & $94,184 $   & $17$ & - \\  
    Textile       & Image Analysis     & $10 \times 10$ & $16,000 $             & $2$ & \cite{MVTec2021}\\ 
    BreizhCrops & \new{Multispectral} imaging & $13 \times 13$   & $177,658 $  &$ 6$ & \cite{breizhcrops2020}\\  \bottomrule 
    \end{tabular}%
    }
    \caption{Summary of the datasets used for the experiments.}
    \label{table:summary_datasets}
\end{table*}

In this section, we demonstrate that widely used geometry-aware classifiers on $\P_d$ can be integrated into a probabilistic framework. We also introduce new classifiers based on the wrapped Gaussian and conduct experiments on real-world data from various applications.

\subsection{Classifiers used for SPD matrices}
\label{sec:classifers_SPD}
\paragraph{MDM}
The \emph{Minimum Distance to Mean} (MDM) algorithm described in \citet{barachantRiemannianGeometryApplied2010} is a popular classifier for SPD matrices. Given a training set of labeled SPD matrices, the MDM computes the Riemannian mean \cite{moakherDifferentialGeometricApproach2005} $\mathfrak{G}_k$ of each class $k \in \{1,...,K\}$. Then, given a new SPD matrix $p$, the predicted class $\hat{k}$ is the class for which the distance between $p$ and $\mathfrak{G}_k$ is the smallest. 
\paragraph{LDA and QDA} 
In a Euclidean setting, the \emph{Linear Discriminant Analysis} (LDA) (section 4.3 of \citealt{hastieElementsStatisticalLearning2009}) is a classifier that assumes that each class $k \in \{1,...,K\}$ is modeled by a multivariate Gaussian $\mathcal{N}(\mu_k, \Sigma)$ where the covariance matrix $\Sigma$ is shared among all the classes. First, the parameters of each class are learned using an MLE on the training data. Then, to classify a new point $z$, LDA compares the log-likelihood of $z$ according to each class and chooses the class that has the highest log-likelihood. As the covariance matrix is shared among the classes, the decision boundaries are linear, which led to its name: \emph{linear} discriminant analysis. When one assumes that the covariance matrices are not equal among the classes, i.e. each class is modeled by $\mathcal{N}(\mu_k, \Sigma_k)$, the decision boundaries are quadratic. This classifier is called \emph{Quadratic Discriminant Analysis} (QDA).
One can restrict the covariance matrices to be diagonal, which leads to the \emph{Diagonal LDA} and \emph{Diagonal QDA} classifiers \cite{dudoitComparisonDiscriminationMethods2002}. Then, the Diagonal QDA classifier is equivalent to the Gaussian Naive Bayes classifier (see chapter 8 sec 3.3 of \citealt{bishop2007}).

A possible extension of LDA (or QDA) to the manifold of SPD matrices $\P_d$ is call \emph{Tangent Space LDA} (or \emph{Tangent Space QDA}) and is described in part IV B. of \citet{barachantMulticlassBrainComputer2012}. The Riemannian mean $\mathfrak{G}$ of the training set is computed, and all training points are sent to the tangent space $T_\mathfrak{G} \P_d$ \emph{via} the logarithm map $\Log_{\mathfrak{G}}$. Then, a classical LDA (or QDA) can be used in this Euclidean space.
\paragraph{Other classifiers from SPD matrices} 
Other classifiers that have been developed for SPD matrices. For example, Multinomial Logistics Regression has been extended for SPD matrices in \citet{chenRiemannianMultinomialLogistics}. They rely on metric that are pulled back from the Euclidean space which is not the case of the AIRM metric we use in our work.  
Several deep learning approaches have been proposed to classify SPD matrices \cite{huang2017riemannian, brooksRiemannianBatchNormalization2019, nguyenGeomNet}. However, most of these approaches distort the geometry of the manifold and are out of the scope of this work, as our approach does not rely on deep learning.

\begin{table*}
    \centering
    \resizebox{0.85\textwidth}{!}{%
    \begin{tabular}{l||ccccc}
        \toprule
        Dataset &                   Acc. MDM &                Acc. TS-LDA &                   Acc. TS-QDA &              Acc. Ho-WDA &              Acc. He-WDA \\
        \midrule
        BNCI2014004 &  $78.71~ (\pm 14.53)$ &  $\mathbf{78.73}~ (\pm 14.52)$ &          $76.07~ (\pm 13.94)$ &          $75.38~ (\pm 14.40)$ &          $74.37~ (\pm 14.73)$ \\
        Zhou2016    &   $91.18~ (\pm 5.51)$ &   $\mathbf{91.21}~ (\pm 5.50)$ &           $89.45~ (\pm 7.43)$ &           $85.92~ (\pm 9.20)$ &          $82.86~ (\pm 11.65)$ \\
        Air Quality &   $94.05~ (\pm 6.53)$ &            $94.05~ (\pm 6.53)$ &  $\mathbf{97.05}~ (\pm 4.45)$ &           $96.05~ (\pm 4.17)$ &           $\mathbf{97.00}~ (\pm 4.47)$ \\
        Indiana     &   $58.01~ (\pm 0.72)$ &            $67.07~ (\pm 0.53)$ &           $73.38~ (\pm 0.45)$ &           $73.74~ (\pm 0.51)$ &  $\mathbf{74.30}~ (\pm 0.83)$ \\
        Pavia Uni.  &   $72.32~ (\pm 0.59)$ &            $84.61~ (\pm 0.06)$ &           $87.16~ (\pm 0.09)$ &  $\mathbf{87.36}~ (\pm 0.07)$ &           $85.54~ (\pm 0.15)$ \\
        Salinas     &   $36.42~ (\pm 0.12)$ &            $46.30~ (\pm 0.17)$ &           $69.87~ (\pm 0.21)$ &  $\mathbf{71.20}~ (\pm 0.33)$ &           $62.39~ (\pm 0.22)$ \\
        Textile     &   $83.08~ (\pm 0.62)$ &            $83.12~ (\pm 0.63)$ &           $86.03~ (\pm 0.66)$ &  $\mathbf{86.26}~ (\pm 0.59)$ &           $85.93~ (\pm 0.77)$ \\
        BreizhCrops &   $45.48~ (\pm 0.25)$ &            $47.67~ (\pm 0.32)$ &           $50.72~ (\pm 0.28)$ &  $\textbf{54.66}~ (\pm 0.45)$ &           $51.33~ (\pm 0.58)$ \\
        \bottomrule
    \end{tabular}%
    }
    \caption{Accuracy of the different classifiers on the different datasets we consider.}
    \label{table:res_real_data}
    \end{table*}

\subsection{A general probabilistic framework}

Our goal is to show that the MDM, Tangent Space LDA and Tangent Space QDA can be seen as part of a probabilistic framework on the manifold of SPD matrices. More precisely, we will show that the previous classifiers can be rewritten as Maximum Likelihood based classifiers (like the classical LDA or QDA) where the different classes are modeled using distributions on the manifold $\P_d$. Let us consider $K$ classes of labeled SPD matrices and let us denote $\alpha_k$ the modeled distribution of class $k$. 
\paragraph{MDM} For the MDM classifier, we first need to recall the isotropic Gaussians on $\P_d$ introduced in \citet{saidGaussianDistributionsRiemannian2018}. Let $\bar{y} \in \P_d$ and $\sigma > 0$, then, the isotropic Gaussian denoted $G(\bar{y}, \sigma)$ is defined by the following density:
$$\forall y \in \P_d,~ f_{\bar{y}, \sigma}(y) = \frac{1}{\zeta(\sigma)}\exp\left[-\frac{\delta(y, \bar{y})^2}{2\sigma^2}\right]$$
where $\zeta(\sigma)$ is a normalizing constant. This normalizing constant depends only on $\sigma$, \new{and not on $\bar{y}$ as shown in proposition 1 of \citet{saidGaussianDistributionsRiemannian2018}}. If one supposes that each class is modeled by an isotropic Gaussian with a shared $\sigma$ among all the classes i.e.
$\alpha_k = G(\bar{y}_k, \sigma)$
then, the MDM is equivalent to a maximum likelihood classifier. Here, the spread $\sigma$ of the isotropic Gaussians does not play any role in the classification process. So, during training, one only has to estimate $\bar{y}_k$ for each class, which is the center of mass and can be estimated using the Riemannian mean (see proposition 7 of \citealt{saidGaussianDistributionsRiemannian2018}). 
\paragraph{Tangent Space LDA or QDA}
For the Tangent Space LDA, we will leverage the Wrapped Gaussians introduced in \cref{sec:wrapped_gaussains}. Suppose that each class is modeled by a wrapped Gaussian centered at $\mathfrak G$, the Riemannian mean of the training set, and with a shared covariance matrix $\Sigma$ for the Tangent Space LDA i.e.
$\alpha_k = \WG(\mathfrak G; \mu_k, \Sigma)$
or with one covariance matrix $\Sigma_k$ per class for the Tangent Space QDA i.e. 
$\alpha_k = \WG(\mathfrak G; \mu_k, \Sigma_k).$
Then, the Tangent Space LDA (or Tangent Space QDA) is a maximum likelihood classifier based on those distributions.

\subsection{Wrapped Discriminant Analysis}
Having placed the various classifiers that are used on the manifold of SPD matrices in a probabilistic framework, we propose a new maximum likelihood classifier based on the wrapped Gaussians introduced in \cref{sec:wrapped_gaussains}. First, let us model each class by a wrapped Gaussian with a shared covariance matrix $\Sigma$ among the classes:
$\alpha_k = \WG(p_k; \mu_k, \Sigma).$
To learn the parameters of each class, we optimize an MLE on the whole model to find the parameters:
\begin{equation*}
    (p_1,...,p_K, \mu_1,...,\mu_K, \Sigma) \in \underset{\bm{p}, \bm{\mu}, \Sigma}{\text{argmax}} \prod_{k=1}^{K} \prod_{i=1}^{N_k} f_{p_k; \mu_k, \Sigma}(x_i^k)
\end{equation*}

where $(x_i^k)_{i = 1,..,N_k}$ are the training points of class $k$. The implementation is the same as in \cref{sec:estimation}. We call it the \emph{Homogeneous Wrapped  Discriminant Analysis} (Ho-WDA).

As for the QDA, we propose another version of this classifier where each class has its own covariance matrix $\Sigma_k$: 
$\alpha_k = \WG(p_k; \mu_k, \Sigma_k).$
We call this classifier the \emph{Heterogeneous Wrapped Discriminant Analysis} (He-WDA). In that case, an MLE is optimized on each class individually, as in \cref{sec:estimation}: 
\begin{equation*}
    \forall k \in \{1,...,K\},~(p_k, \mu_k, \Sigma_k)\in \underset{p, \mu, \Sigma}{\text{argmax}} \prod_{i=1}^{N_k} f_{p;\mu,\Sigma}(x_i^k).
\end{equation*}
\vspace{-0.7cm}
\subsection{Experiments}
\label{sec:classif_expe}

In this section, we want to compare the Ho-WDA and He-WDA to the other classifiers (MDM, Tangent Space LDA denoted TS-LDA and Tangent Space QDA denoted TS-QDA) used on the manifold of SPD matrices $\P_d$ and detailed in \cref{sec:classifers_SPD}. For this, we lead some experiments on 8 different real datasets coming from several applications.
We give a summary of the datasets used in \cref{table:summary_datasets} and more details on each one of them in \cref{appendix:expe_real_data}. For this experiment, we restricted ourselves to the case where the covariance matrices $\Sigma$ are diagonal. Therefore, one has fewer coefficients to estimate and the MLE needs less points to converge. 
We give the accuracy of the classifiers we study on the different datasets at \cref{table:res_real_data}.

We can see two different behaviors. First, on the datasets with a lot of matrices (Textile, Salinas, Indiana Pines, Pavia Univ., BreizhCrops), the Ho-WDA and He-WDA perform the best. The number of parameters to estimate is high, so the more samples one has, the better the estimation will be as we illustrated in the synthetic experiments at \cref{sec:estimation}. For BreizhCrops, even if the matrices are of size $13 \times 13$, we have a lot of points ($177,658$) so the MLE is able to correctly estimate all the parameters.
Secondly, on the BCI datasets (BNCI2014004 and Zhou2016), we have significantly less points (less than $1,000$) so the estimation of the parameters of the underlying wrapped Gaussians is less precise. In this case, one can see that the MDM and the TS-LDA perform the best and the Ho-WDA and He-WDA perform less well. However, on the AirQuality dataset, the Ho-WDA and He-WDA perform the best. This is interesting as the number of matrices available in the dataset is small ($102$). An explanation could be that the underlying distribution of the data is not very complex, so a few points are enough to correctly estimate them. In BCI datasets, the distribution is more complex, so one needs more points to correctly estimate the distribution. Finally, we do not observe a clear dominance of the He-WDA over the Ho-WDA. This is similar to the difference between the LDA and QDA where, often, an LDA can correctly classify data. \new{In theory, the Ho-WDA should be a special case of the He-WDA where the covariance matrices for each class are the same. However, in practice, as the number of points per class can be small and as the number of parameters to estimate is higher for the He-WDA, the estimation of the covariance matrices per class can be  noisy and the performance of the He-WDA can be worse than that of the Ho-WDA. For example, for the dataset Salinas, some classes have only a few hundred samples, which is not enough to estimate the covariance matrix accurately. The Ho-WDA, on the other hand, estimates a single covariance matrix for all the classes and is less sensitive to the number of samples per class.}
\section{Conclusion}

In this work, we present a generalization of non-isotropic multivariate Gaussians on the manifold of SPD matrices: \emph{Wrapped Gaussians}, and we give some theoretical properties. We solved the non-identifiability of our model by defining an equivalence relation between the set of parameters that define the same wrapped Gaussian. We also give all the tools needed to use the distribution in practice, such as an easy-to-use sampling algorithm or an MLE that correctly estimates the parameters of a wrapped Gaussian. 
Finally, we showed that the MDM, TS-LDA and TS-QDA classifiers can be seen as part of a probabilistic framework using wrapped Gaussians. We introduced two new classifiers based on the wrapped Gaussian: Ho-WDA and He-WDA. We showed that the Ho-WDA and He-WDA perform well on real data when the number of samples is sufficient.
In future work, we plan to investigate the use of wrapped Gaussians to perform data augmentation or transfer learning. Moreover, as we have developed in the paper a geometry-aware Gaussian distribution, it becomes possible to extend all the classical machine learning models that rely on Gaussian distributions to the manifold of SPD matrices. \new{We are aware that computing exponential and logarithmic maps remains a bottleneck in SPD matrix geometry. However, a trade-off may exist between computational cost and performance gains. Theoretically, Euclidean Gaussian-based methods extend to $\P_d$ \emph{via} our wrapped Gaussian, though practical challenges will arise in applications, requiring careful choices.} For example, one could develop Gaussian Mixture Models, Hidden Markov Models or Variational Autoencoders on $\P_d$ using wrapped Gaussians. These models could then be applied to various tasks such as clustering, sequence modeling, or anomaly detection on manifold-valued data. It could also be possible to explore the use of wrapped Gaussians in the context of Bayesian inference, which could open new avenues for probabilistic modeling and uncertainty quantification in manifold-based data analysis. 

\section*{Acknowledgments}
\new{This work was funded by the French National Research Agency for project PROTEUS (grant ANR-22-CE33-0015-01). Part of this work was carried out while Florian Yger was a member of PSL-Dauphine University, and he acknowledges the support of the ANR as part of the ``Investissements d’avenir" program, reference ANR-19-P3IA-0001 (PRAIRIE 3IA Institute). Sylvain Chevallier is supported by DATAIA (ANR-17-CONV-0003). The authors would also like to thank Bastien Cavarretta that helped inspire \cref{theo:wrapped_CLT} and Lucas Gnecco Heredia for his help in proofreading the paper.} 

\section*{Impact statement}
This work introduces a flexible and theoretically grounded extension of Gaussian distributions to the Riemannian manifold of Symmetric Positive Definite (SPD) matrices, which appear in numerous application domains such as neuroimaging (e.g., EEG/BCI), remote sensing, atmospheric modeling, and computer vision. By leveraging a non-isotropic wrapped Gaussian model, our approach respects the intrinsic geometry of SPD matrices, offering a more principled statistical modeling framework for manifold-valued data.

The proposed distribution, along with the associated maximum likelihood estimator and probabilistic classifiers, can enhance the interpretability, robustness, and effectiveness of machine learning models in applications where geometric constraints are crucial. For example, in Brain-Computer Interfaces (BCI), this work may contribute to more accurate and stable classification of neural signals, potentially benefiting assistive technologies. In environmental sciences, the method can aid in more accurate statistical modeling of air quality data represented as covariance matrices.

However, as with any advancement in data modeling, especially in sensitive domains such as neuroscience, there is a possibility that improved interpretability or classification performance could be misused. For instance, fine-grained neural decoding could be exploited for persuasive technologies or behavioral profiling, raising ethical concerns around privacy and consent. We thus emphasize the importance of applying these techniques within responsible and ethically guided frameworks.

\bibliographystyle{icml2025}
\bibliography{biblio}

\begin{thebibliography}{73}
\providecommand{\natexlab}[1]{#1}
\providecommand{\url}[1]{\texttt{#1}}
\expandafter\ifx\csname urlstyle\endcsname\relax
  \providecommand{\doi}[1]{doi: #1}\else
  \providecommand{\doi}{doi: \begingroup \urlstyle{rm}\Url}\fi

\bibitem[Aristimunha et~al.(2023)Aristimunha, Carrara, Guetschel, Sedlar, Rodrigues, Sosulski, Narayanan, Bjareholt, Quentin, Schirrmeister, Kalunga, Darmet, Gregoire, Abdul~Hussain, Gatti, Goncharenko, Thielen, Moreau, Roy, Jayaram, Barachant, and Chevallier]{Aristimunha_Mother_of_all_2023}
Aristimunha, B., Carrara, I., Guetschel, P., Sedlar, S., Rodrigues, P., Sosulski, J., Narayanan, D., Bjareholt, E., Quentin, B., Schirrmeister, R.~T., Kalunga, E., Darmet, L., Gregoire, C., Abdul~Hussain, A., Gatti, R., Goncharenko, V., Thielen, J., Moreau, T., Roy, Y., Jayaram, V., Barachant, A., and Chevallier, S.
\newblock {Mother of all BCI Benchmarks}, 2023.
\newblock URL \url{https://github.com/NeuroTechX/moabb}.

\bibitem[Arsigny et~al.(2005)Arsigny, Fillard, Pennec, and Ayache]{arsignyFastSimpleCalculus2005}
Arsigny, V., Fillard, P., Pennec, X., and Ayache, N.
\newblock Fast and {{Simple Calculus}} on {{Tensors}} in the {{Log-Euclidean Framework}}.
\newblock In Duncan, J.~S. and Gerig, G. (eds.), \emph{Medical {{Image Computing}} and {{Computer-Assisted Intervention}} -- {{MICCAI}} 2005}, pp.\  115--122, Berlin, Heidelberg, 2005. Springer.
\newblock ISBN 978-3-540-32094-4.
\newblock \doi{10.1007/11566465_15}.

\bibitem[Arsigny et~al.(2006)Arsigny, Fillard, Pennec, and Ayache]{arsignyGeometricMeansNovel2006}
Arsigny, V., Fillard, P., Pennec, X., and Ayache, N.
\newblock Geometric {{Means}} in a {{Novel Vector Space Structure}} on {{Sysmetric Positive-Definite Matrices}}.
\newblock \emph{SIAM J. Matrix Analysis Applications}, 29:\penalty0 328--347, January 2006.
\newblock \doi{10.1137/050637996}.

\bibitem[Barachant et~al.(2010)Barachant, Bonnet, Congedo, and Jutten]{barachantRiemannianGeometryApplied2010}
Barachant, A., Bonnet, S., Congedo, M., and Jutten, C.
\newblock Riemannian geometry applied to {{BCI}} classification.
\newblock In \emph{{{LVA}}/{{ICA}} 2010 - 9th {{International Conference}} on {{Latent Variable Analysis}} and {{Signal Separation}}}, volume 6365, pp.\  629. Springer, September 2010.
\newblock \doi{10.1007/978-3-642-15995-4_78}.

\bibitem[Barachant et~al.(2012)Barachant, Bonnet, Congedo, and Jutten]{barachantMulticlassBrainComputer2012}
Barachant, A., Bonnet, S., Congedo, M., and Jutten, C.
\newblock Multiclass {{Brain}}--{{Computer Interface Classification}} by {{Riemannian Geometry}}.
\newblock \emph{IEEE Transactions on Biomedical Engineering}, 59\penalty0 (4):\penalty0 920--928, April 2012.
\newblock ISSN 0018-9294, 1558-2531.
\newblock \doi{10.1109/TBME.2011.2172210}.

\bibitem[Barachant et~al.(2024)Barachant, Barthélemy, King, Gramfort, Chevallier, Rodrigues, Olivetti, Goncharenko, vom Berg, Reguig, Lebeurrier, Bjäreholt, Yamamoto, Clisson, Corsi, Carrara, Mellot, Lopes, Gaisford, Mian, Andreev, Cattan, and Lebeurrier]{pyriemann}
Barachant, A., Barthélemy, Q., King, J.-R., Gramfort, A., Chevallier, S., Rodrigues, P. L.~C., Olivetti, E., Goncharenko, V., vom Berg, G.~W., Reguig, G., Lebeurrier, A., Bjäreholt, E., Yamamoto, M.~S., Clisson, P., Corsi, M.-C., Carrara, I., Mellot, A., Lopes, B.~J., Gaisford, B., Mian, A., Andreev, A., Cattan, G., and Lebeurrier, A.
\newblock pyriemann, October 2024.
\newblock URL \url{https://doi.org/10.5281/zenodo.593816}.

\bibitem[Baumgardner et~al.(2015)Baumgardner, Biehl, and Landgrebe]{InidianaPines}
Baumgardner, M.~F., Biehl, L.~L., and Landgrebe, D.~A.
\newblock 220 band aviris hyperspectral image data set: June 12, 1992 indian pine test site 3, Sep 2015.
\newblock URL \url{https://purr.purdue.edu/publications/1947/1}.

\bibitem[Bergmann et~al.(2021)Bergmann, Batzner, Fauser, Sattlegger, and Steger]{MVTec2021}
Bergmann, P., Batzner, K., Fauser, M., Sattlegger, D., and Steger, C.
\newblock The mvtec anomaly detection dataset: A comprehensive real-world dataset for unsupervised anomaly detection.
\newblock \emph{Int. J. Comput. Vision}, 129\penalty0 (4):\penalty0 1038–1059, April 2021.
\newblock ISSN 0920-5691.
\newblock \doi{10.1007/s11263-020-01400-4}.
\newblock URL \url{https://doi.org/10.1007/s11263-020-01400-4}.

\bibitem[Bhatia(2007)]{bhatiaPositiveDefiniteMatrices2007}
Bhatia, R.
\newblock \emph{Positive Definite Matrices}.
\newblock Princeton Series in Applied Mathematics. Princeton University Press, Princeton, N.J, 2007.
\newblock ISBN 978-0-691-12918-1.

\bibitem[Bishop(2007)]{bishop2007}
Bishop, C.~M.
\newblock \emph{Pattern Recognition and Machine Learning (Information Science and Statistics)}.
\newblock Springer, 1 edition, 2007.
\newblock ISBN 0387310738.

\bibitem[Bogachev(2007)]{bogachevMeasureTheory2007}
Bogachev, V.~I.
\newblock \emph{Measure {{Theory}}}.
\newblock Springer, Berlin, Heidelberg, 2007.
\newblock ISBN 978-3-540-34513-8 978-3-540-34514-5.
\newblock \doi{10.1007/978-3-540-34514-5}.

\bibitem[Bouchard et~al.(2024)Bouchard, Mian, Tiomoko, Ginolhac, and Pascal]{bouchard2024random}
Bouchard, F., Mian, A., Tiomoko, M., Ginolhac, G., and Pascal, F.
\newblock Random matrix theory improved fr$\backslash$'echet mean of symmetric positive definite matrices.
\newblock \emph{arXiv preprint arXiv:2405.06558}, 2024.

\bibitem[Boumal(2023)]{boumalIntroductionOptimizationSmooth2023}
Boumal, N.
\newblock \emph{An {{Introduction}} to {{Optimization}} on {{Smooth Manifolds}}}.
\newblock Cambridge University Press, 1 edition, March 2023.
\newblock ISBN 978-1-00-916616-4 978-1-00-916617-1 978-1-00-916615-7.
\newblock \doi{10.1017/9781009166164}.

\bibitem[Brooks et~al.(2019)Brooks, Schwander, Barbaresco, Schneider, and Cord]{brooksRiemannianBatchNormalization2019}
Brooks, D., Schwander, O., Barbaresco, F., Schneider, J.-Y., and Cord, M.
\newblock Riemannian batch normalization for spd neural networks.
\newblock In Wallach, H., Larochelle, H., Beygelzimer, A., d\textquotesingle Alch\'{e}-Buc, F., Fox, E., and Garnett, R. (eds.), \emph{Advances in Neural Information Processing Systems}, volume~32. Curran Associates, Inc., 2019.
\newblock URL \url{https://proceedings.neurips.cc/paper_files/paper/2019/file/6e69ebbfad976d4637bb4b39de261bf7-Paper.pdf}.

\bibitem[Casella \& Berger(2001)Casella and Berger]{CaseBerg}
Casella, G. and Berger, R.
\newblock \emph{Statistical Inference}.
\newblock {Duxbury Resource Center}, June 2001.
\newblock ISBN 0534243126.

\bibitem[Chen et~al.(2010)Chen, Wiesel, Eldar, and III]{Chen2010}
Chen, Y., Wiesel, A., Eldar, Y.~C., and III, A. O.~H.
\newblock Shrinkage algorithms for mmse covariance estimation.
\newblock \emph{IEEE Trans. Signal Process.}, 58\penalty0 (10):\penalty0 5016--5029, 2010.
\newblock URL \url{http://dblp.uni-trier.de/db/journals/tsp/tsp58.html#ChenWEH10}.

\bibitem[Chen et~al.(2024)Chen, Song, Liu, Kompella, Wu, and Sebe]{chenRiemannianMultinomialLogistics}
Chen, Z., Song, Y., Liu, G., Kompella, R.~R., Wu, X., and Sebe, N.
\newblock Riemannian multinomial logistics regression for {SPD} neural networks.
\newblock In \emph{Conference on Computer Vision and Pattern Recognition 2024}, 2024.

\bibitem[Chevallier \& Guigui(2020)Chevallier and Guigui]{chevallierWrappedStatisticalModels2020}
Chevallier, E. and Guigui, N.
\newblock Wrapped statistical models on manifolds: Motivations, the case {{SE}}(n), and generalization to symmetric spaces.
\newblock In \emph{Joint {{Structures}} and {{Common Foundations}} of {{Statistical Physics}}, {{Information Geometry}} and {{Inference}} for {{Learning}}}, Les Houches, France, July 2020.

\bibitem[Chevallier et~al.(2022)Chevallier, Li, Lu, and Dunson]{chevallierExponentialWrappedDistributionsSymmetric2022}
Chevallier, E., Li, D., Lu, Y., and Dunson, D.
\newblock Exponential-{{Wrapped Distributions}} on {{Symmetric Spaces}}.
\newblock \emph{SIAM Journal on Mathematics of Data Science}, 4\penalty0 (4):\penalty0 1347--1368, December 2022.
\newblock ISSN 2577-0187.
\newblock \doi{10.1137/21M1461551}.

\bibitem[Chevallier et~al.(2021)Chevallier, Kalunga, Barth{\'e}lemy, and Monacelli]{chevallierReviewRiemannianDistances2021}
Chevallier, S., Kalunga, E.~K., Barth{\'e}lemy, Q., and Monacelli, E.
\newblock Review of {{Riemannian Distances}} and {{Divergences}}, {{Applied}} to {{SSVEP-based BCI}}.
\newblock \emph{Neuroinformatics}, 19\penalty0 (1):\penalty0 93--106, January 2021.
\newblock ISSN 1559-0089.
\newblock \doi{10.1007/s12021-020-09473-9}.

\bibitem[Cho et~al.(2022)Cho, Lee, Park, and Kim]{choRotatedHyperbolicWrapped2022}
Cho, S., Lee, J., Park, J., and Kim, D.
\newblock A rotated hyperbolic wrapped normal distribution for hierarchical representation learning.
\newblock In Oh, A.~H., Agarwal, A., Belgrave, D., and Cho, K. (eds.), \emph{Advances in Neural Information Processing Systems}, 2022.
\newblock URL \url{https://openreview.net/forum?id=rHnbVaqzXne}.

\bibitem[Collas et~al.(2021)Collas, Bouchard, Breloy, Ginolhac, Ren, and Ovarlez]{collas2021}
Collas, A., Bouchard, F., Breloy, A., Ginolhac, G., Ren, C., and Ovarlez, J.-P.
\newblock Probabilistic pca from heteroscedastic signals: Geometric framework and application to clustering.
\newblock \emph{IEEE Transactions on Signal Processing}, 69:\penalty0 6546--6560, 2021.
\newblock \doi{10.1109/TSP.2021.3130997}.

\bibitem[Collett \& Lewis(1981)Collett and Lewis]{collettDiscriminatingMisesWrapped1981}
Collett, D. and Lewis, T.
\newblock Discriminating {{Between}} the {{Von Mises}} and {{Wrapped Normal Distributions}}.
\newblock \emph{Australian Journal of Statistics}, 23\penalty0 (1):\penalty0 73--79, 1981.
\newblock ISSN 1467-842X.
\newblock \doi{10.1111/j.1467-842X.1981.tb00763.x}.

\bibitem[Criscitiello \& Boumal(2023)Criscitiello and Boumal]{criscitielloAcceleratedFirstorderMethod2021}
Criscitiello, C. and Boumal, N.
\newblock An {{Accelerated First-Order Method}} for {{Non-convex Optimization}} on {{Manifolds}}.
\newblock \emph{Foundations of Computational Mathematics}, 23\penalty0 (4):\penalty0 1433--1509, August 2023.
\newblock ISSN 1615-3383.
\newblock \doi{10.1007/s10208-022-09573-9}.

\bibitem[Daletskii \& Krein(1965)Daletskii and Krein]{daletskii1965integration}
Daletskii, J.~L. and Krein, S.~G.
\newblock Integration and differentiation of functions of hermitian operators and applications to the theory of perturbations.
\newblock \emph{AMS Translations (2)}, 47\penalty0 (1-30):\penalty0 10--1090, 1965.

\bibitem[Delmas et~al.(2024)Delmas, El~Korso, Pascal, and Fortunati]{delmas2024elliptically}
Delmas, J.-P., El~Korso, M.~N., Pascal, F., and Fortunati, S.
\newblock Elliptically symmetric distributions in signal processing and machine learning.
\newblock \emph{Springer Nature}, 2024.

\bibitem[Ding \& Regev(2021)Ding and Regev]{dingDeepGenerativeModel2021}
Ding, J. and Regev, A.
\newblock Deep generative model embedding of single-cell {{RNA-Seq}} profiles on hyperspheres and hyperbolic spaces.
\newblock \emph{Nature Communications}, 12\penalty0 (1):\penalty0 2554, May 2021.
\newblock ISSN 2041-1723.
\newblock \doi{10.1038/s41467-021-22851-4}.

\bibitem[Dudoit et~al.(2002)Dudoit, Fridlyand, and Speed]{dudoitComparisonDiscriminationMethods2002}
Dudoit, S., Fridlyand, J., and Speed, T.~P.
\newblock Comparison of discrimination methods for the classification of tumors using gene expression data.
\newblock \emph{Journal of the American Statistical Association}, 97\penalty0 (457):\penalty0 77--87, 2002.
\newblock \doi{10.1198/016214502753479248}.
\newblock URL \url{https://doi.org/10.1198/016214502753479248}.

\bibitem[Galaz-Garcia et~al.(2022)Galaz-Garcia, Papamichalis, Turnbull, Lunagomez, and Airoldi]{galaz-garciaWrappedDistributionsHomogeneous2022}
Galaz-Garcia, F., Papamichalis, M., Turnbull, K., Lunagomez, S., and Airoldi, E.
\newblock Wrapped distributions on homogeneous riemannian manifolds.
\newblock \emph{arXiv preprint arXiv:2204.09790}, 2022.

\bibitem[Hastie et~al.(2009)Hastie, Tibshirani, and Friedman]{hastieElementsStatisticalLearning2009}
Hastie, T., Tibshirani, R., and Friedman, J.
\newblock \emph{The {{Elements}} of {{Statistical Learning}}}.
\newblock Springer {{Series}} in {{Statistics}}. Springer, New York, NY, 2009.
\newblock ISBN 978-0-387-84857-0 978-0-387-84858-7.
\newblock \doi{10.1007/978-0-387-84858-7}.

\bibitem[Hauberg(2018)]{haubergDirectionalStatisticsSpherical2018}
Hauberg, S.
\newblock Directional {{Statistics}} with the {{Spherical Normal Distribution}}.
\newblock In \emph{2018 21st {{International Conference}} on {{Information Fusion}} ({{FUSION}})}, pp.\  704--711, July 2018.
\newblock \doi{10.23919/ICIF.2018.8455242}.

\bibitem[Hua et~al.(2021)Hua, Zhang, {de Foy}, Mei, Shang, and Feng]{huaCompetingPM2NO22021}
Hua, J., Zhang, Y., {de Foy}, B., Mei, X., Shang, J., and Feng, C.
\newblock Competing {{PM2}}.5 and {{NO2}} holiday effects in the {{Beijing}} area vary locally due to differences in residential coal burning and traffic patterns.
\newblock \emph{Science of The Total Environment}, 750:\penalty0 141575, January 2021.
\newblock ISSN 0048-9697.
\newblock \doi{10.1016/j.scitotenv.2020.141575}.

\bibitem[Huang \& Van~Gool(2017)Huang and Van~Gool]{huang2017riemannian}
Huang, Z. and Van~Gool, L.
\newblock A riemannian network for spd matrix learning.
\newblock In \emph{Proceedings of the AAAI conference on artificial intelligence}, volume~31, 2017.

\bibitem[Jo \& Hwang(2024)Jo and Hwang]{jo2024riemannian}
Jo, J. and Hwang, S.~J.
\newblock Generative modeling on manifolds through mixture of riemannian diffusion processes.
\newblock In \emph{International Conference on Machine Learning}, 2024.

\bibitem[Johnson(1987)]{johnson1987multivariate}
Johnson, M.
\newblock \emph{Multivariate Statistical Simulation: A Guide to Selecting and Generating Continuous Multivariate Distributions}.
\newblock Wiley Series in Probability and Statistics. Wiley, 1987.
\newblock ISBN 978-0-471-82290-5.

\bibitem[{Jona-Lasinio} et~al.(2012){Jona-Lasinio}, Gelfand, and {Jona-Lasinio}]{jona-lasinioSpatialAnalysisWave2012}
{Jona-Lasinio}, G., Gelfand, A., and {Jona-Lasinio}, M.
\newblock Spatial analysis of wave direction data using wrapped {{Gaussian}} processes.
\newblock \emph{The Annals of Applied Statistics}, 6\penalty0 (4), December 2012.
\newblock ISSN 1932-6157.
\newblock \doi{10.1214/12-AOAS576}.

\bibitem[Kagan et~al.(1973)Kagan, Linnik, and Rao]{kagan1973characterization}
Kagan, A., Linnik, I., and Rao, C.
\newblock \emph{Characterization Problems in Mathematical Statistics}.
\newblock Probability and Statistics Series. Wiley, 1973.
\newblock ISBN 978-0-471-45421-2.

\bibitem[Ledoit \& Wolf(2004)Ledoit and Wolf]{ledoitWellconditionedEstimatorLargedimensional2004}
Ledoit, O. and Wolf, M.
\newblock A well-conditioned estimator for large-dimensional covariance matrices.
\newblock \emph{Journal of Multivariate Analysis}, 88\penalty0 (2):\penalty0 365--411, February 2004.
\newblock ISSN 0047259X.
\newblock \doi{10.1016/S0047-259X(03)00096-4}.

\bibitem[Lee(2018)]{leeIntroductionRiemannianManifolds2018}
Lee, J.~M.
\newblock \emph{Introduction to {{Riemannian Manifolds}}}, volume 176 of \emph{Graduate {{Texts}} in {{Mathematics}}}.
\newblock Springer International Publishing, Cham, 2018.
\newblock ISBN 978-3-319-91754-2 978-3-319-91755-9.
\newblock \doi{10.1007/978-3-319-91755-9}.

\bibitem[Leeb et~al.(2007)Leeb, Lee, Keinrath, Scherer, Bischof, and Pfurtscheller]{BNCI2014004}
Leeb, R., Lee, F., Keinrath, C., Scherer, R., Bischof, H., and Pfurtscheller, G.
\newblock Brain–computer communication: Motivation, aim, and impact of exploring a virtual apartment.
\newblock \emph{IEEE Transactions on Neural Systems and Rehabilitation Engineering}, 15\penalty0 (4):\penalty0 473--482, 2007.
\newblock \doi{10.1109/TNSRE.2007.906956}.

\bibitem[Lehmann \& Casella(1998)Lehmann and Casella]{LehmCase98}
Lehmann, E.~L. and Casella, G.
\newblock \emph{Theory of Point Estimation}.
\newblock Springer-Verlag, New York, NY, USA, second edition, 1998.

\bibitem[Liu et~al.(2024)Liu, Liu, Shi, and Nye]{liuWrappedGaussianProcess2024}
Liu, J., Liu, C., Shi, J.~Q., and Nye, T.
\newblock Wrapped gaussian process functional regression model for batch data on riemannian manifolds.
\newblock \emph{arXiv preprint arXiv:2409.03181}, 2024.

\bibitem[Lopez-Custodio(2024)]{lopezcustodio2024cheatsheetprobabilitydistributions}
Lopez-Custodio, P.
\newblock A cheat sheet for probability distributions of orientational data.
\newblock \emph{arXiv preprint arXiv:2412.08934}, 2024.

\bibitem[Lotte et~al.(2018)Lotte, Bougrain, Cichocki, Clerc, Congedo, Rakotomamonjy, and Yger]{lotteReviewClassificationAlgorithms2018}
Lotte, F., Bougrain, L., Cichocki, A., Clerc, M., Congedo, M., Rakotomamonjy, A., and Yger, F.
\newblock A review of classification algorithms for {{EEG-based}} brain--computer interfaces: A 10 year update.
\newblock \emph{Journal of Neural Engineering}, 15\penalty0 (3):\penalty0 031005, April 2018.
\newblock ISSN 1741-2552.
\newblock \doi{10.1088/1741-2552/aab2f2}.

\bibitem[Mallasto \& Feragen(2018)Mallasto and Feragen]{mallastoWrappedGaussianProcess2018}
Mallasto, A. and Feragen, A.
\newblock Wrapped {{Gaussian Process Regression}} on {{Riemannian Manifolds}}.
\newblock In \emph{2018 {{IEEE}}/{{CVF Conference}} on {{Computer Vision}} and {{Pattern Recognition}}}, pp.\  5580--5588, Salt Lake City, UT, June 2018. IEEE.
\newblock ISBN 978-1-5386-6420-9.
\newblock \doi{10.1109/CVPR.2018.00585}.

\bibitem[Mardia \& Jupp(2000)Mardia and Jupp]{mardiaDirectionalStatistics2000}
Mardia, K.~V. and Jupp, P.~E.
\newblock \emph{Directional Statistics}.
\newblock Wiley Series in Probability and Statistics. Wiley, Chichester, new ed. edition, 2000.
\newblock ISBN 978-0-471-95333-3.

\bibitem[Mathieu et~al.(2019)Mathieu, Le~Lan, Maddison, Tomioka, and Teh]{mathieuContinuousHierarchicalRepresentations2019}
Mathieu, E., Le~Lan, C., Maddison, C.~J., Tomioka, R., and Teh, Y.~W.
\newblock Continuous {{Hierarchical Representations}} with {{Poincar{\'e} Variational Auto-Encoders}}.
\newblock In \emph{Advances in {{Neural Information Processing Systems}}}, volume~32. Curran Associates, Inc., 2019.

\bibitem[Moakher(2005)]{moakherDifferentialGeometricApproach2005}
Moakher, M.
\newblock A {{Differential Geometric Approach}} to the {{Geometric Mean}} of {{Symmetric Positive-Definite Matrices}}.
\newblock \emph{SIAM Journal on Matrix Analysis and Applications}, 26\penalty0 (3):\penalty0 735--747, January 2005.
\newblock ISSN 0895-4798, 1095-7162.
\newblock \doi{10.1137/S0895479803436937}.

\bibitem[Nagano et~al.(2019)Nagano, Yamaguchi, Fujita, and Koyama]{naganoWrappedNormalDistribution2019}
Nagano, Y., Yamaguchi, S., Fujita, Y., and Koyama, M.
\newblock A wrapped normal distribution on hyperbolic space for gradient-based learning.
\newblock In \emph{International Conference on Machine Learning}, pp.\  4693--4702. PMLR, 2019.

\bibitem[Nguyen(2021)]{nguyenGeomNet}
Nguyen, X.~S.
\newblock {GeomNet: A Neural Network Based on Riemannian Geometries of SPD Matrix Space and Cholesky Space for 3D Skeleton-Based Interaction Recognition}.
\newblock In \emph{{2021 IEEE/CVF International Conference on Computer Vision (ICCV)}}, Montreal, Canada, October 2021.
\newblock \doi{10.1109/ICCV48922.2021.01313}.
\newblock URL \url{https://hal.science/hal-03720244}.

\bibitem[Papillon et~al.(2025)Papillon, Sanborn, Mathe, Cornelis, Bertics, Buracas, J~Lillemark, Shewmake, Dinc, Pennec, and Miolane]{sanborn2024beyond}
Papillon, M., Sanborn, S., Mathe, J., Cornelis, L., Bertics, A., Buracas, D., J~Lillemark, H., Shewmake, C., Dinc, F., Pennec, X., and Miolane, N.
\newblock Beyond euclid: an illustrated guide to modern machine learning with geometric, topological, and algebraic structures.
\newblock \emph{Machine Learning: Science and Technology}, 6\penalty0 (3):\penalty0 031002, aug 2025.
\newblock \doi{10.1088/2632-2153/adf375}.
\newblock URL \url{https://dx.doi.org/10.1088/2632-2153/adf375}.

\bibitem[Pedregosa et~al.(2011)Pedregosa, Varoquaux, Gramfort, Michel, Thirion, Grisel, Blondel, Prettenhofer, Weiss, Dubourg, Vanderplas, Passos, Cournapeau, Brucher, Perrot, and Duchesnay]{scikit-learn}
Pedregosa, F., Varoquaux, G., Gramfort, A., Michel, V., Thirion, B., Grisel, O., Blondel, M., Prettenhofer, P., Weiss, R., Dubourg, V., Vanderplas, J., Passos, A., Cournapeau, D., Brucher, M., Perrot, M., and Duchesnay, E.
\newblock Scikit-learn: Machine learning in {P}ython.
\newblock \emph{Journal of Machine Learning Research}, 12:\penalty0 2825--2830, 2011.

\bibitem[Pennec(2006)]{pennecIntrinsicStatisticsRiemannian2006}
Pennec, X.
\newblock Intrinsic {{Statistics}} on {{Riemannian Manifolds}}: {{Basic Tools}} for {{Geometric Measurements}}.
\newblock \emph{Journal of Mathematical Imaging and Vision}, 25\penalty0 (1):\penalty0 127--154, July 2006.
\newblock ISSN 0924-9907, 1573-7683.
\newblock \doi{10.1007/s10851-006-6228-4}.

\bibitem[Pennec(2019)]{pennecCurvatureEffectsEmpirical2019}
Pennec, X.
\newblock Curvature effects on the empirical mean in riemannian and affine manifolds: a non-asymptotic high concentration expansion in the small-sample regime.
\newblock \emph{arXiv preprint arXiv:1906.07418}, 2019.

\bibitem[Pennec(2020)]{pennecManifoldvaluedImageProcessing2020}
Pennec, X.
\newblock {{Manifold-valued}} image processing with {{SPD}} matrices.
\newblock In Pennec, X., Sommer, S., and Fletcher, T. (eds.), \emph{Riemannian {{Geometric Statistics}} in {{Medical Image Analysis}}}, pp.\  75--134. Academic Press, January 2020.
\newblock ISBN 978-0-12-814725-2.
\newblock \doi{10.1016/B978-0-12-814725-2.00010-8}.

\bibitem[Peyr{\'e} \& Cuturi(2020)Peyr{\'e} and Cuturi]{peyreComputationalOptimalTransport2020}
Peyr{\'e}, G. and Cuturi, M.
\newblock Computational {{Optimal Transport}}, March 2020.

\bibitem[Pfurtscheller \& Neuper(2001)Pfurtscheller and Neuper]{motor_imagery}
Pfurtscheller, G. and Neuper, C.
\newblock Motor imagery and direct brain-computer communication.
\newblock \emph{Proceedings of the IEEE}, 89\penalty0 (7):\penalty0 1123--1134, 2001.
\newblock \doi{10.1109/5.939829}.

\bibitem[Ru{\ss}wurm et~al.(2020)Ru{\ss}wurm, Pelletier, Zollner, Lef{\`e}vre, and K{\"o}rner]{breizhcrops2020}
Ru{\ss}wurm, M., Pelletier, C., Zollner, M., Lef{\`e}vre, S., and K{\"o}rner, M.
\newblock Breizhcrops: A time series dataset for crop type mapping.
\newblock \emph{International Archives of the Photogrammetry, Remote Sensing and Spatial Information Sciences ISPRS (2020)}, 2020.

\bibitem[Said et~al.(2018)Said, Hajri, Bombrun, and Vemuri]{saidGaussianDistributionsRiemannian2018}
Said, S., Hajri, H., Bombrun, L., and Vemuri, B.~C.
\newblock Gaussian {{Distributions}} on {{Riemannian Symmetric Spaces}}: {{Statistical Learning With Structured Covariance Matrices}}.
\newblock \emph{IEEE Transactions on Information Theory}, 64\penalty0 (2):\penalty0 752--772, February 2018.
\newblock ISSN 1557-9654.
\newblock \doi{10.1109/TIT.2017.2713829}.

\bibitem[Shiga(1984)]{shigaHadamardManifolds1984}
Shiga, K.
\newblock Hadamard {{Manifolds}}.
\newblock In \emph{Geometry of {{Geodesics}} and {{Related Topics}}}, volume~3, pp.\  239--282. Mathematical Society of Japan, January 1984.
\newblock \doi{10.2969/aspm/00310239}.

\bibitem[Smith et~al.(2022)Smith, Hua, {de Foy}, Schauer, and Zavala]{smithMultiSiteMultiPollutantAtmospheric2022}
Smith, A., Hua, J., {de Foy}, B., Schauer, J., and Zavala, V.
\newblock Multi-{{Site}}, {{Multi-Pollutant Atmospheric Data Analysis Using Riemannian Geometry}}, December 2022.

\bibitem[Smith et~al.(2023)Smith, Hua, de~Foy, Schauer, and Zavala]{smithDataAnalysisUsing2022}
Smith, A., Hua, J., de~Foy, B., Schauer, J.~J., and Zavala, V.~M.
\newblock Multi-site, multi-pollutant atmospheric data analysis using riemannian geometry.
\newblock \emph{Science of The Total Environment}, 892:\penalty0 164064, 2023.

\bibitem[Sra(2015)]{sraPositiveDefiniteMatrices2015}
Sra, S.
\newblock Positive definite matrices and the {{S-divergence}}.
\newblock \emph{Proceedings of the American Mathematical Society}, 144\penalty0 (7):\penalty0 2787--2797, October 2015.
\newblock ISSN 0002-9939, 1088-6826.
\newblock \doi{10.1090/proc/12953}.

\bibitem[Sra \& Hosseini(2015)Sra and Hosseini]{sraConicGeometricOptimisation2015}
Sra, S. and Hosseini, R.
\newblock Conic geometric optimisation on the manifold of positive definite matrices.
\newblock \emph{SIAM Journal on Optimization}, 25\penalty0 (1):\penalty0 713--739, January 2015.
\newblock ISSN 1052-6234, 1095-7189.
\newblock \doi{10.1137/140978168}.

\bibitem[Terras(1988)]{terrasHarmonicAnalysisSymmetric1988}
Terras, A.
\newblock \emph{Harmonic {{Analysis}} on {{Symmetric Spaces}} and {{Applications II}}}.
\newblock Springer, New York, NY, 1988.
\newblock ISBN 978-0-387-96663-2 978-1-4612-3820-1.
\newblock \doi{10.1007/978-1-4612-3820-1}.

\bibitem[Thanwerdas(2022)]{thanwerdasRiemannianStratifiedGeometries2022}
Thanwerdas, Y.
\newblock \emph{Riemannian and Stratified Geometries on Covariance and Correlation Matrices}.
\newblock PhD thesis, Universit{\'e} C{\^o}te d'Azur, 2022.

\bibitem[Townsend et~al.(2016)Townsend, Koep, and Weichwald]{pymanopt}
Townsend, J., Koep, N., and Weichwald, S.
\newblock Pymanopt: A python toolbox for optimization on manifolds using automatic differentiation.
\newblock \emph{Journal of Machine Learning Research}, 17\penalty0 (137):\penalty0 1–5, 2016.
\newblock URL \url{http://jmlr.org/papers/v17/16-177.html}.

\bibitem[Troshin \& Niculae(2023)Troshin and Niculae]{troshin2023wrapped}
Troshin, S. and Niculae, V.
\newblock Wrapped \${\textbackslash}beta\$-gaussians with compact support for exact probabilistic modeling on manifolds.
\newblock \emph{Transactions on Machine Learning Research}, 2023.
\newblock ISSN 2835-8856.
\newblock URL \url{https://openreview.net/forum?id=KrequDpWzt}.

\bibitem[Turaga et~al.(2011)Turaga, Veeraraghavan, Srivastava, and Chellappa]{turagaStatisticalComputationsGrassmann2011}
Turaga, P., Veeraraghavan, A., Srivastava, A., and Chellappa, R.
\newblock Statistical {{Computations}} on {{Grassmann}} and {{Stiefel Manifolds}} for {{Image}} and {{Video-Based Recognition}}.
\newblock \emph{IEEE Transactions on Pattern Analysis and Machine Intelligence}, 33\penalty0 (11):\penalty0 2273--2286, November 2011.
\newblock ISSN 0162-8828, 2160-9292.
\newblock \doi{10.1109/TPAMI.2011.52}.

\bibitem[Tuzel et~al.(2008)Tuzel, Porikli, and Meer]{Tuzel2008}
Tuzel, O., Porikli, F., and Meer, P.
\newblock Pedestrian detection via classification on riemannian manifolds.
\newblock \emph{IEEE Transactions on Pattern Analysis and Machine Intelligence}, 30\penalty0 (10):\penalty0 1713--1727, 2008.
\newblock \doi{10.1109/TPAMI.2008.75}.

\bibitem[Wasserman(2004)]{wassermanAllStatisticsConcise2004}
Wasserman, L.
\newblock \emph{All of {{Statistics}}: {{A Concise Course}} in {{Statistical Inference}}}.
\newblock Springer {{Texts}} in {{Statistics}}. Springer, New York, NY, 2004.
\newblock ISBN 978-1-4419-2322-6 978-0-387-21736-9.
\newblock \doi{10.1007/978-0-387-21736-9}.

\bibitem[Willjuice~Iruthayarajan \& Baskar(2010)Willjuice~Iruthayarajan and Baskar]{willjuiceiruthayarajanCovarianceMatrixAdaptation2010}
Willjuice~Iruthayarajan, M. and Baskar, S.
\newblock Covariance matrix adaptation evolution strategy based design of centralized {{PID}} controller.
\newblock \emph{Expert Systems with Applications}, 37\penalty0 (8):\penalty0 5775--5781, August 2010.
\newblock ISSN 0957-4174.
\newblock \doi{10.1016/j.eswa.2010.02.031}.

\bibitem[Zhou et~al.(2016)Zhou, Wu, Lv, Zhang, and Guo]{Zhou2016}
Zhou, B., Wu, X., Lv, Z., Zhang, L., and Guo, X.
\newblock A fully automated trial selection method for optimization of motor imagery based brain-computer interface.
\newblock \emph{PLOS ONE}, 11\penalty0 (9):\penalty0 1--20, 09 2016.
\newblock \doi{10.1371/journal.pone.0162657}.
\newblock URL \url{https://doi.org/10.1371/journal.pone.0162657}.

\end{thebibliography}
\onecolumn
\appendix

\section{The vectorization used}
\label{appendix:vectorization}
In \cref{subsec:vectorization}, we defined an isomorphism between the tangent space $T_p \P_d$ at point $p \in \P_d$ and $\R^{\nicefrac{d(d+1)}{2}}$. This isomorphism is called the vectorization and is denoted $\Vect_p$. This isomorphism is not the only one that exists between the two spaces, so in this section we will motivate our choice of the vectorization and give some of its properties.

Let $p \in \P_d$. We recall that the tangent space $T_p \P_d$ at point $p$ is a Euclidean space once equipped with the inner product defined at \cref{def:AIRM_metric}. Therefore, one can unveil an orthonormal basis of $T_p \P_d$. One can see that the tangent space at the identity $T_{I_d} \P_d$ is the classical Euclidean space $\mathcal{S}_d$ equipped with the Frobenius inner product. Therefore, one can easily build an orthonormal basis of $T_{I_d} \P_d$ and then, transport it to the other tangent spaces.
\begin{proposition}[Orthonormal basis of the tangent spaces]
    Let $e_{ij}$ be the $d \times d$ matrix with a 1 at position $(i,j)$ and zeros everywhere else. Then 
    \begin{itemize}
        \item An orthonormal basis of $(T_{I_d} \P_d, \langle \cdot , \cdot \rangle_{I_d})$ is $(E_{I_d,ij})_{i\leq j}$ defined as follows:
            \[
            E_{I_d, ij} = \begin{cases}
            \frac{1}{\sqrt{2}} (e_{ij} + e_{ji}) & \text{for } i < j, \\
            e_{ii} & \text{for } i = j.
            \end{cases}
            \]
        \item An orthonormal basis of $(T_p \P_d, \langle \cdot , \cdot \rangle_p)$ is $(E_{p, ij})_{i\leq j}$ where $ E_{p, ij} = p^{1/2}E_{I_d, ij}p^{1/2}$.
    \end{itemize}
    \label{prop:ortho_basis}
\end{proposition}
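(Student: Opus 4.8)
The plan is to verify directly that each proposed basis is orthonormal with respect to the relevant inner product, and then transfer the result from the identity to an arbitrary point $p$ using the affine invariance of the AIRM metric.

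First I would treat the identity case. Since $T_{I_d}\P_d \simeq \S_d$ and $\langle u,v\rangle_{I_d} = \tr(I_d^{-1}uI_d^{-1}v) = \tr(uv)$ is just the Frobenius inner product on symmetric matrices, the claim reduces to checking that $(E_{I_d,ij})_{i\le j}$ is a Frobenius-orthonormal basis of $\S_d$. I would compute $\langle E_{I_d,ij}, E_{I_d,k\ell}\rangle_{I_d} = \tr(E_{I_d,ij}E_{I_d,k\ell})$ for the three cases (both diagonal, both off-diagonal, one of each) using $\tr(e_{ab}e_{cd}) = \delta_{bc}\delta_{ad}$. For $i=j$ and $k=\ell$ one gets $\tr(e_{ii}e_{kk}) = \delta_{ik}$; for $i<j$ and $k<\ell$ the factor $\tfrac12$ and the cross terms give $\tfrac12\tr\bigl((e_{ij}+e_{ji})(e_{k\ell}+e_{\ell k})\bigr) = \delta_{ik}\delta_{j\ell}$; the mixed case vanishes by parity of indices. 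This shows orthonormality, and since there are exactly $d(d+1)/2$ of these matrices, matching $\dim \S_d$, they form a basis.

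Next I would handle a general $p \in \P_d$. The key is the affine-invariance identity stated after \cref{def:AIRM_metric}: for any invertible $a$, $\langle aua^\top, ava^\top\rangle_{apa^\top} = \langle u,v\rangle_p$. Taking $a = p^{1/2}$ (which is itself SPD, hence symmetric, so $a^\top = a$) and applying this with base point $I_d$ gives $\langle p^{1/2}up^{1/2}, p^{1/2}vp^{1/2}\rangle_{p} = \langle u,v\rangle_{I_d}$ for all $u,v \in \S_d$. Hence the map $u \mapsto p^{1/2}up^{1/2}$ is a linear isometry from $(T_{I_d}\P_d,\langle\cdot,\cdot\rangle_{I_d})$ onto $(T_p\P_d,\langle\cdot,\cdot\rangle_p)$; it also visibly maps $\S_d$ bijectively onto $\S_d$. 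Applying this isometry to the orthonormal basis $(E_{I_d,ij})_{i\le j}$ yields the orthonormal basis $(E_{p,ij})_{i\le j}$ with $E_{p,ij} = p^{1/2}E_{I_d,ij}p^{1/2}$, which is exactly the claim.

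I do not expect any real obstacle here: the only mild bookkeeping is the index casework in the trace computation at the identity, and one should be careful to note that $p^{1/2}$ is symmetric so that $a^\top = a$ when invoking the affine-invariance formula. Everything else is a direct consequence of that formula together with the fact that a linear isometry carries orthonormal bases to orthonormal bases.
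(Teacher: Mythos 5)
Your proposal is correct and follows essentially the same route as the paper: identify $\langle\cdot,\cdot\rangle_{I_d}$ with the Frobenius inner product to get orthonormality at the identity, then push the basis to $T_p\P_d$ via the isometry $u \mapsto p^{1/2}up^{1/2}$. You merely carry out the trace bookkeeping and the affine-invariance justification more explicitly than the paper does.
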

\begin{proof}
    One has that $T_{I_d} \P_d \simeq \S_d$ and that $\langle \cdot, \cdot \rangle_{I_d}$ is the Frobenius inner product, so one can use the classical basis of $\S_d$ to build an orthonormal basis of $T_{I_d} \P_d$. Then, by transporting the basis of $T_{I_d} \P_d$ to $T_p \P_d$ using the isometry $x \mapsto p^{1/2}xp^{1/2}$, one has a basis of $T_p \P_d$. It is still orthonormal as $x \mapsto p^{1/2}xp^{1/2}$ is an isometry.
\end{proof}

Let us give another intuition on the choice of this basis for $(T_p \P_d, \langle \cdot , \cdot \rangle_p)$:
\begin{proposition}
    \label{prop:basis_parallel_transport}
    The basis $(E_{p, ij})_{i\leq j}$ of $(T_p \P_d, \langle \cdot , \cdot \rangle_p)$ given at \cref{prop:ortho_basis} is the parallel transport of the basis $(E_{I_d,ij})_{i \leq j}$ of $(T_{I_d} \P_d, \langle \cdot , \cdot \rangle_{I_d})$ from $T_{I_d} \P_d$ to $T_{p} \P_d$.
\end{proposition}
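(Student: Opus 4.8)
The plan is to identify the parallel transport from $T_{I_d}\P_d$ to $T_p\P_d$ along the unique minimizing geodesic joining $I_d$ and $p$, and to check that it coincides with the map $v \mapsto p^{1/2} v p^{1/2}$ used to define $(E_{p,ij})_{i\le j}$ in \cref{prop:ortho_basis}. By \cref{eq:exp_log_riem}, this geodesic is $\gamma(t) = \Exp_{I_d}(t\Log_{I_d}(p)) = \exp(t\log p) = p^{t}$, so that $\gamma(0) = I_d$, $\gamma(1) = p$, and, since $\log p$ commutes with $\exp(t\log p)$, $\dot\gamma(t) = (\log p)\gamma(t) = \gamma(t)(\log p)$; in particular $\dot\gamma(t)\gamma(t)^{-1} = \gamma(t)^{-1}\dot\gamma(t) = \log p$.

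First I would recall the expression of the Levi--Civita connection of the AIRM: for a vector field $t \mapsto V(t) \in T_{\gamma(t)}\P_d \simeq \S_d$ along a curve $\gamma$, the covariant derivative is
\[
  \frac{DV}{\d t} = \dot V(t) - \tfrac12\bigl(\dot\gamma(t)\,\gamma(t)^{-1}V(t) + V(t)\,\gamma(t)^{-1}\dot\gamma(t)\bigr),
\]
which follows from the Koszul formula together with the metric of \cref{def:AIRM_metric} (see e.g. chapter 6 of \citealt{bhatiaPositiveDefiniteMatrices2007} or \citealt{pennecManifoldvaluedImageProcessing2020}). Parallel transport of $v_0 \in T_{I_d}\P_d$ along $\gamma$ is then by definition the solution of $\tfrac{DV}{\d t} = 0$ with $V(0) = v_0$. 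I would then exhibit this solution explicitly: set $V(t) = \gamma(t)^{1/2}\,v_0\,\gamma(t)^{1/2} = \exp(\tfrac t2 \log p)\,v_0\,\exp(\tfrac t2\log p)$, which is symmetric (hence a legitimate element of $T_{\gamma(t)}\P_d$) and satisfies $V(0) = v_0$. Since $\gamma(t)^{1/2}$ commutes with $\log p$, differentiation gives $\dot V(t) = \tfrac12\bigl((\log p)V(t) + V(t)(\log p)\bigr)$, while the relations $\dot\gamma(t)\gamma(t)^{-1} = \gamma(t)^{-1}\dot\gamma(t) = \log p$ make the connection term equal to $\tfrac12\bigl((\log p)V(t) + V(t)(\log p)\bigr)$ as well. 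Hence $\tfrac{DV}{\d t} \equiv 0$, so $V$ is the parallel vector field with $V(0) = v_0$, and evaluating at $t=1$ shows that parallel transport from $I_d$ to $p$ along $\gamma$ is exactly $v_0 \mapsto p^{1/2} v_0 p^{1/2}$.

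Applying this to each $v_0 = E_{I_d,ij}$ yields $E_{p,ij} = p^{1/2}E_{I_d,ij}p^{1/2}$, which is precisely the definition of $(E_{p,ij})_{i\le j}$ in \cref{prop:ortho_basis}, proving the claim. The only non-routine ingredient is the formula for the AIRM covariant derivative; once it is available (either cited or derived from the Koszul formula), the remainder is a direct ODE verification. As a consistency check, one also observes that $v_0 \mapsto p^{1/2}v_0 p^{1/2}$ is the differential at $I_d$ of the AIRM isometry $q \mapsto p^{1/2}qp^{1/2}$, so it is indeed a linear isometry between the two tangent spaces, as parallel transport must be.
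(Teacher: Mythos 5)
Your proof is correct, and it reaches the same key fact as the paper --- that parallel transport from $T_{I_d}\P_d$ to $T_p\P_d$ along the connecting geodesic is $v \mapsto p^{1/2}vp^{1/2}$ --- but by a genuinely different route. The paper's proof is a one-line citation: it invokes Equation~22 of \citet{sraConicGeometricOptimisation2015} for the parallel transport formula and concludes immediately from the definition of $(E_{p,ij})_{i\le j}$. You instead derive the formula from first principles: you identify the geodesic $\gamma(t)=p^t$, recall the AIRM covariant derivative along a curve, and verify by a direct ODE computation that $V(t)=\gamma(t)^{1/2}v_0\gamma(t)^{1/2}$ is the parallel vector field with $V(0)=v_0$. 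Your computation is sound (the commutation of $\gamma(t)^{1/2}$ with $\log p$ is exactly what makes both $\dot V$ and the connection term equal to $\tfrac12((\log p)V + V(\log p))$), and your closing isometry check is a nice sanity test. What your approach buys is self-containedness --- the reader need not chase the external reference --- at the cost of having to import (or derive) the expression of the Levi--Civita connection for the AIRM, which is itself a nontrivial fact you cite rather than prove; the paper's approach simply pushes that entire burden onto the literature. Either is acceptable; yours is the more instructive of the two.
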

\begin{proof}
    According to  Equation 22 of \citet{sraConicGeometricOptimisation2015}, in the case of $\P_d$, the parallel transport $\Gamma_{I_d \rightarrow p}$ from $T_{I_d} \P_d$ to $T_{p} \P_d$ is:
    $$\forall u \in T_{I_d} \P_d,~ \Gamma_{I_d \rightarrow p}(u) = p^{1/2}up^{1/2}.$$
    The result follows from the definition of $(E_{p, ij})_{i\leq j}$.
\end{proof}

Now that we have an orthonormal basis of the tangent space $T_p \P_d$, we give the link between this basis and the vectorization $\Vect_p$:
\begin{proposition}
    \label{prop:vecto_to_basis}
    Let $(E_{p, ij})_{i\leq j}$ be the orthonormal basis of the tangent space $T_p \P_d$ described at \cref{prop:ortho_basis}. Let $u \in T_p \P_d$. Then, 
    $$\Vect_p(u) = (\langle u, E_{p, 11} \rangle_p,\langle u, E_{p, 12} \rangle_p,\langle u, E_{p, 22} \rangle_p, \cdots, \langle u, E_{p, d-1 d} \rangle_p , \langle u, E_{p, dd} \rangle_p).$$
\end{proposition}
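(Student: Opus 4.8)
The plan is to reduce everything to the identity tangent space and then recognize that $\Vect_{I_d}$ is, by construction, the coordinate map in the orthonormal basis $(E_{I_d,ij})_{i\le j}$ of \cref{prop:ortho_basis}. Write $v = p^{-1/2}up^{-1/2} \in \S_d \simeq T_{I_d}\P_d$, so that by definition $\Vect_p(u) = \Vect_{I_d}(v)$.

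First I would verify the claim at the identity: for $v = [[v_{ij}]] \in \S_d$, the $k$-th entry of $\Vect_{I_d}(v)$ equals $\langle v, E_{I_d,ij}\rangle_{I_d}$ with $(i,j)$ the index pair in position $k$. Since at the identity the metric is the Frobenius inner product, $\langle v, E_{I_d,ii}\rangle_{I_d} = \tr(v e_{ii}) = v_{ii}$, and for $i<j$, using symmetry of $v$,
\[
\langle v, E_{I_d,ij}\rangle_{I_d} = \tfrac{1}{\sqrt 2}\tr\!\big(v(e_{ij}+e_{ji})\big) = \tfrac{1}{\sqrt 2}(v_{ji}+v_{ij}) = \sqrt 2\, v_{ij},
\]
which matches exactly the entries $(v_{11}, \sqrt2 v_{12}, v_{22}, \dots)$ appearing in \cref{def:vectorization}. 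This is precisely why the $\sqrt 2$ factors are built into $\Vect_{I_d}$: they make it an isometry onto $(\R^{d(d+1)/2}, \langle\cdot,\cdot\rangle_2)$.

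Next I would transfer this to $T_p\P_d$ using affine invariance of the AIRM. Since $E_{p,ij} = p^{1/2}E_{I_d,ij}p^{1/2}$ and $u = p^{1/2}v p^{1/2}$, the invariance property $\langle a\cdot a^\top, b\cdot b^\top\rangle_{apa^\top} = \langle\cdot,\cdot\rangle_p$ (with $a = p^{1/2}$, applied in the form $\langle p^{1/2}x p^{1/2}, p^{1/2}y p^{1/2}\rangle_p = \langle x,y\rangle_{I_d}$) gives $\langle u, E_{p,ij}\rangle_p = \langle v, E_{I_d,ij}\rangle_{I_d}$ for every pair $i\le j$. Combining the two steps,
\[
\Vect_p(u) = \Vect_{I_d}(v) = \big(\langle v, E_{I_d,ij}\rangle_{I_d}\big)_{i\le j} = \big(\langle u, E_{p,ij}\rangle_p\big)_{i\le j},
\]
which is the claimed identity.

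There is no real obstacle here; the only care needed is bookkeeping — matching the ordering of index pairs used in \cref{def:vectorization} with the ordering of the basis in \cref{prop:ortho_basis}, and keeping track of the $\sqrt 2$ normalizations. Conceptually the statement is just the observation that $\Vect_p$ is the orthonormal-coordinate chart associated with the basis $(E_{p,ij})_{i\le j}$, so it could alternatively be derived in one line from the already-established fact that $\Vect_p$ is an isometry together with the standard formula expressing a vector in an orthonormal basis via inner products; I would nonetheless include the short direct computation above for self-containedness.
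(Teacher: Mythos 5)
Your proof is correct and follows essentially the same route as the paper's: first verify the formula at the identity by direct computation with the Frobenius inner product (recovering the $\sqrt{2}$ normalizations), then transfer to $T_p\P_d$ via the congruence $\langle p^{1/2}xp^{1/2},p^{1/2}yp^{1/2}\rangle_p=\langle x,y\rangle_{I_d}$ together with the definition $\Vect_p(u)=\Vect_{I_d}(p^{-1/2}up^{-1/2})$. No substantive differences to report.
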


\begin{proof} 
    We start with the case where $p = I_d$. 
    Let $u = [[u_{ij}]] \in T_{I_d} \P_d \simeq \S_d$. We simply need to show that, for $i \leq j$, one has
    \begin{equation*}
        \langle u, E_{I_d, ij} \rangle_{I_d} = \begin{cases}u_{ii} \quad & \text{ if } i = j, \\ \sqrt{2} u_{ij} \quad & \text{ if } i < j. \end{cases}
    \end{equation*}
    One has, when $i=j$:
    $$\langle u, E_{I_d, ii} \rangle_{I_d} = \langle u, e_{ii} \rangle_{I_d} = \tr(u e_{ii}) = u_{ii}.$$
    And when $i < j$:
    $$\langle u, E_{I_d, ij} \rangle_{I_d} = \langle u, \frac{1}{\sqrt{2}}(e_{ij} + e_{ji}) \rangle_{I_d} = \frac{1}{\sqrt{2}}(\tr(u e_{ij}) + \tr(u e_{ji})) = \sqrt{2} u_{ij}.$$
    Therefore, one has the results for $\Vect_{I_d}$.

    Now, in the general case of $p \in \P_d$, one has that, for $i \leq j$
    $$\langle u, E_{p, ij} \rangle_p = \langle u, p^{1/2}E_{I_d, ij}p^{1/2} \rangle_p = \langle p^{-1/2}u p^{-1/2}, E_{I_d, ij} \rangle_{I_d}.$$
    By using the definition of $\Vect_p(u) = \Vect_{I_n}(p^{-1/2}up^{-1/2})$ and the result for $p = I_d$, one has the result.
\end{proof}
The previous proposition helps us motivate the choice of the vectorization. Indeed, this vectorization is simply the coordinates of the tangent vector $u$ in the orthonormal basis $(E_{p, ij})_{i\leq j}$ of the tangent space $T_p \P_d$. Now that we are more convinced on this choice of vectorization, we give some of its important properties.

\begin{proposition}
    \label{prop:norm_vect}
    \begin{itemize}
        \item Let $u \in T_{p} \P_d$, then $$\|\Vect_{p}(u)\|_2^2 = \Vect_{p}(u)^\top \Vect_{p}(u) = \|u\|_{p}^2 := \sqrt{\langle u, u\rangle}_p.$$ Therefore, $\Vect_p$ is not only an isomorphism, it is an isometry between $(T_p \P_d, \langle \cdot, \cdot \rangle_p)$ and $(\R^{\nicefrac{d(d+1)}{2}}, \|\cdot \|_2)$.
        \item Let $u \in T_{p} \P_d$, then $$\|\Vect_{p}(\Log_p u)\|_2^2 = \Vect_{p}(\Log_p u)^\top \Vect_{p}(\Log_p u) = \delta(p,u)^2.$$
    \end{itemize}
\end{proposition}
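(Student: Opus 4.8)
The plan is to derive both identities from the explicit formula $\Vect_p(u) = \Vect_{I_d}(p^{-1/2}up^{-1/2})$ of \cref{def:vectorization} together with cyclicity of the trace; the first bullet can alternatively be read off directly from \cref{prop:vecto_to_basis}.

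For the first bullet, I would set $w = p^{-1/2}up^{-1/2} \in \S_d$, so that by \cref{def:vectorization} the coordinates of $\Vect_p(u)$ are $w_{ii}$ at the diagonal positions and $\sqrt{2}\,w_{ij}$ for $i<j$. Squaring and summing gives
\begin{equation*}
    \|\Vect_p(u)\|_2^2 = \sum_i w_{ii}^2 + 2\sum_{i<j} w_{ij}^2 = \sum_{i,j} w_{ij}^2 = \|w\|_F^2 = \tr(w^2),
\end{equation*}
the last equality using that $w$ is symmetric. On the other side, $\|u\|_p^2 = \langle u,u\rangle_p = \tr(p^{-1}u\,p^{-1}u)$ by \cref{def:AIRM_metric}; inserting $p^{-1} = p^{-1/2}p^{-1/2}$ and using cyclicity of the trace rewrites this as $\tr\big((p^{-1/2}up^{-1/2})^2\big) = \tr(w^2)$, so the two sides agree. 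Equivalently, \cref{prop:vecto_to_basis} identifies $\Vect_p(u)$ with the coordinate vector of $u$ in the orthonormal basis $(E_{p,ij})_{i\le j}$ of $(T_p\P_d,\langle\cdot,\cdot\rangle_p)$, and coordinate vectors in an orthonormal basis preserve the norm by Parseval.

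For the second bullet — where the statement is to be read with $u = q \in \P_d$ and the quantity $\Vect_p(\Log_p q)$ — I would simply apply the first bullet to the tangent vector $\Log_p q \in T_p\P_d$, giving $\|\Vect_p(\Log_p q)\|_2^2 = \|\Log_p q\|_p^2$. Then I plug in the closed form $\Log_p q = p^{1/2}\log(p^{-1/2}qp^{-1/2})p^{1/2}$ from \cref{eq:exp_log_riem}: writing $L = \log(p^{-1/2}qp^{-1/2})$, one has $p^{-1}\Log_p q = p^{-1/2}Lp^{1/2}$, hence $p^{-1}\Log_p q\,p^{-1}\Log_p q = p^{-1/2}L^2 p^{1/2}$, whose trace is $\tr(L^2) = \|L\|_F^2 = \delta(p,q)^2$ by \cref{eq:distance_AIRM}. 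This closes the argument.

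I do not anticipate any real obstacle: the computation is routine shuffling of half-powers of $p$ through the trace. The only points requiring care are the mild abuse of notation in the second bullet (the argument of $\Log_p$ must be a point of $\P_d$, not a tangent vector) and bookkeeping the factors $p^{\pm 1/2}$ correctly.
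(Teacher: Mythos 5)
Your proof is correct and essentially matches the paper's: the paper establishes the first bullet exactly via the Parseval argument you mention as an alternative (reading $\Vect_p(u)$ as coordinates in the orthonormal basis $(E_{p,ij})_{i\le j}$), and your direct trace computation with $w = p^{-1/2}up^{-1/2}$ is just an unpacked version of the same identity $\|u\|_p^2 = \|p^{-1/2}up^{-1/2}\|_F^2$ that the paper also uses for the second bullet. You are also right that the second bullet involves an abuse of notation ($u$ there should be a point of $\P_d$), which the paper silently commits as well.
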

\begin{proof}Let us prove the two points of the proposition.
    \begin{itemize}
        \item Let $u \in T_{p} \P_d$. One has, using \cref{prop:vecto_to_basis},
    \begin{equation*}
    \begin{aligned}
        \|\Vect_{p}(u)\|_2^2 &= \Vect_{p}(u)^\top \Vect_{p}(u) = \sum_{i \leq j} \langle u, E_{p, ij} \rangle_p^2. 
    \end{aligned}
    \end{equation*}
    As $(E_{p, ij})_{i\leq j}$ is an orthonormal basis of the Euclidean space $(T_p \P_d, \langle \cdot, \cdot \rangle_p)$, one has that 
    $$\sum_{i \leq j} \langle u, E_{p, ij} \rangle_p^2  = \|x\|_p.$$
    which proves the first point.
    \item Let $u \in T_{p} \P_d$. One has, using the previous point, the definition of $\| \cdot \|_p$, the expression of the Riemannian logarithm given at \cref{eq:exp_log_riem} and the expression of the AIRM distance given at \cref{eq:distance_AIRM}:
    $$\|\Vect_{p}(\Log_p u)\|_2^2 = \|\Log_p u\|_p^2 = \|p^{-1/2}\Log_p u ~ p^{-1/2}\|_F = \|\log(p^{-1/2}up^{-1/2})\|_F = \delta(p, u)^2.$$
\end{itemize}
\end{proof}
Finally, let us give a consequence of the previous property on the Jacobian of the vectorization: as $\Vect_p$ is an isometry, there is no volume change \emph{via} the vectorization. 
\section{The push-forward}
\label{appendix:push_forward}
In this section, we give the definition and an important result on the push-forward measure. One can find more information on the push-forward measure in Section 3.6 of \citet{bogachevMeasureTheory2007}.
\begin{definition}[Pushforward measure]
    Given two measurable spaces $(\mathcal{X}, \Omega_{\mathcal{X}})$ and $(\mathcal{Y}, \Omega_{\mathcal{Y}})$, a measurable map $f\colon \Omega_{\mathcal{X}}\to \Omega_{\mathcal{Y}}$ and a measure $\mu\colon\Omega_{\mathcal{X}}\to[0,+\infty]$, the \emph{pushforward} of $\mu$ is defined to be the measure $f \# \mu \colon\Omega_{\mathcal{Y}}\to[0,+\infty]$ given by
    $$\forall B \in \Omega_{\mathcal{Y}},~(f\#\mu)(B) = \mu(f^{-1}[B]).$$
    where $f^{-1}[B]$ is the preimage of $B$ by $f$.
\end{definition}
We now give the most important result on pushforward measures: the change of variables
\begin{theorem}[Change of variables]
    Let $\mu$ be a non-negative measure. An $\Omega_{\mathcal{Y}}$-measurable function $g$ on $\mathcal{Y}$ is integrable with respect to the pushforward measure $f\#\mu$ if and only if the function $g \circ f$ is integrable with respect to the measure $\mu$. In this case, one has:
    $$\int_{\mathcal{Y}} g(y) \mathrm{d}(f\#\mu)(y) = \int_{\mathcal{X}} (g \circ f )(x)\mathrm{d}\mu(x).$$
    \label{theo:change_of_variable}
\end{theorem}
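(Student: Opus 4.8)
The plan is to use the standard ``measure-theoretic bootstrapping'' argument, proving the identity first for indicator functions, then extending it successively to simple functions, to non-negative measurable functions, and finally to general integrable functions.

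First I would treat the case $g = \mathbf{1}_B$ for $B \in \Omega_{\mathcal{Y}}$. Here $g \circ f = \mathbf{1}_{f^{-1}[B]}$, which is $\Omega_{\mathcal{X}}$-measurable precisely because $f$ is measurable, and by the very definition of the pushforward measure,
\[
\int_{\mathcal{Y}} \mathbf{1}_B \, \mathrm{d}(f\#\mu) = (f\#\mu)(B) = \mu(f^{-1}[B]) = \int_{\mathcal{X}} \mathbf{1}_{f^{-1}[B]} \, \mathrm{d}\mu = \int_{\mathcal{X}} (\mathbf{1}_B \circ f)\, \mathrm{d}\mu .
\]
By linearity of the integral, the identity then extends to every non-negative simple function $g = \sum_{i=1}^n a_i \mathbf{1}_{B_i}$ with $a_i \geq 0$ and $B_i \in \Omega_{\mathcal{Y}}$.

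Next I would pass to an arbitrary non-negative $\Omega_{\mathcal{Y}}$-measurable $g$. Choose an increasing sequence $(g_n)_n$ of non-negative simple functions with $g_n \uparrow g$ pointwise on $\mathcal{Y}$ (the usual dyadic truncation). Then $g_n \circ f \uparrow g \circ f$ pointwise on $\mathcal{X}$, and two applications of the monotone convergence theorem give
\[
\int_{\mathcal{Y}} g \, \mathrm{d}(f\#\mu) = \lim_{n\to\infty} \int_{\mathcal{Y}} g_n \, \mathrm{d}(f\#\mu) = \lim_{n\to\infty} \int_{\mathcal{X}} (g_n \circ f) \, \mathrm{d}\mu = \int_{\mathcal{X}} (g \circ f) \, \mathrm{d}\mu ,
\]
with both sides possibly equal to $+\infty$. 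Applying this to $|g|$ shows at once that $g$ is $f\#\mu$-integrable if and only if $g \circ f$ is $\mu$-integrable. Finally, for such an integrable $g$, I would decompose $g = g^+ - g^-$ into its positive and negative parts; since $(g\circ f)^{\pm} = g^{\pm}\circ f$ and both $\int_{\mathcal{Y}} g^{\pm}\,\mathrm{d}(f\#\mu)$ are finite, subtracting the two non-negative identities yields the stated equality.

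There is no real obstacle here: the argument is entirely routine. The only points requiring a moment of care are the measurability of $g \circ f$ (which is exactly where the hypothesis that $f$ is measurable is used, so that both integrals make sense) and the integrability equivalence, which is obtained by first running the non-negative case on $|g|$ before splitting into $g^+$ and $g^-$.
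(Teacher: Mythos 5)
Your proof is correct and complete: the standard bootstrapping argument (indicators by definition of the pushforward, simple functions by linearity, non-negative functions by monotone convergence, then the integrability equivalence via $|g|$ and the final identity via $g^+ - g^-$) is exactly the canonical proof of this result. The paper itself states this theorem without proof, deferring to Section 3.6 of Bogachev's \emph{Measure Theory}, where the same argument appears, so there is nothing to reconcile.
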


\section{The Jacobian determinant of the exponential map on the manifold of SPD matrices}
\label{appendix:jac_exp}
In this section, we give a proof of the result stated at \cref{prop:jac_exp}. More specifically, we want to show that the Jacobian determinant of the exponential map $\Exp_{I_d}$ at identity is:
\begin{equation}
    \label{eq:jacobian_exp}
    J_{I_d}(u) = 2^{\nicefrac{d(d-1)}{2}} \prod_{i < j} \frac{\sinh\left(\frac{\lambda_i(u) - \lambda_j(u)}{2}\right)}{\lambda_i(u) - \lambda_j(u)}
\end{equation}
where the ($\lambda_i(u))_i$ are the eigenvalues of $u$.

Let us start by recalling that, when the tangent plan of interest is at the identity $I_d$, the Riemannian exponential map $\Exp_{I_d}$ is simply the matrix exponential $\exp$:
$$\Exp_{I_d} \colon u \in T_{I_d} \P_d \mapsto \exp(u) \in \P_d.$$
One can see this result using the expression of the Riemannian exponential given in \cref{eq:exp_log_riem}.

To prove the relation \cref{eq:jacobian_exp}, we will start by the case where $u$ is a diagonal matrix. We will then extend the result to the general case.
\paragraph{ Case 1: $u \in T_{I_d} \P_d$ diagonal}
Let us consider $u \in T_{I_d} \P_d$ diagonal, $u = \text{diag}(\lambda_1,...,\lambda_d)$. In the following, we will denote by $\Psi$ the differential of the Riemannian exponential in $u$: $\Psi = \d \Exp_{I_d}(u) = \d \exp(u)$. We therefore want to compute the determinant of $\Psi$: $\det \Psi$. One has that $\Psi \colon T_{I_d} \P_d  \rightarrow T_{\exp(u)}\P_d$, where we have identified $T_u T_{I_d} \P_d$ with $T_{I_d} \P_d$. To compute the determinant, one need to choose adequate bases in both tangent spaces $T_{I_d} \P_d$ and $T_{\exp(u)} \P_d$.  By ``adequate'', we mean that the transformation between the two bases does not imply any volume change.
For this, we consider for $T_{I_d} \P_d$ the basis $(E_{I_d, ij})_{i\leq j}$ and for $T_{\exp(u)} \P_d$ the basis $(E_{\exp(u), ij})_{i\leq j}$ as defined at \cref{prop:ortho_basis}. According to \cref{prop:basis_parallel_transport}, the transformation from the first to the second basis is the parallel transport, which does not imply any volume changes, as the parallel transport is an isometry (see Prop 10.36 of \citealt{boumalIntroductionOptimizationSmooth2023}).

Now that we have our two basis, we want to compute the matrix of $\Psi$ in those two bases. For this, we need to compute $\Psi(E_{I_d, ij})$ and express it in the basis $(E_{\exp(u), ij})_{i\leq j}$. As $u$ is diagonal, we can use the \emph{Daletskii-Krein formula} (see \citet{daletskii1965integration} or Equation 2.40 of \citealt{bhatiaPositiveDefiniteMatrices2007}) that states the following in our case: for $h \in T_p \P_d$, 
\begin{equation}
    \label{eq:DK_formula}
    \Psi(h) = \left[ \left[ \exp^{[1]}(u)_{ij} h_{ij} \right] \right]
\end{equation}
where 
\[
    \exp^{[1]}(u)_{ij} = \begin{cases}
    e^{\lambda_i} &\text{for } i = j, \\
    \frac{e^{\lambda_i} - e^{\lambda_j}}{\lambda_i - \lambda_j} & \text{for } i \neq j.
    \end{cases}
\]
Using the previous formula, one can compute $ \Psi(E_{I_d,ij})$ for $i \leq j$. Now, one needs to compute the coefficients of $\Psi(E_{I_d, ij})$ in the basis $(E_{\exp(u), kl})_{k \leq l}$. As the basis $(E_{\exp(u), kl})_{k \leq l}$ is orthonormal, one simply needs to compute the dot product between $\Psi(E_{I_d, ij})$ and one element of the basis to get the corresponding coefficient. For $k \leq l$, one has:
\begin{equation*}
    \begin{aligned}
        \langle \Psi(E_{I_d, ij}), E_{\exp(u), kl} \rangle_{\exp(u)} &= \langle \Psi(E_{I_d, ij}), \exp(u/2)E_{I_d, kl}\exp(u/2) \rangle_{\exp(u)}   &\text{\footnotesize using the definition of $E_{\exp(u), kl}$}\\
        &= \tr\left(\Psi(E_{I_d, ij}) \exp(-u/2)E_{I_d, kl}\exp(-u/2)\right) &\text{ \footnotesize using the definition of the AIRM metric} \\
        &= \langle\exp(-u/2)\Psi(E_{I_d, ij})\exp(-u/2),E_{I_d, kl}\rangle_{I_d}
    \end{aligned}
\end{equation*}
Therefore, it is the coefficient $(k,l)$ of the matrix $\exp(-u/2)\Psi(E_{I_d, ij}) \exp(-u/2)$ (up to a factor $\sqrt 2$ when $k \neq l$). We now need to compute this matrix. As $u$ is diagonal, $u = \text{diag}(\lambda_1,...,\lambda_d)$, one has that $\exp(-u/2) = \text{diag}(e^{-\lambda_1/2},...,e^{-\lambda_d/2})$. Therefore, and using \cref{eq:DK_formula}, the coefficient $(k,l)$ of the matrix $\exp(-u/2)\Psi(E_{I_d, ij}) \exp(-u/2)$ is:
\begin{equation*}
    \begin{cases}
        e^{-\lambda_i/2}e^{\lambda_i}e^{-\lambda_i/2} = 1 & \text{for } i = j = k = l, \\
        \frac{1}{\sqrt{2}}e^{-\lambda_i/2}\frac{e^{\lambda_i} - e^{\lambda_j}}{\lambda_i - \lambda_j}e^{-\lambda_j/2} = \sqrt{2}\frac{\sinh\left(\frac{\lambda_i - \lambda_j}{2}\right)}{\lambda_i - \lambda_j} & \text{for } i \neq j \text{ and } (k,l) = (i,j) \\
        0 & \text{if } (k,l) \neq (i,j).
    \end{cases}
\end{equation*}
Therefore, the matrix of $\Psi$ in the bases $(E_{I_d, ij})_{i \leq j}$ and $(E_{\exp(u), kl})_{k \leq l}$ is diagonal with diagonal coefficients $1$ and $2\frac{\sinh\left(\frac{\lambda_i - \lambda_j}{2}\right)}{\lambda_i - \lambda_j}$. Thus, the determinant of $\Psi $ is the product of these coefficients, which gives the result for the diagonal case.

\paragraph{Case 2: $u \in T_{I_d} \P_d$ general}
Let us now consider the general case where $u \in T_{I_d} \P_d$ is not diagonal. One has that, as $u$ is symmetric, one can diagonalize it: $u = g d g^\top$ where $g$ is an orthogonal matrix and $d$ is diagonal. As $\exp(u) = g \exp(d) g^\top$, one has that $\d \exp(u) \cdot h = g \left( \d \exp(d) \cdot (g^\top h g)\right) g^\top$. Therefore, as we are only interested in the determinant of $\d \exp(u)$, and as $g$ are orthogonal, one has that $\det \d \exp(u) \cdot h = \det\left( \d \exp(d) \cdot (g^\top h g)\right)$. 
One thus need to compute $\d \exp(d) \cdot (g^\top h g)$, and using Dalechii-Krein formula, as in the diagonal case, one has:
\begin{equation}
    \label{eq:DK_non_diag}
    \d \exp(d) \cdot (g^\top h g) = \left[ \left[ \exp^{[1]}(u)_{ij} \tilde{h}_{ij} \right] \right]
\end{equation}
where $\exp^{[1]}(u)$ is defined as above and $g^\top h g = [[\tilde{h}_{ij}]]$.
In order to do the same proof as for the diagonal case, one needs to modify the basis used in $T_{I_d} \P_d$ as in \cref{eq:DK_non_diag}, the coefficients of $g^\top h g$ appear (rather than directly the coefficients of $h$ as in the previous case). Therefore, we choose as basis for $T_{I_d} \P_d$ the basis $(E^{(u)}_{I_d, ij})_{ij}$ where $E^{(u)}_{I_d, ij} = g E_{I_d, ij} g^\top$. One can easily check that this basis is orthonormal and does not imply any volume changes. One can now use the same proof as for the diagonal case to compute the determinant of $\d \exp(u)$, which gives the result for the general case.

\section{The building block of the wrapped Gaussians}
\label{appendix:building_bloc}

In this section, we give a proof of \cref{prop:centered_reduced}. For this, we will actually show a more precise proposition:
\begin{proposition}
    Let $(p, \mu, \Sigma) \in \Theta$ and $X \sim \WG(p; \mu, \Sigma)$. Then,

    \begin{enumerate}
        \item $p^{-1/2}Xp^{-1/2} \sim \WG(I_d; \mu, \Sigma)$,
        \item $\Exp_p(\Log_p X - \Vect_{p}^{-1}(\mu)) \sim  \WG(p; 0_{d(d+1)/2}, \Sigma)$,
        \item $\Exp_p( \Vect_p^{-1}(\Sigma^{-1/2}\Vect_{p}(\Log_p X))) \sim  \WG(p; \mu, I_{d(d+1)/2}).$
    \end{enumerate}
\end{proposition}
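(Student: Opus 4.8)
The plan is to prove each of the three distributional identities by exhibiting the stated random variable as a pushforward of $\mathcal{N}(\mu,\Sigma)$ (or of the appropriate transformed Gaussian) under a composition of the building maps, and then recognizing that composition as $\Exp_{q}\circ\Vect_{q}^{-1}$ for the relevant base point $q$ together with an affine change of the Gaussian parameters. Throughout I would write $X = \Exp_p(\Vect_p^{-1}(\mathbf{t}))$ with $\mathbf{t}\sim\mathcal{N}(\mu,\Sigma)$ as in \cref{def:wrapped_gaussian}, and reduce everything to manipulations of $\mathbf{t}$.

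For point 1, the key identity is the congruence-equivariance of the Riemannian exponential, $p^{-1/2}\Exp_p(u)p^{-1/2} = \Exp_{I_d}(p^{-1/2}up^{-1/2})$, which follows directly from \cref{eq:exp_log_riem}, combined with $\Vect_p(u) = \Vect_{I_d}(p^{-1/2}up^{-1/2})$ from \cref{def:vectorization}. Chaining these, $p^{-1/2}Xp^{-1/2} = \Exp_{I_d}\big(p^{-1/2}\Vect_p^{-1}(\mathbf{t})\,p^{-1/2}\big) = \Exp_{I_d}(\Vect_{I_d}^{-1}(\mathbf{t}))$, since $\Vect_{I_d}^{-1}(\mathbf{t}) = p^{-1/2}\Vect_p^{-1}(\mathbf{t})p^{-1/2}$ by definition of $\Vect_p$. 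As $\mathbf{t}\sim\mathcal{N}(\mu,\Sigma)$ is unchanged, this is exactly $\WG(I_d;\mu,\Sigma)$ by \cref{def:wrapped_gaussian}.

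For point 2, I would observe that $\Log_p X = \Vect_p^{-1}(\mathbf{t})$ (since $\Log_p$ inverts $\Exp_p$), so $\Log_p X - \Vect_p^{-1}(\mu) = \Vect_p^{-1}(\mathbf{t}-\mu)$ by linearity of $\Vect_p^{-1}$, and hence $\Exp_p(\Log_p X - \Vect_p^{-1}(\mu)) = \Exp_p(\Vect_p^{-1}(\mathbf{t}-\mu))$ with $\mathbf{t}-\mu\sim\mathcal{N}(0,\Sigma)$; invoking \cref{def:wrapped_gaussian} gives $\WG(p;0,\Sigma)$. Point 3 is entirely analogous: $\Vect_p^{-1}\big(\Sigma^{-1/2}\Vect_p(\Log_p X)\big) = \Vect_p^{-1}(\Sigma^{-1/2}\mathbf{t})$, and since $\Sigma^{-1/2}\mathbf{t}\sim\mathcal{N}(\Sigma^{-1/2}\mu,\,\Sigma^{-1/2}\Sigma\Sigma^{-1/2}) = \mathcal{N}(\Sigma^{-1/2}\mu, I_{d(d+1)/2})$, applying $\Exp_p\circ\Vect_p^{-1}$ yields $\WG(p;\Sigma^{-1/2}\mu, I_{d(d+1)/2})$. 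One subtlety here is that the statement writes $\WG(p;\mu,I)$, so I would need to check whether the intended normalization is a whitening to identity covariance with the mean kept as $\mu$ (which would instead require the map $\Vect_p^{-1}(\Sigma^{-1/2}\Vect_p(\Log_p X) + (\mathrm{Id}-\Sigma^{-1/2})\mu)$, or a convention where $\mu$ has already been centered out); I would either reconcile the statement with the honest computation or note that point 3 is to be read with $\mu$ replaced by $\Sigma^{-1/2}\mu$.

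Finally, \cref{prop:centered_reduced} follows by composing: starting from $X\sim\WG(I_d;0,I)$, point 3's inverse scales $I$ up to $\Sigma$, then point 2's inverse shifts $0$ to $\mu$, then point 1's inverse moves the base point from $I_d$ to $p$ (via $X\mapsto p^{1/2}Xp^{1/2}$), giving the explicit $\Psi$. The routine verifications are the linearity of $\Vect_p^{\pm1}$, the fact that $\Exp_p,\Log_p$ are mutual inverses, and the affine transformation rule for Gaussians; the only genuine point requiring care is the congruence-equivariance used in point 1, and the convention check flagged above for point 3, which is where I expect the main obstacle — a bookkeeping rather than conceptual one — to lie.
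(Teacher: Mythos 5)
Your proof is correct but takes a genuinely different route from the paper's. You argue directly at the level of the sampling representation: writing $X=\Exp_p(\Vect_p^{-1}(\mathbf{t}))$ with $\mathbf{t}\sim\N(\mu,\Sigma)$ and tracking how each transformation acts on $\mathbf{t}$, using the congruence identities $p^{-1/2}\Exp_p(u)p^{-1/2}=\Exp_{I_d}(p^{-1/2}up^{-1/2})$ and $\Vect_p^{-1}(\mathbf{t})=p^{1/2}\Vect_{I_d}^{-1}(\mathbf{t})p^{1/2}$ together with the affine transformation rule for Gaussians. The paper instead works at the level of densities: it computes $\mathbb{E}[\varphi(Y)]$ as an integral against $f_{p;\mu,\Sigma}$, performs an explicit change of variables on $\P_d$, and verifies that the Jacobian factors (invariance of $\d\mathrm{vol}$ under congruence for point 1, the identity $J_p(\Log_p x-\Vect_p^{-1}(\mu))/J_p(\Log_p x)$ for point 2) recombine into the target density. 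Your argument is shorter, more elementary, and arguably more faithful to \cref{def:wrapped_gaussian}, since it never needs the volume element or the Jacobian of $\Exp_p$ at all; the paper's computation has the side benefit of exercising the density formula of \cref{theo:density} and making the volume-change bookkeeping explicit. Your flag on point 3 is also a genuine catch, not a misunderstanding on your part: whitening $\mathbf{t}\mapsto\Sigma^{-1/2}\mathbf{t}$ sends $\N(\mu,\Sigma)$ to $\N(\Sigma^{-1/2}\mu,I_{d(d+1)/2})$, so the map as written yields $\WG(p;\Sigma^{-1/2}\mu,I_{d(d+1)/2})$ rather than $\WG(p;\mu,I_{d(d+1)/2})$; obtaining the latter requires first centering, whitening, and re-adding $\mu$, i.e.\ $\mathbf{t}\mapsto\Sigma^{-1/2}(\mathbf{t}-\mu)+\mu$. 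The paper omits the details of point 3 (``one can prove it similarly''), so this discrepancy is not resolved there either, and the corollary defining $\Psi$ in the appendix is consistent with the corrected reading. This does not affect the validity of \cref{prop:centered_reduced}, which only needs the existence of some such composite map.
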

\begin{proof}
    In this proof, we will only show the first two points of the above definition, the third one being similar to them. 
    \begin{enumerate}
        \item Let $Y = p^{-1/2}Xp^{-1/2}$. We want to show that $Y \sim  \WG(I_d; \mu, \Sigma)$. For this, let $\varphi \colon \P_d \rightarrow \R$ be a continuously bounded function. One has
        \begin{equation*}
        \begin{aligned}
        \mathbb{E}[\varphi(Y)] &= \int_{\P_d} \varphi( p^{-1/2}xp^{-1/2})f_{p; \mu, \Sigma}(x) \mathrm{d}\text{vol}(x) \\
        &= \int_{\P_d} \varphi( p^{-1/2}xp^{-1/2}) \frac{1}{\sqrt{(2\pi)^d \det \Sigma}}\frac{\exp\left(-\frac{1}{2} (\Vect_p(\Log_{p}(x)) - \mu)^\top \Sigma^{-1}(\Vect_p(\Log_{p}(x)) - \mu)\right)}{|J_{I_d}(\log(p^{-1/2}xp^{-1/2}))|}\mathrm{d}\text{vol}(x).
        \end{aligned}
        \end{equation*}
        
        Let us now define $\psi_p \colon x \mapsto p^{-1/2}xp^{-1/2}$. $\psi_p$ is a $\mathcal{C}^1$-diffeomorphism between $\P_d$ and $\P_d$. Moreover,  as the volume element $\mathrm{d}\text{vol}$ is invariant by congruence of $\text{GL}(d, \R)$, the transformation $\psi_p$ does not imply any volume change. Therefore, by change of variables $y = \psi_p(x)$, one has:
        \begin{equation*}
        \begin{aligned}
        \mathbb{E}[\varphi(Y)] &= \int_{\P_d} \varphi( y) \frac{1}{\sqrt{(2\pi)^d \det \Sigma}}\frac{\exp\left(-\frac{1}{2} (\Vect_p(\Log_{p}(p^{1/2}yp^{1/2})) - \mu)^\top \Sigma^{-1}(\Vect_p(\Log_{p}(p^{1/2}yp^{1/2})) - \mu)\right)}{|J_{I_d}(\log(p^{-1/2}p^{1/2}yp^{1/2}p^{-1/2}))|}\mathrm{d}\text{vol}(y) \\
        &= \int_{\P_d} \varphi( y) \frac{1}{\sqrt{(2\pi)^d \det \Sigma}}\frac{\exp\left(-\frac{1}{2} (\Vect_p(p^{1/2}\log(y)p^{1/2}) - \mu)^\top \Sigma^{-1}(\Vect_p(p^{1/2}\log(y)p^{1/2}) - \mu)\right)}{|J_{I_d}(\log(y))|}\mathrm{d}\text{vol}(y).
        \end{aligned}
        \end{equation*}
        Finally, using that $\Vect_p(p^{1/2}\log(y)p^{1/2}) = \Vect_{I_d}(p^{-1/2}p^{1/2}\log(y)p^{1/2}p^{-1/2}) = \Vect_{I_d}(\log(y))$ and $\Log_{I_d} = \log$, we have,  
        $$\mathbb{E}[\varphi(Y)] = \int_{\P_d} \varphi( y) \frac{1}{\sqrt{(2\pi)^d \det \Sigma}}\frac{\exp\left(-\frac{1}{2} (\Vect_{I_d}(\Log_{I_d}(y)) - \mu)^\top \Sigma^{-1}(\Vect_{I_d}(\Log_{I_d}(y)) - \mu)\right)}{|J_{I_d}(\log(y))|}\mathrm{d}\text{vol}(y).$$
        
        This shows us that $Y \sim \WG(I_d; \mu, \Sigma)$.
        \item Let now $Y = \Exp_p(\Log_p - \Vect_{p}^{-1}(\mu))$. We want to show that $Y \sim \WG(p; 0_{d(d+1)/2}, \Sigma)$. Let $\varphi \colon \P_d \rightarrow \R$ be a bounded continuous function. One has
        \begin{equation*}
        \begin{aligned}
        \mathbb{E}[\varphi(Y)] &= \int_{\P_d} \varphi(\Exp_p(\Log_p x - \Vect_p^{-1}(\mu)))f_{p; \mu, \Sigma}(x) \mathrm{d}\text{vol}(x) \\
        &= \int_{\P_d} \varphi( \Exp_p(\Log_p x - \Vect_p^{-1}(\mu))) \\
        &\hspace{2cm} \times \frac{1}{\sqrt{(2\pi)^d \det \Sigma}}\frac{\exp\left(-\frac{1}{2} (\Vect_p(\Log_{p}(x)) - \mu)^\top \Sigma^{-1}(\Vect_p(\Log_{p}(x)) - \mu)\right)}{|J_{p}(\Log_p(x))|}\mathrm{d}\text{vol}(x).
        \end{aligned}
        \end{equation*}

        Let us now define $\psi_p \colon x \mapsto \Exp_p(\Log_p x - \Vect_p^{-1}(\mu))$. $\psi_p$ is a $\mathcal{C}^1$-diffeomorphism between $\P_d$ and $\P_d$ and its inverse is $\psi_p^{-1} \colon y \mapsto \Exp_p(\Log_p x + \Vect_p^{-1}(\mu))$. By change of variables $y = \psi_p(x)$, one has:
        \begin{equation*}
        \begin{aligned}
        \mathbb{E}[\varphi(Y)] &= \int_{\P_d} \varphi( y) \frac{1}{\sqrt{(2\pi)^d \det \Sigma}}\frac{\exp\left(-\frac{1}{2} \Vect_p(\Log_{p}(y))^\top \Sigma^{-1}\Vect_p(\Log_{p}(y))\right)}{|J_{p}(\Log_p(y) + \Vect_p^{-1}(\mu))|}\frac{\mathrm{d}\text{vol}(y)}{|\det \mathrm{d}\psi_p(\psi^{-1}(y))|}.
        \end{aligned}
        \end{equation*}
        We need to compute the change of volume term $\det \mathrm{d}\psi_p(\psi^{-1}(y))$. For this, let us start by saying that $\mathrm{d}\psi_p(x) = \mathrm{d}\Exp_p(\Log_p x - \Vect_p^{-1}(\mu)) \circ \mathrm{d}\Log_p x$ therefore, $\det \mathrm{d}\psi_p(x) = \det \mathrm{d}\Exp_p(\Log_p x - \Vect_p^{-1}(\mu)) \det \mathrm{d}\Log_p x$. Now using the fact that $\mathrm{d} \Log_{p}(y) = \left(\mathrm{d} \Exp_{p} (\Log_{p}(y))\right)^{-1}$ and the definition of $J_p(u) = \det \mathrm{d} \Exp_p(u)$ (see \cref{theo:density}), we have 
        $$\det \mathrm{d} \psi_p(x) = J_p(\Log_px - \Vect_p^{-1}(\mu)) \frac{1}{J_p(\Log_p x)}$$ and thus, plugging $\psi_p^{-1}(y)$ into the equation:
        $$\det \mathrm{d} \psi_p(\psi_p^{-1}(y)) = J_p(\Log_p y) \frac{1}{J_p(\Log_p y + \Vect_p^{-1}(\mu))}.$$
        Therefore, 
        $$ \mathbb{E}[\varphi(Y)] = \int_{\P_d} \varphi( y) \frac{1}{\sqrt{(2\pi)^d \det \Sigma}}\frac{\exp\left(-\frac{1}{2} \Vect_p(\Log_{p}(y))^\top \Sigma^{-1}\Vect_p(\Log_{p}(y))\right)}{|J_{p}(\Log_p(y))|}\mathrm{d}\text{vol}(y).$$
        This shows us that $Y \sim \WG(p;0_{d(d+1)/2}, \Sigma)$. 
    \item For the third point, one can prove it similarly as the two previous one, having in mind that the vectorization $\Vect_p$ is an isometry (see \cref{prop:norm_vect}) therefore, neither $\Vect_p$ nor  $\Vect_p^{-1}$ implies any volume changes. 
    \end{enumerate}
\end{proof}

Therefore, \cref{prop:centered_reduced} is a direct corollary of the previous result: 
\begin{corollary}
    Let $X \sim \WG(I_d; 0_{\nicefrac{d(d+1)}{2}}, I_{\nicefrac{d(d+1)}{2}})$ and let $(p, \mu, \Sigma) \in \Theta$. Let us define
    $$\Psi \colon x \in \P_d \mapsto p^{1/2}\Exp_p\left(\Vect_p^{-1}\left(\Sigma^{1/2}\left(\Vect_p\circ \Log_p x + \mu\right)\right)\right)p^{1/2}.$$
    Then, $\Psi(X) \sim \WG(p; \mu, \Sigma).$
\end{corollary}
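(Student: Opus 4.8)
The plan is to read $\Psi$ as a composition of three bijections of $\P_d$, each one inverting a transformation of the preceding proposition, and then to chase the resulting chain of equalities of distributions. Starting from $X\sim\WG(I_d;0_{\nicefrac{d(d+1)}{2}},I_{\nicefrac{d(d+1)}{2}})$, I would apply, in this order: (i) the congruence $y\mapsto p^{1/2}yp^{1/2}$, which is the inverse of point~1 of the preceding proposition and sends $\WG(I_d;0,I)$ to $\WG(p;0,I)$; (ii) the tangent-space rescaling $y\mapsto\Exp_p\bigl(\Vect_p^{-1}(\Sigma^{1/2}\Vect_p(\Log_p y))\bigr)$, the inverse of point~3 (taken with $\mu=0$), which sends $\WG(p;0,I)$ to $\WG(p;0,\Sigma)$; and (iii) the tangent-space translation $y\mapsto\Exp_p\bigl(\Log_p y+\Vect_p^{-1}(\mu)\bigr)$, the inverse of point~2, which sends $\WG(p;0,\Sigma)$ to $\WG(p;\mu,\Sigma)$. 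Composing the three maps and collapsing the result using $\Log_p\circ\Exp_p=\mathrm{id}$, the linearity of $\Vect_p^{-1}$, and the conjugation identities $\Exp_q(u)=q^{1/2}\exp(q^{-1/2}uq^{-1/2})q^{1/2}$ from \cref{eq:exp_log_riem} together with $\Vect_q^{-1}(w)=q^{1/2}\Vect_{I_d}^{-1}(w)q^{1/2}$ from \cref{def:vectorization}, one recovers exactly the map $\Psi$ of the statement; the chain of distributional equalities then yields $\Psi(X)\sim\WG(p;\mu,\Sigma)$.

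Equivalently, and as a quick sanity check, one can argue directly from \cref{def:wrapped_gaussian}: write $X=\Exp_{I_d}(\Vect_{I_d}^{-1}(\mathbf{s}))$ with $\mathbf{s}\sim\N(0,I)$; since $\P_d$ is a Hadamard manifold, $\Exp_{I_d}$ is a global diffeomorphism, so $\mathbf{s}=\Vect_{I_d}(\Log_{I_d}X)$ is recovered from $X$, and the standard Gaussian reparametrisation gives $\Sigma^{1/2}\mathbf{s}+\mu\sim\N(\mu,\Sigma)$. Substituting into the definition of $\Psi$ and using the same two conjugation identities to move the factors $p^{\pm1/2}$ past $\exp$ and $\Vect_{I_d}^{-1}$ shows that $\Psi(X)=\Exp_p\bigl(\Vect_p^{-1}(\Sigma^{1/2}\mathbf{s}+\mu)\bigr)$, which is $\WG(p;\mu,\Sigma)$ by \cref{def:wrapped_gaussian}.

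I expect the only real obstacle to be bookkeeping rather than analysis: one must track carefully whether each elementary map acts at $I_d$ or at $p$, must apply the rescaling \emph{before} the translation so that the mean is not itself multiplied by $\Sigma^{1/2}$, and must push the congruence factors $p^{\pm1/2}$ through $\exp$ and the vectorisation in the right order. All the measure-theoretic content — the Jacobian and volume-change accounting — has already been carried out in the proof of the preceding proposition, so no new change-of-variables computation is needed here; the corollary is obtained purely by composing the three statements and simplifying.
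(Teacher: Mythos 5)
Your overall strategy---compose the inverses of the three transformations of the preceding proposition, or equivalently reparametrise the standard Gaussian on the tangent space at $I_d$---is exactly the route the paper takes: the paper gives no computation for the corollary and simply declares it a direct consequence of the proposition. Your chain of distributional equalities is also correct as far as it goes: the composition of your maps (i), (ii), (iii) does send $\WG(I_d;0,I)$ to $\WG(p;\mu,\Sigma)$. The gap is in the step you dismiss as bookkeeping: the composed map does \emph{not} collapse to the $\Psi$ of the statement. Carrying out the simplification, using $\Vect_p\bigl(\Log_p(p^{1/2}xp^{1/2})\bigr)=\Vect_{I_d}(\log x)$ and the linearity of $\Vect_p^{-1}$, the composition (i)--(iii) equals
\begin{equation*}
x \longmapsto \Exp_p\Bigl(\Vect_p^{-1}\bigl(\Sigma^{1/2}\,\Vect_{I_d}(\Log_{I_d}x)+\mu\bigr)\Bigr),
\end{equation*}
which differs from the printed $\Psi$ in three ways: the inner vectorised logarithm is taken at $I_d$, not at $p$; the mean $\mu$ is added \emph{after} multiplying by $\Sigma^{1/2}$, not before (the printed $\Psi$ would shift the mean by $\Sigma^{1/2}\mu$); and there is no extra outer congruence by $p^{1/2}$, since $\Exp_p\circ\Vect_p^{-1}(w)=p^{1/2}\exp\bigl(\Vect_{I_d}^{-1}(w)\bigr)p^{1/2}$ already carries the base-point correction, so the additional $p^{1/2}(\cdot)p^{1/2}$ over-corrects to base point $p^2$. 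The same problem undoes your second argument: for $X$ wrapped at $I_d$ one has $\mathbf{s}=\Vect_{I_d}(\Log_{I_d}X)$, whereas the printed $\Psi$ feeds $\Vect_p(\Log_p X)=\Vect_{I_d}\bigl(\log(p^{-1/2}Xp^{-1/2})\bigr)$ into $\Sigma^{1/2}$, and this is a nonlinear, non-Gaussian function of $\mathbf{s}$.

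A one-dimensional check makes the mismatch concrete: for $d=1$, $X=e^{s}$ with $s\sim\N(0,1)$, the printed $\Psi$ evaluates to $p^{2-\sigma}e^{\sigma\mu}e^{\sigma s}$, whereas $\WG(p;\mu,\sigma^2)$ is the law of $p\,e^{\mu}e^{\sigma s}$; these coincide only for special values of $(\sigma,\mu,p)$ (e.g.\ $\sigma=1$). So either the printed formula contains typographical errors---the map your composition actually produces, displayed above, is presumably the intended one---or the identification step of your proof is wrong; in either case the argument as written does not establish the statement as printed. The fix is to perform the collapse explicitly rather than assert it: that computation is short, and it is precisely where the discrepancy surfaces.
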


\section{The wrapped Central Limit Theorem}
\label{appendix:CLT}
In this section, we give a proof of the wrapped Central Limit Theorem stated at \cref{theo:wrapped_CLT}. For this, let $(X_i)_{i \in \mathbb{N}^*}$ be a sequence of independent and identically distributed random variables on the Riemannian manifold $\P_d$. We suppose that the sequence $(\Vect_{I_d}(\Log_{I_d}(X_i)))_{i \in \mathbb{N}^*}$ of random variables on $\R^{\nicefrac{d(d+1)}{2}}$ satisfies the classical Central Limit Theorem. We want to show that the sequence $(X_i)_{i \in \mathbb{N}^*}$ satisfies the wrapped Central Limit Theorem.

\begin{remark}
    Let us start by saying that, as the map $x \mapsto \Vect_{I_d} \circ \Log_{I_d} x$ is a diffeomorphism between $\P_d$ and $\R^{\nicefrac{d(d+1)}{2}}$, the sequence $(\Vect_{I_d}(\Log_{I_d}(X_i)))_{i \in \mathbb{N}^*}$ is also independent and identically distributed. 
\end{remark}

As the sequence $(\Vect_{I_d}(\Log_{I_d}(X_i)))_{i \in \mathbb{N}^*}$ satisfies the classical Central Limit Theorem in $\R^{\nicefrac{d(d+1)}{2}}$, one has that,
$$\frac{1}{\sqrt{n}} \sum_{i=1}^{n} \left( \Vect_{I_d}(\Log_{I_d}(X_i)) - \mu\right) \xrightarrow[n \to \infty]{d} \N(0, \Sigma).$$
By defining $m = \Exp_{I_d}(\Vect_{I_d}^{-1}(\mu)) \in \P_d$, and using the linearity of $\Vect_p$:
$$\Vect_{I_d} \left(\frac{1}{\sqrt{n}} \sum_{i=1}^{n} (\Log_{I_d}(X_i) - \Log_{I_d}(m))\right) \xrightarrow[n \to \infty]{d} \N(0, \Sigma).$$
Thus, by applying the continuous map $\Exp_{I_d} \circ \Vect_{I_d}^{-1}$ to the previous equation, by considering the fact that the convergence in distribution is stable by continuous maps (theorem 5.5 of \citealt{wassermanAllStatisticsConcise2004}) and the definition of the wrapped Gaussian (\cref{def:wrapped_gaussian}), one has that 
$$\Exp_{I_n} \left(\frac{1}{\sqrt{n}} \sum_{i=1}^{n} (\Log_{I_d}(X_i) - \Log_{I_d}(m))\right) \xrightarrow[n \to \infty]{d} \WG(I_d; 0, \Sigma).$$

We can now simplify the left-hand side of the previous equation:
\begin{equation*}
    \begin{aligned}
        \Exp_{I_n} \left(\frac{1}{\sqrt{n}} \sum_{i=1}^{n} (\Log_{I_d}(X_i) - \Log_{I_d}(m))\right) 
        &= \exp \left(\frac{1}{\sqrt{n}} \sum_{i=1}^{n} (\log X_i - \log m)\right) 
        \quad \text{\scriptsize using the expression of $\Exp_{I_d}$ and $\Log_{I_d}$ of \cref{eq:exp_log_riem}} \\
        &= \exp \left(\sum_{i=1}^{n} (\log X_i + \log m^{-1})\right)^{\frac{1}{\sqrt{n}}} 
        \quad \parbox[t]{.4\textwidth}{\scriptsize using $\log m = - \log m^{-1}$\\ and $\exp(\alpha x) = \exp(x)^\alpha$ for $\alpha \in \R$} \\
        &= \exp \left(\sum_{i=1}^{n} \log( X_i \odot m^{-1})\right)^{\frac{1}{\sqrt{n}}} 
        \quad \text{\scriptsize using the definition of the logarithmic product} \\
        &= \left(\bigodot_{i=1}^n X_i \odot m^{-1}\right)^{\frac{1}{\sqrt{n}}}.
    \end{aligned}
\end{equation*}

Therefore, one has the final result of the wrapped Central Limit Theorem:
$$\left(\bigodot_{i=1}^n X_i \odot m^{-1}\right)^{\frac{1}{\sqrt{n}}} \xrightarrow[n \to \infty]{d} \WG(I_d; 0, \Sigma).$$

\paragraph{The generalized wrapped CLT}

The previous version of the wrapped Central Limit Theorem is centered around the identity matrix $I_d$. However, one can generalize this theorem to any point $p \in \P_d$. For this, we need to introduce a generalized logarithmic product $\odot_p$ between two points $q_1, q_2 \in \P_d$:
$$q_1 \odot_p q_2 = \Exp_p(\Log_p q_1 + \Log_p q_2).$$

In the same way as before, one can show the following generalized wrapped CLT
$$\left(\!\bigodotp_{i=1}^n X_i \odot_p M^{-1}\right)^{\frac{1}{\sqrt{n}}} \xrightarrow[n \to \infty]{d} \WG(p; 0, \Sigma).$$

\section{An extension to wrapped Elliptically Contoured Distributions}
\label{appendix:wrapped_EC}
As one can see from the expression of the density of a wrapped Gaussian $\WG(p; \mu, \Sigma)$ given at \cref{theo:density}, it is intrinsically linked to the density of the multivariate Gaussian $\mathcal{N}(\mu, \Sigma)$. This suggests a possible extension to \emph{Elliptically Contoured Distributions} (chapter 6 of \citet{johnson1987multivariate} or \citealt{delmas2024elliptically}). We recall the definition of an Elliptically Contoured distribution:
\begin{definition}[Elliptically Contoured Distribution]
    A random vector $X \in \R^d$ follows an Elliptically Contoured distribution if there exists $\mu \in \R^d$, $\Sigma \in \P_d$ and a function $g$ such that $X$ has density 
    $$f_X(x) = k \det(\Sigma)^{-1/2}g\left((x-\mu)^\top \Sigma^{-1}(x - \mu)\right)$$
    where $k$ is a normalizing factor. We denote $X \sim \EC(\mu,\Sigma,g)$.
\end{definition}
For example, the multivariate Gaussian $\N(\mu, \Sigma)$ is an Elliptically Contoured distribution with $g \colon t \mapsto \exp(-t/2)$. Another example is the multivariate t-distributions for which $g \colon t \mapsto \left(1+t/\nu\right)^{\nicefrac{-d+\nu}{2}}$ (see V.B  of chapter 1 of \citealt{delmas2024elliptically}). One can then extend what has been done previously on the wrapped Gaussian to define \emph{Wrapped Elliptically Contoured Distributions} just like above:
\begin{definition}[Wrapped Elliptically Contoured]
    Let $p \in \P_d$, $\mu \in \R^{\nicefrac{d(d+1)}{2}}, \Sigma \in \P_{\nicefrac{d(d+1)}{2}}$ and $g$ be a function. Then, a random vector $X$ on $\P_d$ follows a Wrapped Elliptically Contoured denoted $\WEC(p;\mu,\Sigma,g)$ if
    \begin{equation*}
        X = \Exp_p(\Vect_p^{-1}(\mathbf{t})),~  \mathbf{t} \sim \EC(\mu, \Sigma, g).
    \end{equation*}
\end{definition}
One can then compute the density of $\WEC(p; \mu, \Sigma, g)$ similarly as in \cref{theo:density}. Moreover, all the work done on the equivalence relation for wrapped Gaussians stays valid for wrapped elliptically contoured distributions.

\section{The proofs on the equivalence relation}
\label{appendix:equivalence}
In this section, we want to give proofs of the different results of \cref{sec:equiv_gaussian}. We recall the propositions and give their proofs. 
Let us start by \cref{prop:equivalence}. 

\begin{proposition}
    Let $(p, \mu, \Sigma) \in \Theta$ and $t \in \R$. One has that $\text{WG}(p; \mu, \Sigma)$ and $\text{WG}(e^{t}p;\mu - t\nu, \Sigma)$ are equal where $\nu = \Vect_p(p) = (1,\cdots,1,0,\cdots,0) \in \R^{\nicefrac{d(d+1)}{2}}$.
\end{proposition}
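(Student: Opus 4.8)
The plan is to reduce the claimed equality of distributions to a single deterministic identity between the wrapping maps, and then check that identity directly from the closed-form expressions of \cref{eq:exp_log_riem} and \cref{def:vectorization}.

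First I recall from \cref{eq:push-forward} that $\WG(p;\mu,\Sigma) = (\Exp_p \circ \Vect_p^{-1})\#\,\N(\mu,\Sigma)$, and I observe that if $\tau\colon\mathbf{x}\mapsto\mathbf{x}-t\nu$ denotes translation by $-t\nu$ on $\R^{\nicefrac{d(d+1)}{2}}$, then $\N(\mu-t\nu,\Sigma)=\tau\#\,\N(\mu,\Sigma)$. By the composition rule for push-forwards this gives $\WG(e^t p;\mu-t\nu,\Sigma)=(\Exp_{e^t p}\circ\Vect_{e^t p}^{-1}\circ\tau)\#\,\N(\mu,\Sigma)$, so it is enough to prove the pointwise equality of maps
\[
  \Exp_p\!\bigl(\Vect_p^{-1}(v)\bigr)=\Exp_{e^t p}\!\bigl(\Vect_{e^t p}^{-1}(v-t\nu)\bigr)\qquad\text{for all }v\in\R^{\nicefrac{d(d+1)}{2}}.
\]

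To check this, I unfold both sides. Since $\Vect_p^{-1}(v)=p^{1/2}\Vect_{I_d}^{-1}(v)\,p^{1/2}$, the formula for $\Exp_p$ in \cref{eq:exp_log_riem} makes the conjugations by $p^{\pm1/2}$ cancel inside the matrix exponential, so the left-hand side equals $p^{1/2}\exp\!\bigl(\Vect_{I_d}^{-1}(v)\bigr)p^{1/2}$, with $\exp$ the matrix exponential. For the right-hand side I use $(e^t p)^{1/2}=e^{t/2}p^{1/2}$ and the key fact that $\nu=\Vect_p(p)=\Vect_{I_d}(p^{-1/2}p\,p^{-1/2})=\Vect_{I_d}(I_d)$, hence $\Vect_{I_d}^{-1}(\nu)=I_d$ and, by linearity, $\Vect_{I_d}^{-1}(v-t\nu)=\Vect_{I_d}^{-1}(v)-tI_d$. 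The same unfolding then gives $e^{t}\,p^{1/2}\exp\!\bigl(\Vect_{I_d}^{-1}(v)-tI_d\bigr)p^{1/2}$ for the right-hand side. Finally, because $\Vect_{I_d}^{-1}(v)$ commutes with the scalar matrix $tI_d$, we have $\exp\!\bigl(\Vect_{I_d}^{-1}(v)-tI_d\bigr)=e^{-t}\exp\!\bigl(\Vect_{I_d}^{-1}(v)\bigr)$, the factors $e^t$ and $e^{-t}$ cancel, and both sides equal $p^{1/2}\exp\!\bigl(\Vect_{I_d}^{-1}(v)\bigr)p^{1/2}$; this proves the pointwise identity and hence $\WG(p;\mu,\Sigma)=\WG(e^t p;\mu-t\nu,\Sigma)$.

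I do not expect a genuine obstacle in this argument: it rests entirely on the commutativity of scalar matrices with the matrix exponential together with the fact that the vectorizations at $p$ and at $e^t p$ both collapse to $\Vect_{I_d}$ after conjugating by the respective square roots. The only step that needs care is bookkeeping this collapse and the identification $\Vect_{I_d}^{-1}(\nu)=I_d$, since it is precisely this identification that produces the scalar factors $e^{\pm t}$ that subsequently cancel.
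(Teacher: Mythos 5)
Your proof is correct, and it takes a genuinely different route from the paper's. The paper argues at the level of densities: it computes $\Log_{e^tp}(x) = -te^tp + e^t\Log_p(x)$, deduces $\Vect_{e^tp}(\Log_{e^tp}(x)) = \Vect_p(\Log_p(x)) - t\nu$ so that the Gaussian numerators of the two densities in \cref{eq:density} coincide, and then separately verifies that the Jacobian denominators coincide by invoking the explicit eigenvalue formula of \cref{prop:jac_exp} and observing that adding $-tI_d$ leaves all eigenvalue differences unchanged. You instead work directly with the push-forward characterization \cref{eq:push-forward} and establish the single pointwise identity $\Exp_p\circ\Vect_p^{-1} = \Exp_{e^tp}\circ\Vect_{e^tp}^{-1}\circ\tau$ of the wrapping maps, from which equality of the push-forwards of $\N(\mu,\Sigma)$ is immediate. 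Your computation checks out: $\Vect_p^{-1}(v)=p^{1/2}\Vect_{I_d}^{-1}(v)p^{1/2}$ collapses both sides to conjugates of a matrix exponential, $(e^tp)^{1/2}=e^{t/2}p^{1/2}$, $\Vect_{I_d}^{-1}(\nu)=I_d$, and the scalar matrix $tI_d$ commutes with everything, so $e^t$ and $e^{-t}$ cancel. What your approach buys is economy and robustness: it never touches the density, so it needs neither the existence of $f_{p;\mu,\Sigma}$ nor the closed form of $J_{I_d}$, and the Jacobian invariance used by the paper falls out as a corollary (by the chain rule applied to your map identity) rather than requiring a separate eigenvalue argument. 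What the paper's route buys is an explicit record of how the sufficient statistics $\Vect_{e^tp}(\Log_{e^tp}(x))$ and the volume-change term transform, which is reused elsewhere (e.g.\ in the discussion of the minimal representative and the MLE). The only point deserving a remark is the coordinate ordering: with the interleaved ordering of \cref{def:vectorization}, $\Vect_{I_d}(I_d)$ places its $d$ ones at the diagonal positions rather than literally first, but this is the same harmless relabeling the paper itself makes when writing $\nu=(1,\cdots,1,0,\cdots,0)$, and it does not affect the identity $\Vect_{I_d}^{-1}(\nu)=I_d$ on which your argument rests.
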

\begin{proof}
    In this following, we denote by $\gamma$ the function $\gamma \colon t \mapsto e^t p$.
    Let us denote by $\tilde{f}$ the density of $\WG(\gamma(t); \mu - t\nu, \Sigma)$ and by $f$ the density of $\WG(p; \mu, \Sigma)$. We want to show that $\tilde{f} = f$. Let $x \in \P_d$, by \cref{theo:density}, one has:
    $$\tilde{f}(x) = \frac{1}{\sqrt{(2\pi)^d \det \Sigma}} \frac{\exp\left(-\frac{1}{2}\left(\Vect_{\gamma(t)}(\Log_{\gamma(t)}(x)) - \mu + t\nu \right)^\top \Sigma^{-1}\left(\Vect_{\gamma(t)}(\Log_{\gamma(t)}(x)) - \mu + t\nu\right)\right)}{|J_{\gamma(t)}(\Log_{\gamma(t)}(x))|}.$$
    One has, that 
    \begin{equation*}
        \begin{aligned}
            \Log_{\gamma(t)}(x) &= \gamma(t)^{1/2}\log(\gamma(t)^{-1/2}x\gamma(t)^{-1/2})\gamma(t)^{1/2} \\
            &= e^{ t}p^{1/2}\log(e^{- t}p^{-1/2}x p^{-1/2})p^{1/2} \quad { \text{ using that } \gamma(t) = e^{ t}}\\
            &= e^{ t}p^{1/2}\log(e^{- t}I_d)p^{1/2} + e^{ t}\Log_p(x) \quad { \text{ using that } e^{- t}I_d \text{ and } p^{-1/2}x p^{-1/2} \text{commute}} \\
            &= - t e^{ t}p +  e^{ t}\Log_p(x).
        \end{aligned}
    \end{equation*}
    Furthermore, one has:
    \begin{equation*}
        \begin{aligned}
            \Vect_{\gamma(t)}(\Log_{\gamma(t)}(x)) &= - t e^{ t} \Vect_{\gamma(t)}( p) +  e^{ t}\Vect_{\gamma(t)}(\Log_p(x))  \quad { \text{ using the linearity of}  \Vect_{\gamma(t)}} \\
            &= - t e^{ t} \Vect_{I_d}(\gamma(t)^{-1/2} p\gamma(t)^{-1/2}) + e^{ t}\Vect_{I_d}(\gamma(t)^{-1/2}\Log_p(x)\gamma(t)^{-1/2}) \\
            &= -  t e^{ t} e^{- t} \Vect_{I_d}(p^{-1/2}  p p^{-1/2}) +  e^{ t} e^{- t}\Vect_{I_d}(p^{-1/2}\Log_p(x)p^{-1/2}) \\
            &= - t \Vect_p( p) + \Vect_p(\Log_p(x)).
        \end{aligned}
    \end{equation*}

    Therefore, the numerator of the density $\tilde{f}$ can be rewritten as:
    $$\exp\left(-\frac{1}{2}(\Vect_p(\Log_p(x)) - \mu)^\top \Sigma^{-1}(\Vect_p(\Log_p(x)) - \mu)\right).$$
    which is the same numerator as $f$.

    Let us now focus on the denominator. One has:
    \begin{equation*}
        \begin{aligned}
            J_{\gamma(t)}(\Log_{\gamma(t)}(x)) &= J_{I_d}(\gamma(t)^{-1/2}\Log_{\gamma(t)}(x)\gamma(t)^{-1/2}) \\
            &= J_{I_d}(e^{- t}p^{-1/2}\Log_{\gamma(t)}(x)p^{-1/2}) \\
            &=J_{I_d}(- t I_d + p^{-1/2}\Log_{p}(x)p^{-1/2}) \quad { \text{ using the computation of } \Log_{\gamma(t)}(x)}. \\
        \end{aligned}
    \end{equation*}

    We recall that the Jacobian determinant of the exponential map at the identity is: 
    $$J_{I_d}(u) = 2^d \prod_{i < j} \frac{\sinh\left(\frac{\lambda_i(u) - \lambda_j(u)}{2}\right)}{\lambda_i(u) - \lambda_j(u)}$$ 
    where $(\lambda_i(u))_i$ are the eigenvalues of $u$. Moreover, the eigenvalues of $u := -\alpha t I_d + p^{-1/2}\Log_{p}(x)p^{-1/2}$ are 
    $$\lambda_i(u) = -\alpha t + \lambda_i\left(p^{-1/2}\Log_{p}(x)p^{-1/2}\right).$$
    Thus, for all $i < j$, one has:
    $$\lambda_i(u) - \lambda_j(u) = \lambda_i\left(p^{-1/2}\Log_{p}(x)p^{-1/2}\right) - \lambda_j\left(p^{-1/2}\Log_{p}(x)p^{-1/2}\right)$$
    and therefore, this leads to:
    $$J_{I_d}(u) = J_{I_d}\left(p^{-1/2}\Log_{p}(x)p^{-1/2}\right) = J_p\left(\Log_p(x)\right).$$
    So the denominator of the density $\tilde{f}$ is the same as the denominator of the density $f$ and therefore, the two densities are equal.
\end{proof}

\begin{remark}
    The function $\gamma \colon t \mapsto e^t p$ is actually the geodesic with initial point $p$ and initial velocity $p$.
    Indeed, the expression of the geodesic $\Gamma_{q,V}(t)$ with initial point $q$ and initial velocity $V \in T_q \P_d$ is (see \citealt{pennecManifoldvaluedImageProcessing2020}):
    $$\forall t \in \R, ~ \Gamma_{q,V}(t) = q^{1/2}\exp(tq^{-1/2}Vq^{-1/2})q^{1/2}.$$
    Therefore, the geodesic $\gamma$ with initial point $p$ and initial velocity $p$ (which is a symmetric matrix, therefore an element of $T_p \P_d \simeq \S_d$) is:
    $$\forall t \in \R, ~ \Gamma_{p,p}(t) = e^{t}p = \gamma(t).$$
\end{remark}

We now want to show \cref{prop:param_to_min} that we recall underneath:
\begin{proposition}
    Let $\theta = (p; \mu, \Sigma) \in \Theta$ be a tuple of parameters. Then, the minimal representative of the class $[\theta]$  as defined at \cref{def:representative} is $\theta^\text{min} = (p^\text{min}; \mu^\text{min}, \Sigma^\text{min})$ where 
    \begin{equation*}
        \left\{
            \begin{aligned}
                p^{\text{min}} &= e^{\frac{1}{d}\sum_{i=1}^d \mu_i}p, \\ 
                \mu^{\text{min}} &= \mu -  \frac{1}{d}\sum_{i=1}^d \mu_i \nu,\\
                \Sigma^\text{min} &= \Sigma.
            \end{aligned}
        \right.    
    \end{equation*}
\end{proposition}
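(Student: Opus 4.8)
The plan is to reduce the statement to a one-dimensional least-squares problem along the geodesic identified in \cref{prop:equivalence}. By that proposition and the corollary following it, for every $t\in\R$ the tuple $(e^{t}p,\ \mu - t\nu,\ \Sigma)$ belongs to the class $[\theta]$, and conversely these tuples exhaust $[\theta]$. Hence, finding the minimal representative in the sense of \cref{def:representative} amounts to minimizing the function $\phi(t) = \|\mu - t\nu\|_2^2$ over $t\in\R$ and reading off the corresponding parameters.

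Next I would expand the quadratic $\phi$. Since $\nu = (1,\dots,1,0,\dots,0)$ has exactly $d$ coordinates equal to $1$, we have $\|\nu\|_2^2 = d$ and $\langle\mu,\nu\rangle_2 = \sum_{i=1}^{d}\mu_i$ (the sum over the $d$ coordinates where $\nu$ has a $1$), so
\[
\phi(t) = \|\mu - t\nu\|_2^2 = \|\mu\|_2^2 - 2t\sum_{i=1}^{d}\mu_i + d\,t^2 .
\]
This is a strictly convex quadratic in $t$ (leading coefficient $d>0$); differentiating and setting the derivative to zero yields the unique minimizer $t^\star = \tfrac{1}{d}\sum_{i=1}^{d}\mu_i$, which is just the orthogonal projection coefficient of $\mu$ onto $\nu$.

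Finally I would substitute $t = t^\star$ into the family $(e^{t}p,\ \mu - t\nu,\ \Sigma)$, obtaining $p^{\text{min}} = e^{\frac{1}{d}\sum_{i}\mu_i}\,p$, $\mu^{\text{min}} = \mu - \tfrac{1}{d}\sum_{i}\mu_i\,\nu$ and $\Sigma^{\text{min}} = \Sigma$, which is the claimed tuple. Well-definedness and minimality follow because $\phi$ has a unique minimizer and because the parametrization $t\mapsto(e^{t}p,\ \mu - t\nu,\ \Sigma)$ is injective (the factor $e^{t}$ alone recovers $t$), so no other tuple in $[\theta]$ attains a smaller $\|\cdot\|_2$. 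The only genuinely delicate ingredient is the very first step — that the tuples $(e^{t}p,\mu-t\nu,\Sigma)$ really do form the entire equivalence class rather than merely a subfamily — which is where one must invoke (or establish) the converse direction of \cref{prop:equivalence}; once that is granted, everything else is the elementary projection computation sketched above.
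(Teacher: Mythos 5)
Your proposal is correct and follows essentially the same route as the paper: parametrize the candidate tuples as $(e^{t}p,\ \mu - t\nu,\ \Sigma)$, minimize the quadratic $t\mapsto\|\mu - t\nu\|_2^2$ using $\|\nu\|_2^2 = d$ and $\langle\mu,\nu\rangle = \sum_{i=1}^d\mu_i$, and substitute the minimizer $t^\star = \frac{1}{d}\sum_{i=1}^d\mu_i$. The caveat you rightly flag --- that one must know these tuples exhaust the whole class $[\theta]$, i.e.\ the converse of \cref{prop:equivalence} --- is likewise left implicit in the paper's own proof, so your treatment is, if anything, slightly more careful.
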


\begin{proof}
    We want to find the smallest $\mu^{\text{min}}$ in the sens of $\| \cdot \|_2$ and the corresponding $p^{\text{min}}$ such that $(p; \mu, \Sigma) \cong (p^{\text{min}}; \mu^{\text{min}}, \Sigma^{\text{min}})$. As all the $\mu$ in the equivalence class of $[\theta]$ are of the form $\mu - t\nu$ for $t\in \R$, to find the smallest $\mu^{\min}$, one needs to minimize the following function:
    $$\varphi \colon t \mapsto \|\mu - t\nu\|^2_2 = \|\mu\|_2^2 - 2 t\langle \mu, \nu \rangle + t^2 \|\nu\|^2_2.$$ 
    One thus has: 
    $$\varphi'(t) = -2\langle \mu, \nu \rangle + 2t\|\nu\|^2_2.$$
    The minimum is reached at $t^{\text{min}}= \frac{\langle \mu, \nu \rangle}{\|\nu\|^2_2}$ with $\|\nu\|_2^2 = d$ and $\langle \mu, \nu \rangle = \sum_{i=1}^d \mu_i$. Therefore, one has:
    \begin{equation*}
        \left\{
            \begin{aligned}
                p^{\text{min}} &= e^{\frac{1}{d}\sum_{i=1}^d \mu_i}p \\ 
                \mu^{\text{min}} &= \mu - \frac{1}{d}\sum_{i=1}^d \mu_i\nu = \left(\mu_1 - \frac{1}{d}\sum_{i=1}^d \mu_i, \cdots, \mu_d - \frac{1}{d}\sum_{i=1}^d \mu_i, \mu_{d+1}, \cdots, \mu_{d(d+1)/2}\right).
            \end{aligned}
        \right.    
    \end{equation*}
\end{proof}

\section{Why does estimating $p$ using the Riemannian mean fails in the general case?}
\label{appendix:moment_estimator}

We said in \cref{sec:estimation} that when $\mu^\star \neq 0$, using the Riemannian mean $\mathfrak{G}(x_1,...,x_N)$ does not work to estimate the parameters $(p^\star, \mu^\star, \Sigma^\star)$. Let us explain why. For this, we suppose in the following that $\mu^\star \neq 0$. Let $\hat{p}_N$ be the Riemannian mean: $\hat{p}_N = \mathfrak{G}(x_1,...,x_N)$. Then, we can use \cref{prop:MLE_mu_sigma} to compute the MLE of $\mu$ and $\Sigma$:
\begin{equation*}
    \begin{aligned}
        \hat{\mu}_N &= \frac{1}{N}\sum_{i=1}^{N}\Vlog_{\hat{p}_N}(x_i), \\
        \hat{\Sigma}_N &= \frac{1}{N}\sum_{i=1}^{N}\left(\Vlog_{\hat{p}_N}(x_i) - \hat{\mu}_N\right)\left(\Vlog_{\hat{p}_N}(x_i) - \hat{\mu}_N\right)^\top.
    \end{aligned}
\end{equation*}
where we recall that $\Vlog_{\hat{p}_N}$ is the vectorization at $\hat{p}_N$ of $\Log_{\hat{p}_N}$ i.e. $\Vlog_{\hat{p}_N} = \Vect_{\hat{p}_N} \circ \Log_{\hat{p}_N}$. Let us focus on $\hat{\mu}_N$. Using the linearity of $\Vect_{\hat{p}_N}$, we can write that
$$ \hat{\mu}_N = \Vect_{\hat{p}_N}\left(\frac{1}{N}\sum_{i=1}^{N}\Log_{\hat{p}_N}(x_i)\right).$$
According to proposition 3.4 of \citet{moakherDifferentialGeometricApproach2005}, as $\hat{p}_N$ is the Riemannian mean of the points $(x_1,...,x_N)$, we have the following:
$$\sum_{i=1}^{N}\Log_{\hat{p}_N}(x_i) = 0.$$
Therefore, $\hat{\mu}_N = 0$. It is therefore not a good estimator of $\mu^\star \neq 0$. That is why we do not use the Riemannian mean as an estimator of $p$ in a general setting when we do not know \textit{a priori} that $\mu^\star = 0$. 

\section{More details on the MLE experiments}
\label{appendix:MLE_expe}

In this section, we give more details on the setup of the synthetic experiments lead in \cref{sec:estimation} to assess the  quality of the estimation of parameters of a wrapped Gaussian using an MLE. Upon acceptation, we will release the code used to perform these experiments.
To obtain the results plotted at \cref{fig:res_expe_MLE}, we repeated 5 times the experiment with different true parameters $\theta^\star = (p^\star, \mu^\star, \Sigma^\star)$ randomly generated. Here are details on how we generated the true parameters:
\begin{itemize}
    \item For $p^\star$, we use the function \texttt{generate\_random\_spd\_matrix} from the library PyRiemann \cite{pyriemann}. This function generates a random SPD matrix by generating a random matrix $A$ and then computing $\exp((\bar{X} + s*A)^\top (\bar{X} + s*A))$ where $\bar{X}$ and $s$ are parameters chosen by the user. We set $\bar{X} = 0.1 I_d$ and $s = 1$.
    \item For $\mu^\star$, we generate a random vector of size $\nicefrac{d(d+1)}{2}$ with values in $[0,0.1]$.
    \item For $\Sigma^\star$, we generate a random SPD matrix using the same function as for $p^\star$ with $\bar{X} = 0.01 I_{\nicefrac{d(d+1)}{2}}$ and $s = 0.02$.
\end{itemize}
We chose relatively small values for $\bar{X}$ and $s$ because otherwise, when the dimension $d$ is large, the generated parameters are very far from identity leading to numerical instability.

\section{Estimating the parameters of a wrapped Gaussian when the covariance matrix $\Sigma$ is diagonal}
\label{appendix:esimation_sigma_diag}
\begin{figure}[ht]
    \centering
    \includegraphics[width=\linewidth]{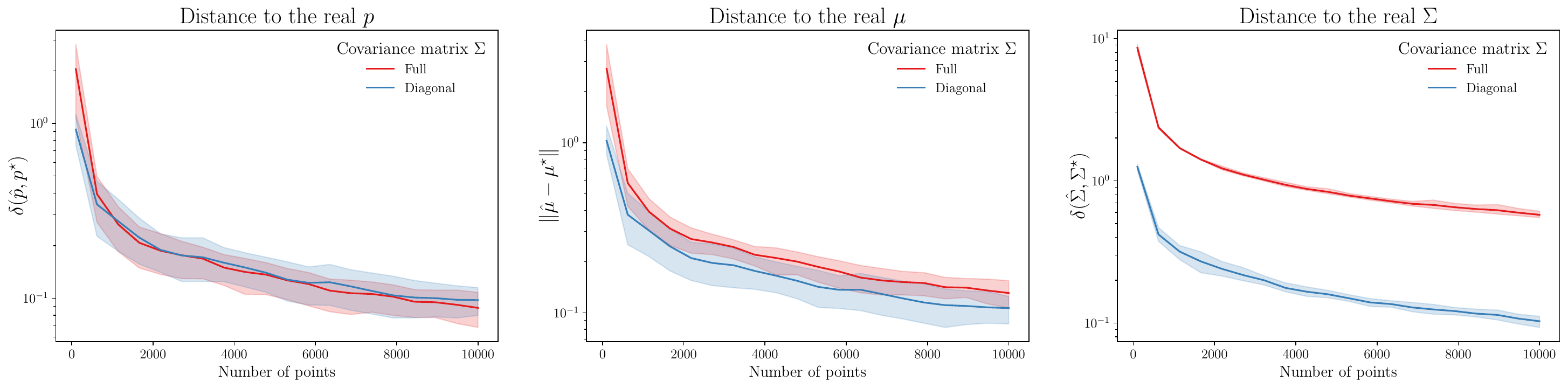} 
    \caption{Comparison of the estimation of the parameters of a wrapped Gaussian when the covariance matrix $\Sigma$ is diagonal or full. The dimension of the SPD matrices in this experiment is $d = 10$.} 
    \label{fig:estimation_diagonal_vs_full}
\end{figure}

In this section, we detail the experiments on the estimation of the parameters of a wrapped Gaussian when the covariance matrix $\Sigma$ is diagonal. We used the same setup as in the previous experiments detailed in \cref{appendix:MLE_expe} except that we generated $\Sigma^\star$ as a diagonal matrix. The diagonal was uniformly sampled in $[0,1]$. We repeated the experiment $5$ times with different true parameters. 
The goal for this experiment was to assess the impact of the structure of the covariance matrix on the estimation of the parameters. We wanted to show that when the covariance matrix is diagonal, the estimation of the parameters requires fewer samples. We plotted the results in the case of dimension $d=10$ at \cref{fig:estimation_diagonal_vs_full}. One can see that, while the estimation of $p$ and $\mu$ is not affected by the structure of $\Sigma$, the estimation of $\Sigma$ is better when $\Sigma$ is diagonal. With a lot less samples, one can achieve a significantly better estimation of $\Sigma$ when it is diagonal. This is coherent as the number of coefficients to estimate is reduced when $\Sigma$ is diagonal. Therefore, this setting of $\Sigma$ diagonal can be a good choice when the number of samples is limited. 




\section{More details on the real data experiments} 
\label{appendix:expe_real_data}
Let us start by giving more details on the different datasets used in the experiments and the preprocessing we used.
\begin{itemize}
    \item \textbf{BCI Datasets}: We considered 2 datasets from Brain Computer Interfaces (BCI) for our experiments: \emph{BNCI2014001} \cite{BNCI2014004} and \emph{Zhou2016} \cite{Zhou2016}. They consist of several subjects and several sessions per subjects doing a Motor Imagery task \cite{motor_imagery}. We used the library MOABB \cite{Aristimunha_Mother_of_all_2023} to load and preprocess the data.  For each EEG, We start by applying a standard band-pass filter with range $[7; 35]$ Hz. Then, we used the Ledoit-Wolf shrunk covariance matrix  \cite{ledoitWellconditionedEstimatorLargedimensional2004} to in order to compute the covariance matrices and to avoid ill-conditioned matrices. The experiment we lead was cross-subject: each classifier was trained on all subject except one and tested on this last subject. 
    \item \textbf{AirQuality}: This dataset is from the Beijing Municipal Monitoring Center. It is a dataset of air quality monitored from 34 different sites in Beijing, China \cite{huaCompetingPM2NO22021}.  For each site, six atmospheric pollutants where recorded every hour: CO, NO$_2$, O$_3$, PM$_{10}$, PM$_{2.5}$ and SO$_2$. We used the same preprocessing as in \citet{smithMultiSiteMultiPollutantAtmospheric2022} to get a point cloud of $102$ covariance matrices of size $6 \times 6$. Each covariance matrix has a label depending on which period it represents: weekdays, weekends or holidays. 
    \item \textbf{Indiana, Pavia Uni, Salinas}: These three dataset of hyperspectral remote sensing datasets are all publicly available at \url{https://www.ehu.eus/ccwintco/index.php/Hyperspectral_Remote_Sensing_Scenes}. Each dataset contains one hyperspectral image of a certain region containing a unique number of reflectance bands. We applied the same preprocessing that in \citet{collas2021} or \cite{bouchard2024random} that consists in four main steps. First, we normalize the data by subtracting the image global mean. Then, we apply a PCA to reduce the dimension of the data to $5$. A sliding window with no overlap is then used around each pixel for data sampling and then vectorized. In our experiments, we used a window of size $25 \times 25$. Finally, we compute the covariance matrix of each vectorized window using the Sample Covariance Matrix to get a point cloud of covariance matrices. For each covariance matrix, its class was computed by taking the majority class of the pixels in the window.
    \item \textbf{Textile}: This dataset is made of a set of real images from textile manufacturing that contain non-defective and defective woven textiles. These images come from the public MVTec Anomaly Detection dataset \cite{MVTec2021}. The same preprocessing as in \citet{smithDataAnalysisUsing2022} was applied to get the covariance matrices. The two classes correspond to defective and non-defective textiles. 
    \item \textbf{BreizhCrops}: This dataset is intended for supervised classification of field crops from satellite time series. We used the dataset \texttt{FRH01} that is composed of satellite time series from the French region \emph{Finistère}. As they are multivariate time-series, we simply converted them to covariance matrices using the  Oracle Approximating Shrinkage estimator \cite{Chen2010}. The classes correspond to different types of crops. More details on this dataset can be found in the original paper \cite{breizhcrops2020}.   
\end{itemize}
For the non-BCI datasets (AirQuality, Indiana, Pavia Uni, Salinas, Textile and BreizhCrops), we used a 5-fold cross-validation to evaluate the performance of the classifiers. 

Let us also give some details on the implementation of the different classifiers used in the experiments.
\begin{itemize}
    \item The MDM is implemented using the library PyRiemann \cite{pyriemann}.
    \item The TS-LDA uses the \texttt{TangentSpace} class from PyRiemann \cite{pyriemann} and the LDA from Scikit-learn \cite{scikit-learn}.
    \item The TS-QDA uses the \texttt{TangentSpace} class from PyRiemann \cite{pyriemann} and the Naive Bayes classifier from Scikit-learn \cite{scikit-learn}.
    \item For the Ho-WDA and He-WDA, we implemented them using our MLE to estimate the parameters of the wrapped Gaussians. To optimize the MLE, we used in practice the Riemannian Conjugate Gradient method \cite{boumalIntroductionOptimizationSmooth2023} with a maximum of $1,000,000$ iterations and a max time set to $2$ hours.
\end{itemize}

\end{document}